\newcommand{\comment}[1]
{\ifthenelse{\boolean{commentson}}
   {{\par\noindent\mbox{}{\small[ *** #1 ]\par}\noindent\par}}{}}
\newcommand{\VCSP}{\mbox{\rm VCSP}}
\newcommand{\MinHOM}{\mbox{\rm Min HOM}}
\newcommand{\wMCSP}{\mbox{\sc  Min CSP}}
\newcommand{\NP}{\mbox{\bf NP}}
\def\2mat#1#2#3#4#5#6#7#8{
\begin{array}{c|cc}
$~$ & #3 & #4\\
\hline
#1 & #5& #6\\
#2 & #7 & #8 \end{array}}
\renewcommand{\phi}{\varphi}
\renewcommand{\epsilon}{\varepsilon}
\renewcommand{\text}[1]{\mbox{\rm \,#1\,}}        % To write text in mathmode
\renewcommand{\emptyset}{\varnothing}  % The usual empty set,  not LaTeX std.
\newcommand{\ie}{{\em i.e.}}                % Self explaining
\newcommand{\cf}{{\em cf.}}
\newcommand{\eg}{{\em e.g.}}
\newcommand{\unprint}[1]{}
\newcommand{\Z}{\mbox{$\mathbb Z$}}
\newcommand{\Qnn}{\mbox{$\mathbb Q_{\geq 0}$}}
\renewcommand{\rho}{\varrho}
\newcommand{\Endo}[1]{\text{End}(\ensuremath{#1})}
\renewcommand{\emptyset}{\varnothing}  % The usual empty set,  not LaTeX std.
\newcommand{\meet}{\wedge}
\newcommand{\join}{\vee}
\newcommand{\etal}{{et al.\ }}
\newcommand{\optsol}{\mbox{\rm Optsol}}
\newcommand{\bfsym}[1]{\boldsymbol{#1}}
\newcommand{\consv}[1]{\overrightarrow{#1}}
\newcommand{\ST}[0]{V'}
\newcommand{\po}[0]{<}
\newcommand{\lub}[0]{\join}
\newcommand{\glb}[0]{\meet}
\begin{document}

%%
%% tikz
%%
\usetikzlibrary{calc}
\tikzstyle{vertex8}=[draw,circle,fill=black,text=white,minimum size=4pt,inner sep=0pt]
%\tikzstyle{emptyv}=[draw,minimum size=0pt,inner sep=0pt]
\tikzstyle{edge}=[]
\tikzstyle{edged}=[dotted]

\pagestyle{plain}

\title{Min CSP on Four Elements: Moving Beyond Submodularity}
\author{Peter Jonsson\inst{1}\thanks{Partially supported by the {\em Swedish
Research Council} (VR) under grant 621-2009-4431.} \and Fredrik Kuivinen\inst{2} \and Johan Thapper\inst{3}\thanks{Supported by the LIX-Qualcomm Postdoctoral Fellowship.}}
\institute{Department of Computer and Information Science,\\
Link\"{o}pings universitet,\\
SE-581 83 Link\"{o}ping, Sweden\\
\email{petej@ida.liu.se} \and 
\email{frekui@gmail.com} \and 
\'{E}cole polytechnique,
Laboratoire d'informatique (LIX),\\
91128 Palaiseau Cedex, France\\
\email{thapper@lix.polytechnique.fr}}

\maketitle

\begin{abstract}
We report new results on the complexity of the valued
constraint satisfaction problem (VCSP).
Under the unique games conjecture, 
the approximability
of finite-valued VCSP is fairly well-understood.
However, there is yet no characterisation of VCSPs that can be solved
exactly in polynomial time. This is unsatisfactory, since
such results are interesting from a
combinatorial optimisation perspective; there are deep connections
with, for instance, submodular and bisubmodular minimisation.
We consider the Min and Max CSP problems (\ie{} where the cost functions
only attain values in $\{0,1\}$) over four-element domains
and identify all tractable fragments.
Similar classifications were previously known for two- and three-element
domains.
 In the process, we introduce
a new class of tractable VCSPs based on a
generalisation of submodularity.
We also extend and modify a graph-based technique
by Kolmogorov and \v{Z}ivn{\'y} (originally introduced by Takhanov) 
for efficiently obtaining hardness results in our setting.
This allow us to prove the result without relying on computer-assisted
case analyses (which otherwise are fairly common when studying the
complexity and approximability of VCSPs.) The hardness results
are further simplified by the introduction of powerful reduction
techniques.
\medskip

\noindent
{\bf Keywords}: constraint satisfaction problems, combinatorial optimisation, computational complexity, submodularity
\end{abstract}

\section{Introduction}

This paper concerns the computational complexity of an optimisation problem with
strong connections to the \emph{constraint satisfaction problem} (CSP). 
An instance of the constraint satisfaction problem consists of a finite set of
variables, a set of values (the domain), and a finite set of constraints.
The goal is to determine whether there is an assignment of values to the variables
such that all the constraints are satisfied.
CSPs provide a general framework for modelling a variety of combinatorial decision
problems~\cite{Creignou:Constraints,Hell:Nesetril:CCC}.

Various optimisation variations of the constraint satisfaction framework have been
proposed and
many of them can be seen as special cases of the valued constraint satisfaction
problem (\VCSP), introduced by Schiex \etal\cite{Schiex:IJCAI95}.
This is an optimisation problem which is general enough to express such diverse
problems as {\sc Max CSP}, where the goal is to maximise the number of satisfied
constraints, and the minimum cost homomorphism problem (\MinHOM), where all
constraints must be satisfied, but each variable-value tuple in the assignment is
given an independent cost.
To accomplish this, instances of the \VCSP{} assign costs (possibly infinite) to
individual tuples of the constraints. It is then convenient to replace relations by
\emph{cost functions}, \ie{} functions from tuples of the domain to some set of
costs.
This set of costs can be relatively general, but much is captured by using $\Qnn \cup \{ \infty \}$, where \Qnn{} denotes the set of non-negative rational numbers.
We arrive at the following formal definition.

\begin{definition}
Let $D$ be a finite domain, and let $\Gamma$ be a set of functions $f_i : D^{k_i}
\rightarrow \Qnn \cup \{ \infty \}$.
By VCSP($\Gamma$) we denote the following minimisation problem:

\begin{description}
\item[Instance:] A set of variables $V$, and a sum $\sum_{i=1}^m \rho_i
f_i(\bfsym{x_i})$,
where $\rho_i \in \Qnn$, $f_i \in \Gamma$, and $\bfsym{x_i}$ is a list of $k_i$
variables from $V$.
\item[Solution:] A function $\sigma : V \rightarrow D$.
\item[Measure:]
%\[
$m(\sigma) = \sum_{i=1}^m \rho_i f_i(\sigma(\bfsym{x_i}))$,
%\]
where $\sigma(\bfsym{x_i})$ is the list of elements from $D$ obtained 
by applying $\sigma$ component-wise to $\bfsym{x_i}$.
\end{description}
\end{definition}

The set $\Gamma$ is often referred to as the {\em constraint language}.
We will use $\Gamma$ as our parameter throughout the paper. For instance,
when we say that a class of VCSPs $X$ is polynomial-time
solvable, then we mean that VCSP$(\Gamma)$ is polynomial-time
solvable for every $\Gamma \in X$.
Finite-valued functions, \ie{} functions with a range in \Qnn, are sometimes called
\emph{soft constraints}. A prominent example is given by
functions with a range in $\{0,1\}$; they can be used to express instances of
the well-known 
{\sc Min CSP} and {\sc Max CSP} problems (which, for instance, include 
{\sc Max} $k$-{\sc Cut}, {\sc Max} $k$-{\sc Sat},
and {\sc Nearest Codeword} as subproblems).
On the other side we have \emph{crisp constraints} which represent the standard type
of CSP constraints.
These can be expressed by cost functions taking values in $\{0,\infty\}$.

A systematic study of the computational complexity of the \VCSP{} was initiated
by Cohen \etal\cite{Cohen:etal:AI06}; for instance,
they prove a complexity dichotomy for \VCSP$(\Gamma)$ over 
two-element domains. This was the starting point for an 
intensive research effort leading to a 
large number of complexity results
for VCSP: examples include complete classifications of
conservative constraint languages (\ie{} languages containing all unary cost functions)~\cite{Deineko:etal:JACM2008,Zivny:1,Kolmogorov:1},
$\{0,1\}$ languages on three elements~\cite{Jonsson:etal:sicomp06},
languages containing a single $\{0,1\}$ cost function~\cite{Jonsson:Krokhin:JCSS07},
and arbitrary languages with $\{0,\infty\}$ cost
functions~\cite{Takhanov:STACS2010}. We note that some of these results have been
proved by
computer-assisted search---something that drastically reduces the
readability, and insight gained from the proofs.
We also note that there is no generally 
accepted conjecture stating which
VCSPs are polynomial-time solvable.

The picture is clearer when considering the approximability of finite-valued
VCSP. Raghavendra~\cite{Raghavendra:08} have presented algorithms for
approximating any finite-valued
VCSP. These algorithms achieve an optimal approximation ratio for the
constraint languages that cannot be solved to optimality in polynomial
time, given
that the unique games conjecture (UGC) is true. For the constraint
languages that can be solved to optimality,
one gets a PTAS from these algorithms. Furthermore, no
characterisation of the set of constraint languages that can be solved to
optimality
follows from Raghavendra's result. Thus, Raghavendra's result does
not imply the complexity results discussed above (not even conditionally under
the UGC).

%% The picture is clearer when considering the approximability
%% of finite-valued VCSP. 
%% Raghavendra~\cite{Raghavendra:08} have presented 
%% algorithms for approximating any finite-valued \VCSP. 
%% These algorithms achieve an optimal approximation ratio given that the
%% {\em unique games conjecture} (UGC) is true.
%% The algorithm for computing the approximation bound is based
%% on semidefinite programming. Since
%% a semidefinite program can be solved in polynomial-time
%% up to arbitrary precision (but not necessarily exactly), the
%% algorithm cannot be used
%% for identifying VCSPs that can be solved exactly in
%% polynomial time. Thus, Raghavendra's result does not
%% imply the complexity results discussed above (not even conditionally
%% under the UGC).

The goal of this paper is to 
prove a dichotomy result for VCSP with $\{0,1\}$ cost
functions over four-element
domains: we show that every such problem is either solvable in polynomial time
or NP-hard. 
Such a dichotomy result is not known for CSPs on four-element domains
(and, consequently, not for unrestricted VCSPs on four-element domains).
Our result proves that, in contrast to the
two-element, three-element, and conservative case, submodularity is not
the only source of tractability.
In order to outline the proof, let $\Gamma$ denote a
constraint language with $\{0,1\}$ cost functions over a four-element
domain $D$.
We will need two tractability results in our classification.
The first one is well-known: if every function in $\Gamma$
is submodular on a chain (\ie{} a total ordering
of $D$), then VCSP$(\Gamma)$ is solvable in
polynomial time. The second result is new and can be found in
Section~\ref{sec:tractable}: we introduce
{\em 1-defect chain multimorphisms} and prove that
if $\Gamma$ has such a multimorphism, then VCSP$(\Gamma)$
is tractable. A multimorphism is, loosely speaking,
a pair of functions such that
$\Gamma$ satisfies certain invariance properties under them.
The algorithm we present is based on a combination of submodular and 
bisubmodular minimisation~\cite{IwataFF01,McCormickF10,Schrijver00}.

The hardness part of the proof consists of four parts
(Sections \ref{sec:endo}--\ref{sec:full}).
We begin by introducing some tools in Section~\ref{sec:endo} and~\ref{sec:graph}.
Section~\ref{sec:endo} concerns the problem of adding (crisp) constant unary
relations to $\Gamma$ without changing the computational complexity of the
resulting problem.
The main tool for doing this is using the concept of {\em indicator problems}
introduced by Jeavons \etal\cite{Jeavons:Cohen:Gyssens:Constraints1999}
(see also Cohen \etal\cite{Cohen:Cooper:Jeavons:CP06}).
Section~\ref{sec:graph} introduces a graph construction for studying $\Gamma$. 
In principle, this graph
provides information about the complexity of VCSP$(\Gamma)$ based on the
two-element sublanguages of $\Gamma$. Similar 
graphs has been used repeatedly in the study of
VCSP, \cf{}~\cite{Bulatov:LICS03,Zivny:1,Takhanov:STACS2010}. 
Equipped with these tools, we determine the complexity of VCSP$(\Gamma)$
over a four-element domain in Section~\ref{sec:binary}.
%under the assumption that $\Gamma$ contains
%binary functions only in Section~\ref{sec:binary}.
The graph introduced in Section~\ref{sec:graph} allows us to prove that,
when $\Gamma$ is a \emph{core} (\cf{} Section~\ref{sec:endo}),
VCSP$(\Gamma)$ is polynomial-time solvable if and only if $\Gamma$ is submodular
on a chain or $\Gamma$ has a 1-defect chain multimorphism
(Theorem~\ref{thm:fourclass}).
Some proofs of intermediate results are deferred to Section~\ref{sec:full}.
%% We generalise this result to constraint language with arbitrary
%% arities in Section~\ref{sec:full}. The basic idea is to add `constants' (via the
%% techniques presented in Section~\ref{sec:endo}) and reduce the problem
%% to the binary case. The main difficulty lies in proving that
%% a $k$-ary function $f$ has a 1-defect chain multimorphism if and only
%% if every function obtained from $f$ by fixing $k-2$ variables has
%% a 1-defect chain multimorphism itself.

\section{Preliminaries}\label{sec:prel}

Throughout this paper, we will assume that $\Gamma$ is a finite set of
\{0,1\}-valued functions. 
%%However, the classification result holds also
%%for infinite languages; see the comment after Theorem~\ref{thm:fourclass}.
By \wMCSP$(\Gamma)$ we denote the problem \VCSP$(\Gamma)$.
%% \begin{definition}
%% Let $D$ be a finite set and let $\Gamma$ be a set of functions $f_i : D^{k_i} \rightarrow \{0,1\}$.
%% By \wMCSP($\Gamma$) we denote the following minimisation problem:
%% \begin{description}
%% \item[Instance:] A set of variables $V$, and a sum $S = \sum_{i=1}^m \rho_i f_i(\bfsym{x_i})$,
%% where $\rho_i \in \Qnn$, $f_i \in \Gamma$, and
%% $\bfsym{x_i}$ is a matching lists of variables from $V$.
%% \item[Solution:] A function $\sigma : V \rightarrow D$.
%% \item[Measure:] $m(\sigma) = \sum_{i=1}^m \rho_i f_i(\sigma(\bfsym{x_i}))$.
%% \end{description}
%% \end{definition}
It turns out to be convenient to introduce a generalisation of this problem
in which we allow additional constraints on the solutions. From a \VCSP{} perspective, this means that we allow \emph{crisp} as well as $\{0,1\}$-valued cost functions.
To make the distinction clear, and since we will not be using any \emph{mixed} cost functions, we represent the crisp constraints with relations instead of $\{0,\infty\}$-valued cost functions.

\begin{definition}
  Let $\Gamma$ be a set of $\{0,1\}$-valued functions on a domain $D$,
  and let $\Delta$ be a set of finitary relations on $D$.
  By \wMCSP($\Gamma, \Delta$) we denote the following minimisation problem:
  \begin{description}
  \item[Instance:] A \wMCSP$(\Gamma)$-instance ${\cal I}$, and a finite set of constraint applications $\{ (\bfsym{y_j}; R_j) \}$, %, and a rational number $K$.
where $R_j \in \Delta$ and $\bfsym{y_j}$ is a matching list of variables from $V$.
\item[Solution:] A solution $\sigma$ to ${\cal I}$ such that $\sigma(\bfsym{y_j}) \in R_j$ for all $j$.
\item[Measure:] The measure of $\sigma$ as a solution to ${\cal I}$.
\end{description}
\end{definition}

%% \begin{definition}
%% Let $D$ be a domain, let $\Gamma$ be a set of functions $f_i : D^{k_i} \rightarrow \{0,1\}$, and let $\Delta$ be a set of finitary relations on $D$.
%% By \wMCSP($\Gamma, \Delta$) we denote the following minimisation problem:

%% \begin{description}
%% \item[Instance:] A set of variables $V$, a sum $S = \sum_{i=1}^m \rho_i f_i(\bfsym{x_i})$, and a finite set of constraint applications $\{ (\bfsym{y_j}; R_j) \}$, %, and a rational number $K$.
%% where $\rho_i \in \Qnn$, $f_i \in \Gamma$, $R_j \in \Delta$, 
%% and $\bfsym{x_i}$, $\bfsym{y_j}$ are matching lists of variables from $V$.
%% \item[Solution:] A function $\sigma : V \rightarrow D$ such that $\sigma(\bfsym{y_j}) \in R_j$ for all $j$.
%% \item[Measure:] $m(\sigma) = \sum_{i=1}^m \rho_i f_i(\sigma(\bfsym{x_i}))$.
%% \end{description}
%% \end{definition}
 
%We will write \wMCSP$(f,\Delta)$ (and \wMCSP$(f)$) when the valued constraint language consists of a single function $f$, omitting the set brackets around $f$.
We will generally omit the parenthesis surrounding singletons in unary relations, as in the following definition: let ${\cal C}_D = \{ \{d\} \mid d \in D \}$ be the set of constant unary relations over $D$.

\subsection{Expressive power and weighted pp-definitions}

It is often possible to enrich a set of functions $\Gamma$ without changing
the computational complexity of \wMCSP. 
In this paper, we will make use two distinct, but related notions
aimed at this purpose.

\begin{definition}
  \label{def:wpp}
Let $\mathcal{I}$ be an instance of \wMCSP$(\Gamma, \Delta)$, and let $\bfsym{x} = (x_1,\dots,x_s)$ be a sequence of distinct variables from $V({\cal I})$.
Let 
\[
\pi_{\bfsym{x}}\optsol({\cal I}) = \{ (\sigma(x_1),\dots,\sigma(x_s)) \mid \text{$\sigma$ is an optimal solution to ${\cal I}$} \},
\]
\ie{} the projection of the set of optimal solutions onto $\bfsym{x}$.
We say that such a relation \emph{has a weighted pp-definition in $(\Gamma, \Delta)$}.
%When $\Delta = \emptyset$, $S({\cal I}) = f$, and $\bfsym{x} = V({\cal I})$, we will use the less cumbersome notation $\pi_{\bfsym{x}}\optsol({\cal I}) = \argmin f$.
  Let $\langle \Gamma, \Delta \rangle_w$ denote the set of relations which have a weighted pp-definition in $(\Gamma,\Delta)$.
\end{definition}

For an instance ${\cal J}$ of \wMCSP,
we define ${\sf Opt}({\cal J})$ to be the optimal value of
a solution to ${\cal J}$, and to be undefined if no solution exists.
The following definition is a variation of the concept of the
\emph{expressive power} of a valued constraint language,
see for example Cohen \etal\cite{Cohen:etal:AI06}.

\begin{definition}
  \label{def:express}
Let ${\cal I}$ be an instance of \wMCSP$(\Gamma, \Delta)$, and let
$\bfsym{x} = (x_1,\dots,x_k)$ be a sequence of distinct variables from $V({\cal I})$.
Define the function ${\cal I}_{\bfsym{x}} : D^k \rightarrow \Qnn$ by letting
${\cal I}_{\bfsym{x}}(a_1,\dots,a_k) = {\sf Opt}({\cal I} \cup \{ (x_i; \{a_i\}) \mid 1 \leq i \leq k \})$.
%The function ${\cal I}_{\bfsym{x}}$ is a total function iff ${\cal I} \cup \{ (x_i; \{a_i\}) \mid 1 \leq i \leq k \}$ has solutions for all $(a_1,\dots,a_k) \in D^k$.
We say that ${\cal I}_{\bfsym{x}}$ is \emph{expressible over} $(\Gamma,\Delta)$.
  Let $\langle \Gamma, \Delta \rangle_{fn}$ denote the set of \emph{total functions} expressible over $(\Gamma,\Delta)$.
\end{definition}

\begin{proposition}
\label{prop:1}
  Let $\Gamma' \subseteq \langle \Gamma, \Delta \rangle_{fn}$ and
  $\Delta' \subseteq \langle \Gamma, \Delta \rangle_w$
  be finite sets.
  Then,
  \wMCSP$(\Gamma', \Delta')$ is polynomial-time reducible to
  \wMCSP$(\Gamma, \Delta)$.
\end{proposition}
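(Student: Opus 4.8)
The plan is to take an arbitrary instance $\mathcal{J}$ of \wMCSP$(\Gamma',\Delta')$ and to build, in polynomial time, a single instance $\mathcal{J}'$ of \wMCSP$(\Gamma,\Delta)$ whose optimum (and whose optimal solutions, after projecting onto $V(\mathcal{J})$) determine those of $\mathcal{J}$. Since $\Gamma'$ and $\Delta'$ are finite, I would first fix, once and for all, a \emph{gadget} for each of their members: for every $f \in \Gamma'$ an instance $\mathcal{I}_f$ of \wMCSP$(\Gamma,\Delta)$ with distinguished variables $\bfsym{x}^f$ witnessing $f = (\mathcal{I}_f)_{\bfsym{x}^f}$ (Definition~\ref{def:express}), and for every $R \in \Delta'$ an instance $\mathcal{I}_R$ with distinguished variables $\bfsym{x}^R$ witnessing $R = \pi_{\bfsym{x}^R}\optsol(\mathcal{I}_R)$ (Definition~\ref{def:wpp}). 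The instance $\mathcal{J}'$ is then obtained by replacing each constraint application of $\mathcal{J}$ with a fresh copy of the corresponding gadget, renaming all non-distinguished variables apart and identifying the distinguished variables with the variables named in the application.

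For a cost function application $\rho_i f_i(\bfsym{x_i})$ I would splice in a fresh copy of $\mathcal{I}_{f_i}$ with every weight multiplied by $\rho_i$. Because $f_i$ is a \emph{total} function expressed by $\mathcal{I}_{f_i}$, for every assignment $\bfsym{a}$ to $\bfsym{x_i}$ the gadget can be completed on its private variables, and the cheapest completion contributes exactly $\rho_i\,{\sf Opt}(\mathcal{I}_{f_i}\cup\{\bfsym{x}^{f_i}=\bfsym{a}\}) = \rho_i f_i(\bfsym{a})$. Distinct copies share no variables other than those they are attached to, so these minimisations are independent and add up to $\sum_i \rho_i f_i(\sigma(\bfsym{x_i}))$ on the original variables. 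The relation applications are the delicate point, because a weighted pp-definition controls only the projection of \emph{optimal} solutions, so a gadget must be forced to its optimum rather than merely included. Writing $g_R(\bfsym{b}) = {\sf Opt}(\mathcal{I}_R\cup\{\bfsym{x}^R=\bfsym{b}\})$ and $o_R = {\sf Opt}(\mathcal{I}_R)$, the choice of $\mathcal{I}_R$ gives $\argmin_{\bfsym{b}} g_R(\bfsym{b}) = R$ with value $o_R$, while every $\bfsym{b}\notin R$ satisfies $g_R(\bfsym{b}) \ge o_R + \delta_R$ for a fixed $\delta_R>0$, or $g_R(\bfsym{b})=\infty$. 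I would therefore splice in, for each application $(\bfsym{y_j};R_j)$, a fresh copy of $\mathcal{I}_{R_j}$ with all weights multiplied by one large integer $M$. With $\delta = \min_{R\in\Delta'}\delta_R > 0$ and an a priori bound $C = F_{\max}\sum_i \rho_i$ on the soft part (where $F_{\max}=\max_{f\in\Gamma'}\max_{\bfsym{a}} f(\bfsym{a})$), any integer $M > C/\delta$ suffices, and such an $M$ has polynomially many bits.

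Correctness follows by setting $K = M\sum_j o_{R_j}$. Any solution $\sigma$ of $\mathcal{J}$ extends, by resolving every gadget optimally, to an assignment of $\mathcal{J}'$ of cost $m(\sigma)+K$, each relation gadget sitting at $o_{R_j}$ since $\sigma(\bfsym{y_j})\in R_j$. Conversely, for any assignment $\tau'$ of $\mathcal{J}'$ with restriction $\tau$ to $V(\mathcal{J})$, the cost of $\tau'$ is at least $\sum_i \rho_i f_i(\tau(\bfsym{x_i})) + M\sum_j g_{R_j}(\tau(\bfsym{y_j}))$. If some $\tau(\bfsym{y_j})\notin R_j$ this is at least $K + M\delta > K + C \ge {\sf Opt}(\mathcal{J}')$ whenever $\mathcal{J}$ is feasible, so no optimal $\tau'$ violates a relation; an optimal $\tau'$ also resolves every gadget optimally, so its restriction $\tau$ is a solution of $\mathcal{J}$ with $m(\tau) = m(\tau') - K$. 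Hence ${\sf Opt}(\mathcal{J}') = {\sf Opt}(\mathcal{J}) + K$, and projecting $\optsol(\mathcal{J}')$ onto $V(\mathcal{J})$ yields $\optsol(\mathcal{J})$. The same threshold detects infeasibility: $\mathcal{J}$ has no solution exactly when $\mathcal{J}'$ is infeasible or ${\sf Opt}(\mathcal{J}')\ge K+M\delta$. As $\mathcal{J}'$ is built in polynomial time, this is the required reduction.

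I expect the main obstacle to be precisely the relation handling: verifying that the single amplification factor $M$ can be kept polynomially bounded while still dominating every possible cost saved by violating a pp-defined relation, and checking that the gadgets compose correctly when several applications share boundary variables. The function case is routine once Definition~\ref{def:express} is unwound, since expressibility already delivers the value $f_i$ through the gadget's internal minimisation.
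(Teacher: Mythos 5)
Your proof is correct and follows essentially the same route as the paper's: both replace each relation application by a fresh copy of its weighted-pp-defining gadget scaled by a large factor $M$, chosen via the gadget's minimal optimality gap $\delta$ (with polynomially many bits) so that any optimal solution must resolve every relation gadget optimally and hence project into the relation, and both detect infeasibility by the same kind of cost threshold. The only differences are presentational: you handle the soft functions $\Gamma'$ by an explicit expressibility-gadget argument where the paper simply invokes Theorem~3.4 of~\cite{Cohen:etal:AI06}, and you eliminate all relations in one combined construction where the paper eliminates them one at a time.
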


\begin{proof}
  %% The expressive power $\langle \Gamma, \Delta, \rangle_{fn}$ given in
  %% Definition~\ref{def:express} is a subset of the expressive power
  %% as defined by Cohen \etal\cite{Cohen:etal:AI06}.
  %% The difference is that we have imposed an additional restriction
  %% of totality on the expressed functions,
  %% since we do not allow cost functions which simultaneously assume both finite and infinite values.
  %% The reduction from \wMCSP$(\Gamma',\Delta')$ to \wMCSP$(\Gamma,\Delta')$ therefore follows immediately from Theorem 3.4 in~\cite{Cohen:etal:AI06}
  The reduction from \wMCSP$(\Gamma',\Delta')$ to \wMCSP$(\Gamma,\Delta')$ is a special case of Theorem 3.4 in~\cite{Cohen:etal:AI06}.
  We allow weights as a part of our instances, but this makes no essential difference.

  For the remaining part,
  we will assume that $\Delta' \setminus \Delta$ contains a single relation $R = \pi_{\bfsym{x}} \optsol({\cal J})$.
  The case when $\Delta' \setminus \Delta = \{R_1, \dots, R_k\}$, for $k > 1$ can be handled by eliminating one relation at a time using the same argument.
  Let $\mathcal{I'}$ be an instance of \wMCSP$(\Gamma,\Delta')$.
  For each application $(\bfsym{u_i}; R)$, $i = 1, \dots, t$, we create a copy ${\cal J}_i$ of ${\cal J}$ in which the variables $\bfsym{x}$ have been replaced by $\bfsym{u_i}$.
  We now create an instance ${\cal I}$ of \wMCSP$(\Gamma,\Delta)$ as follows:
  let $V({\cal I}) = (\bigcup_{i=1}^t V({\cal J}_i)) \cup V({\cal I'})$, $S({\cal I}) = S({\cal I'}) + M \cdot \sum_{i=1}^t S({\cal J}_i)$, and let the set of constraint applications of ${\cal I}$ consist of all applications from ${\cal I'}$ apart from those involving the relation $R$, and all applications from ${\cal J}_i$, $i = 1, \dots, t$.
  We will choose $M$ large enough, so that if $\mathcal{I'}$ is
  satisfiable, then in any optimal solution $\sigma$ to $\mathcal{I}$, 
  the restriction of $\sigma$ to the set $V(\mathcal{J}_i)$
  is forced to be an optimal solution to the instance $\mathcal{J}_i$.
  It then follows that $\sigma(\bfsym{u}_i) \in R$,
  so we can recover an optimal solution to $\mathcal{I}'$ from $\sigma$.
  The value of $M$ is chosen as follows:
  if all solutions to $\mathcal{J}$ have the same measure, we can let $M = 0$.
  Otherwise,
  let $\delta > 0$ be the minimal difference in measure between a
  sub-optimal solution, and an optimal solution to $\mathcal{J}$.
  Assume that $S({\cal I'}) = \sum_{i=1}^m \rho_i f_i(\bfsym{x_i})$, and let
  $U = \sum_{i=1}^m \rho_i$.
  Note that if $\sigma$ is any solution to the instance obtained from ${\cal I'}$ by removing all constraint applications, then
  $m(\sigma) \leq U$.
  We can then let $M = (U+1) / \delta$;
  the representation size of $M$ is linearly bounded in the size of the instance ${\cal I'}$.
  It is easy to check that if
  ${\cal I}$ is unsatisfiable, or if
  ${\sf Opt}({\cal I}) > U + M \cdot t \cdot {\sf Opt}({\cal J})$, then
  ${\cal I'}$ is unsatisfiable.
  Otherwise ${\sf Opt}({\cal I'}) = {\sf Opt}({\cal I})-M \cdot t \cdot {\sf Opt}({\cal J})$.
  \qed
\end{proof}

\subsection{Multimorphisms and submodularity}

We now turn our attention to {\em multimorphisms} and tractable
minimisation problems.
Let $D$ be a finite set.
Let $f : D^k \rightarrow D$ be a function, and let $\bfsym{x}_1, \dots, \bfsym{x}_k \in D^n$, with components $\bfsym{x}_i = (x_{i1},\dots,x_{in})$.
Then, we let $f(\bfsym{x}_1,\dots,\bfsym{x}_k)$ denote the $n$-tuple $(f(x_{11},\dots,x_{k1}),\dots,f(x_{1n},\dots,x_{kn}))$.

A (binary) \emph{multimorphism} of $\Gamma$ is a pair of functions $f, g : D^2 \rightarrow D$ such that for any $h \in \Gamma$, and matching tuples $\bfsym{x}$ and $\bfsym{y}$,
\begin{equation}
h(f(\bfsym{x},\bfsym{y}))+h(g(\bfsym{x},\bfsym{y})) \leq h(\bfsym{x})+h(\bfsym{y}).
\end{equation}
The concept of multimorphisms was introduced by Cohen \etal\cite{Cohen:etal:AI06} as an extension of the notion of \emph{polymorphisms} to the analysis of the \VCSP{} problem.

\begin{definition}[Multimorphism Function Minimisation]
Let $X$ be a finite set of triples $(D_i; f_i, g_i)$, where $D_i$ is a finite set and $f_i, g_i$ are functions mapping $D_i^2$ to $D_i$. MFM$(X)$ is a minimisation
problem with

\begin{description}
\item[Instance:] A positive integer $n$, a function $j : [n] \to [|X|]$,
and a function $h : D \to \mathbb{Z}$ where $D = \prod_{i=1}^n D_{j(i)}$. Furthermore,
\begin{align}
h(\bfsym{x}) + h(\bfsym{y}) \geq
\ &h(f_{j(1)}(x_1, y_1), f_{j(2)}(x_2, y_2), \ldots, f_{j(n)}(x_n, y_n)) \ + \notag \\
  &h(g_{j(1)}(x_1, y_1), g_{j(2)}(x_2, y_2), \ldots, g_{j(n)}(x_n, y_n))   \notag
\end{align}
for all $\bfsym{x}, \bfsym{y} \in D$. The function $h$ is given to the algorithm as an oracle, \ie{}, for any $\bfsym{x} \in D$ we can query the oracle to
obtain $h(\bfsym{x})$ in unit time.
\item[Solution:] A tuple $\bfsym{x} \in D$.
\item[Measure:] The value of $h(\bfsym{x})$.
\end{description}
\end{definition}
For a finite set $X$ we say that MFM$(X)$ is \emph{oracle-tractable} if it can be solved in time $O(n^c)$ for some constant $c$.
It is not hard to see that if $(f,g)$ is a multimorphism of $\Gamma$, and
MFM$(D; f, g)$ is oracle-tractable,
then \wMCSP$(\Gamma)$ is tractable.
%we may first need to multiply by the least common denominator of the
%weights to obtain an integer-valued function.

We now give two examples of oracle-tractable problems.
A partial order on $D$ is called a \emph{lattice} if every pair of elements $a, b \in D$ has a greatest lower bound $a \meet b$ (meet) and a least upper bound $a \join b$ (join).
A \emph{chain} on $D$ is a lattice which is also a total order.

For $i = 1, \dots, n$, let $L_i$ be a lattice on $D_i$.
The \emph{product lattice} $L_1 \times \dots \times L_n$ is defined on the set $D_1 \times \dots \times D_n$ by extending the meet and join component-wise:
for $\bfsym{a} = (a_1,\dots,a_n)$ and $\bfsym{b} = (b_1,\dots,b_n)$,
let $\bfsym{a} \meet \bfsym{b} = (a_1 \meet b_1, \dots, a_n \meet b_n)$, and
let $\bfsym{a} \join \bfsym{b} = (a_1 \join b_1, \dots, a_n \join b_n)$.

A function $f : D^k \rightarrow \Z$ is called \emph{submodular} on the lattice $L = (D;\meet,\join)$ if
\[
f(\bfsym{a} \meet \bfsym{b})+f(\bfsym{a} \join \bfsym{b}) \leq f(\bfsym{a})+f(\bfsym{b})
\]
for all $\bfsym{a}, \bfsym{b} \in D^k$.
%A function $f$ is called \emph{supermodular} if the reverse inequality holds and \emph{modular} if it is both super- and submodular.
A set of functions $\Gamma$ is said to be submodular on $L$ if every function in $\Gamma$ is submodular on $L$.
This is equivalent to $(\meet, \join)$ being a multimorphism of $\Gamma$.
It follows from known algorithms for submodular function minimisation that MFM$(X)$ is oracle-tractable for any finite set $X$ of finite \emph{distributive lattices} (\eg{} chains)~\cite{IwataFF01,Schrijver00}.

The second example is strongly related to submodularity, but here we use a partial order that is not a lattice to define the multimorphism.
Let $D = \{0,1,2\}$, and define the functions $u, v : D^2 \rightarrow D$ by letting $u(x,y) = \min \{x,y\}$, $v(x,y) = \max \{x,y\}$ if $\{x,y\} \neq \{1,2\}$, and $u(x,y) = v(x,y) = 0$ otherwise.
We say that a function $h : D^k \rightarrow \Z$ is \emph{bisubmodular} if $h$ has the multimorphism $(u, v)$.
It is possible to minimise a $k$-ary bisubmodular function in time polynomial in $k$, provided that evaluating $h$ on a tuple is a primitive operation~\cite{McCormickF10}.
%It is possible to minimise an integer-valued $n$-ary bisubmodular function in time polynomial in $k$, provided that evaluating $h$ on a tuple is a primitive operation~\cite{McCormickF10}
%We can thus minimise $h$ by first multiplying by the denominator.

\section{A New Tractable Class}\label{sec:tractable}

In this section, we introduce a new multimorphism which ensures tractability
for \wMCSP{} (and more generally for \VCSP).

\begin{definition}
  Let $b$ and $c$ be two distinct elements in $D$.
  Let $(D;\po)$ be a partial order which relates all pairs of elements except for $b$ and $c$.
  %, and let $a$ and $d$ be two elements distinct from $b$ and $c$.
  Assume that $f, g : D^2 \rightarrow D$ are two commutative functions satisfying the following conditions:
  \begin{itemize}
  \item
    If $\{x,y\} \neq \{b,c\}$, then $f(x,y) = x \glb y$ and $g(x,y) = x \lub y$.
  \item
    If $\{x,y\} = \{b,c\}$, then $\{f(x,y),g(x,y)\} \cap \{x,y\} = \emptyset$, and $f(x,y) < g(x,y)$.

%  \item
%    If $f(x,y) > b, c$, then $g(x,y) > b, c$.
  \end{itemize}
  We call $(D;f,g)$ a \emph{1-defect chain} (over $(D;<)$), and say that $\{b,c\}$ is the \emph{defect} of $(D;f,g)$.
  If a function has the multimorphism $(f,g)$, then we also say that $(f,g)$ is a \emph{1-defect chain multimorphism}.
\end{definition}

Three types of 1-defect chains are shown in Fig.~\ref{fig:1}(a--c).
Note this is not an exhaustive list, \eg{} for $|D| > 4$,
there are 1-defect chains similar to Fig.~\ref{fig:1}(b), 
but with $f(b,c) < g(b,c) < b, c$.
When $|D| = 4$, type (b) is precisely the product lattice 
shown in Fig.~\ref{fig:1}(d).
We denote this lattice by $L_{ad}$

\begin{figure}[htbp]
\begin{center}
    \begin{tikzpicture}
      \useasboundingbox (-1.5,-1.75) rectangle (1.25,1.7);
      \node () at (0,-2) {(a)};
      \node[vertex8] (ua1) at (0,-1.2cm) [] {};
      \node[vertex8] (ua) at (0,-0.6cm) [] {} edge[edged](ua1);
      \node[] (ual) at (-0.6,-0.6cm) [] {$f(b,c)$};
      \node[vertex8] (ud) at (0,0) [] {} edge[edged](ua);
      \node[] (udl) at (-0.6,0) [] {$g(b,c)$};
      \node[vertex8] (ud1) at (0,0.6cm) [] {} edge[edged](ud);
      \node[vertex8] (ub) at (-0.7cm,1.2cm) [label=left:$b$] {} edge[edge](ud1);
      \node[vertex8] (uc) at (0.7cm,1.2) [label=right:$c$] {} edge[edge](ud1);
    \end{tikzpicture}
    \begin{tikzpicture}
      \useasboundingbox (-1.5,-1.75) rectangle (1.25,1.7);
      \node () at (0,-2) {(b)};
      \node[vertex8] (ua1) at (0,-1.5cm) [] {};
      \node[vertex8] (ua) at (0,-0.9cm) [] {} edge[edged](ua1);
      \node[] (ual) at (-0.6,-0.9cm) [] {$f(b,c)$};
      \node[vertex8] (ua2) at (0,-0.3cm) [] {} edge[edged](ua);
      \node[vertex8] (ub) at (-0.7cm,0) [label=left:$b$] {} edge[edge](ua2);
      \node[vertex8] (uc) at (0.7cm,0) [label=right:$c$] {} edge[edge](ua2);
      \node[vertex8] (ud1) at (0,0.3cm) [] {} edge[edge](ub) edge[edge](uc);
      \node[vertex8] (ud) at (0,0.9cm) [] {} edge[edged](ud1);
      \node[] (udl) at (-0.6,0.9cm) [] {$g(b,c)$};
      \node[vertex8] (ud2) at (0,1.5cm) [] {} edge[edged](ud);
    \end{tikzpicture}
    \begin{tikzpicture}
      \useasboundingbox (-1.5,-1.75) rectangle (1.25,1.7);
      \node () at (0,-2) {(c)};
      \node[vertex8] (ua1) at (0,1.2cm) [] {};
      \node[vertex8] (ua) at (0,0.6cm) [] {} edge[edged](ua1);
      \node[] (ual) at (-0.6,0.6cm) [] {$g(b,c)$};
      \node[vertex8] (ud) at (0,0) [] {} edge[edged](ua);
      \node[] (udl) at (-0.6,0) [] {$f(b,c)$};
      \node[vertex8] (ud1) at (0,-0.6cm) [] {} edge[edged](ud);
      \node[vertex8] (ub) at (-0.7cm,-1.2cm) [label=left:$b$] {} edge[edge](ud1);
      \node[vertex8] (uc) at (0.7cm,-1.2) [label=right:$c$] {} edge[edge](ud1);
    \end{tikzpicture}
    \begin{tikzpicture}
      \useasboundingbox (-1.5,-1.75) rectangle (1.25,1.7);
      \node () at (0,-2) {(d)};
      \node[vertex8] (ua) at (0,-1cm) [label=below:$a$] {};
      \node[vertex8] (ub) at (-0.7cm,0) [label=left:$b$] {} edge[edge](ua);
      \node[vertex8] (uc) at (0.7cm,0) [label=right:$c$] {} edge[edge](ua);
      \node[vertex8] (ud) at (0,1cm) [label=above:$d$] {} edge[edge](ub) edge[edge](uc);
    \end{tikzpicture}
\end{center}
\caption{Three types of 1-defect multimorphisms with defect $\{b,c\}$. (a) $f(b,c) < g(b,c) < b, c$. (b) $f(b,c) < b, c < g(b,c)$. (c) $b, c < f(b,c) < g(b,c)$. (d) The Hasse diagram of the lattice $L_{ad}$, a special case of (b).}\label{fig:1}
\end{figure}
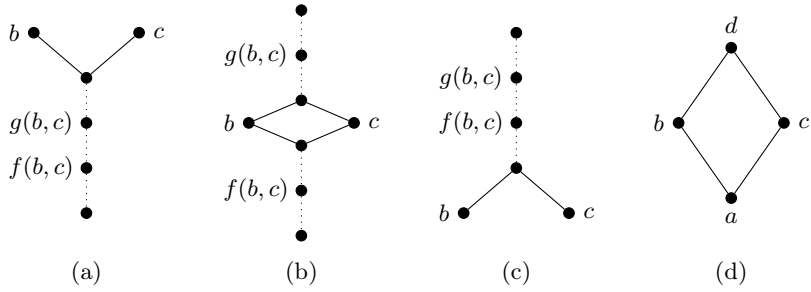
%% In (a) and (c), $f(b,c)$ and $g(b,c)$ may be equal.

\begin{example}
\label{ex:1defect}
Let $D = \{a,b,c,d\}$, and assume that $(D;f,g)$ is a 1-defect chain, 
with defect $\{b,c\}$, and that $a = f(b,c), d = g(b,c)$.
If $a < b, c < d$, then $f$ and $g$ are the meet and join of $L_{ad}$, \cf{} Fig.~\ref{fig:1}(d). % a product lattice of a pair of two-element chains.
When $a < d < b, c$ we have the situation in Fig.~\ref{fig:1}(a),
and when $b, c < a < d$ we have the situation in Fig.~\ref{fig:1}(c).
In the two latter cases, $f$ and $g$ are given by the two following multimorphisms (rows and columns are listed in the order $a, b, c, d$, \eg{} $g_1(c,d) = c$):
\[
f_1 : 
\begin{array}{cccc}
  a & a & a & a\\
  a & b & a & d\\
  a & a & c & d\\
  a & d & d & d\\
\end{array}
\qquad
g_1 :
\begin{array}{cccc}
  a & b & c & d\\
  b & b & d & b\\
  c & d & c & c\\
  d & b & c & d\\
\end{array}
\qquad \qquad
f_2 :
\begin{array}{cccc}
  a & b & c & a\\
  b & b & a & b\\
  c & a & c & c\\
  a & b & c & d\\
\end{array}
\qquad
g_2 : 
\begin{array}{cccc}
  a & a & a & d\\
  a & b & d & d\\
  a & d & c & d\\
  d & d & d & d\\
\end{array}
\]
\end{example}

The proof of tractability for languages with 1-defect chain multimorphisms
is inspired by Krokhin and Larose's~\cite{KrokhinL08} result on maximising 
supermodular functions on Mal'tsev products of lattices.
First we will need some notation and a general lemma on oracle-tractability
of MFM problems.

For an equivalence relation $\theta$ on $D$ we use $x[\theta]$ to
denote the equivalence class containing $x \in D$.
The relation $\theta$ is a \emph{congruence} on $(D;f,g)$, if
$f(x_1,y_1)[\theta] = f(x_2,y_2)[\theta]$ and $g(x_1,y_1)[\theta] = g(x_2,y_2)[\theta]$ whenever
$x_1[\theta] = x_2[\theta]$ and $y_1[\theta] = y_2[\theta]$.
We use $D/\theta$ to denote the set $\{ x[\theta] \mid x \in D\}$ and
$f/\theta : (D/\theta)^2 \to D/\theta$ to denote the function $(x[\theta], y[\theta]) \mapsto f(x, y)[\theta]$.

\begin{lemma} \label{lem:maltsev}
Let $f, g$ be two functions that map $D^2$ to $D$. If there is a congruence relation $\theta$ on $(D; f, g)$ such that
%\begin{enumerate}
%\item 
1) MFM$(D/\theta; f/\theta, g/\theta)$ is oracle-tractable; and
%\item 
%MFM$(\{(X; f|_X, g|_X) \mid \text{$X$ is a $\theta$-class}\})$ is oracle-tractable, then
2) MFM$(\{(X; f|_X, g|_X) \mid X \in D/\theta\})$ is oracle-tractable, then
%\end{enumerate}
MFM$(D; f, g)$ is oracle-tractable.
\end{lemma}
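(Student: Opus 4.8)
The plan is a two-level divide-and-conquer minimisation that uses $\theta$ to split the search over $D^n$ into an outer search over the quotient $(D/\theta)^n$ and an inner search inside each $\theta$-box. Fix an instance of MFM$(D;f,g)$: an oracle function $h : D^n \to \Z$ satisfying $h(\bx)+h(\by) \ge h(f(\bx,\by))+h(g(\bx,\by))$ for all $\bx,\by$, where $f,g$ act coordinatewise. Extend $\theta$ to $D^n$ coordinatewise, write $[\bx]$ for the class tuple of $\bx$, and let a class tuple $\bar\bx \in (D/\theta)^n$ determine the box $B(\bar\bx) = \prod_i \bar{x}_i \subseteq D^n$.

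First I would define the quotient objective $\bar h : (D/\theta)^n \to \Z$ by $\bar h(\bar\bx) = \min\{ h(\bx) \mid \bx \in B(\bar\bx)\}$ and show it is a legitimate MFM$(D/\theta; f/\theta, g/\theta)$ instance. The key verification is that $\bar h$ inherits the multimorphism inequality: choosing minimisers $\bx \in B(\bar\bx)$ and $\by \in B(\bar\by)$, the congruence property gives $[f(\bx,\by)] = f/\theta(\bar\bx,\bar\by)$ and $[g(\bx,\by)] = g/\theta(\bar\bx,\bar\by)$ coordinatewise, so that $f(\bx,\by) \in B(f/\theta(\bar\bx,\bar\by))$ and $g(\bx,\by) \in B(g/\theta(\bar\bx,\bar\by))$, whence
\[
\bar h(f/\theta(\bar\bx,\bar\by)) + \bar h(g/\theta(\bar\bx,\bar\by)) \le h(f(\bx,\by)) + h(g(\bx,\by)) \le h(\bx)+h(\by) = \bar h(\bar\bx)+\bar h(\bar\by).
\]
Since $\min_{\bx \in D^n} h(\bx) = \min_{\bar\bx} \bar h(\bar\bx)$, condition (1) lets me minimise $\bar h$ over $(D/\theta)^n$ in time polynomial in $n$, provided I can answer oracle queries to $\bar h$.

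Answering a single query $\bar h(\bar\bx)$ is exactly the inner problem of minimising $h$ over $B(\bar\bx)$. Writing $\bar{x}_i = X_i \in D/\theta$, this is the instance of MFM$(\{(X; f|_X, g|_X) \mid X \in D/\theta\})$ whose coordinate selector sends $i$ to the block $X_i$ and whose objective is $h$ restricted to $B(\bar\bx)$. The well-formedness of condition (2) already presupposes each block is closed under $f$ and $g$, so $f(\bx,\by), g(\bx,\by)$ remain in the box and the restricted objective satisfies the required inequality because it holds globally. Hence condition (2) evaluates each $\bar h(\bar\bx)$ in time $O(n^{c_2})$ using $O(n^{c_2})$ calls to the global oracle $h$.

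Composing the two levels then gives the result: run the quotient algorithm of condition (1), and whenever it queries $\bar h$, answer that query by running the block algorithm of condition (2). The outer algorithm issues at most $O(n^{c_1})$ queries, each answered in $O(n^{c_2})$ time, for a total of $O(n^{c_1+c_2})$; and the outer minimiser $\bar\bx^{\ast}$, together with the inner minimiser produced for $B(\bar\bx^{\ast})$, yields an optimal $\bx^{\ast} \in D^n$. I expect the main obstacle to be this nesting of oracle-tractable subroutines: I must ensure the outer algorithm makes only polynomially many $\bar h$-queries, which follows since its query count is bounded by its running time, so that simulating each non-unit-cost query by the inner algorithm keeps the total polynomial. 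The accompanying checks — that $\bar h$ obeys the multimorphism inequality and that block restriction preserves it — are the conceptual heart but are short once the congruence property is applied.
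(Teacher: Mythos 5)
Your proof is correct and follows essentially the same route as the paper's: define the quotient objective $h'(\bar{\bfsym{x}}) = \min_{\bfsym{x} \in B(\bar{\bfsym{x}})} h(\bfsym{x})$, verify via the congruence property that it is a valid MFM$(D/\theta; f/\theta, g/\theta)$ instance, answer its oracle queries by the block algorithm of condition (2), and minimise it by the algorithm of condition (1). Your additional remarks---that condition (2) presupposes the blocks are closed under $f$ and $g$, and that nesting the two polynomial-time oracle algorithms stays polynomial---only make explicit what the paper leaves implicit.
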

\begin{proof}
Let $h : D^n \to \mathbb{Z}$ be the function we want to minimise. We define a new function $h' : (D/\theta)^n \to \mathbb{Z}$ by
\[
h'(z_1, z_2, \ldots, z_n) = \min_{x_i \in z_i} h(x_1, x_2, \ldots, x_n) .
\]
It is clear that $\min_{\bfsym{z} \in (D/\theta)^n} h'(\bfsym{z}) = \min_{\bfsym{x} \in D^n} h(\bfsym{x})$.
By assumption 2 in the statement of the lemma we can compute $h'$ given $z_1, z_2, \ldots, z_n$. To simplify the notation we let $u = f/\theta$ and $v = g/\theta$. We will now prove that
$h'$ is an instance of MFM$(D/\theta; u, v)$.

Let $\bfsym{x}, \bfsym{y} \in D^k$ and choose $x'_i \in x_i[\theta]$ and $y'_i \in y_i[\theta]$ so that $h'(\bfsym{x}[\theta]) = h(\bfsym{x}')$ and $h'(\bfsym{y}[\theta]) = h(\bfsym{y}')$.
We then have
\begin{align}
h'(\bfsym{x}[\theta])+h'(\bfsym{y}[\theta]) &=    h(\bfsym{x}') + h(\bfsym{y}') \label{eq:1} \\
                                            &\geq h(f(\bfsym{x}',\bfsym{y}')) + h(g(\bfsym{x}',\bfsym{y}')) \label{eq:2} \\
                                            &\geq h'(f(\bfsym{x}',\bfsym{y}')[\theta]) + h'(g(\bfsym{x}',\bfsym{y}')[\theta]) \label{eq:3} \\
                                            &=    h'(f(\bfsym{x},\bfsym{y})[\theta])+h'(g(\bfsym{x},\bfsym{y})[\theta])) \label{eq:4} \\
                                            &=    h'(u(\bfsym{x}[\theta],\bfsym{y}[\theta]))+h'(v(\bfsym{x}[\theta],\bfsym{y}[\theta])) . \label{eq:5}
\end{align}
Here~\eqref{eq:1} follows from our choice of $\bfsym{x'}$ and $\bfsym{y'}$,
\eqref{eq:2} follows from the fact that $h$ is an instance of MFM$(D; f, g)$,
\eqref{eq:3} follows from the definition of $h'$, and finally
\eqref{eq:4} and \eqref{eq:5} follows as $\theta$ is a congruence relation of $f$ and $g$.
Hence, $h'$ is an instance of MFM$(D/\theta; u, v)$ and can be minimised in polynomial time by the first assumption in the lemma.
\qed
\end{proof}

Armed with this lemma and the oracle-tractability of submodular and bisubmodular functions described in the previous section, we can now present a new tractable class of \wMCSP-problems.

\begin{proposition}
\label{prop:1defecttract}
  If $\Gamma$ has a 1-defect chain multimorphism, then \wMCSP$(\Gamma)$ is tractable.
\end{proposition}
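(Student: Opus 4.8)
The plan is to transfer the problem to an oracle setting and then peel off the chain structure surrounding the defect by repeated use of Lemma~\ref{lem:maltsev}. Since $(f,g)$ is by assumption a multimorphism of $\Gamma$, the remark following the definition of MFM reduces the statement to showing that MFM$(D;f,g)$ is oracle-tractable. I would establish this by exhibiting a congruence $\theta$ on $(D;f,g)$ whose classes and whose quotient are of the two shapes already known to be oracle-tractable, namely chains (submodular minimisation on a distributive lattice) and the three-element bisubmodular structure.

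First I would pin down the order-theoretic shape of a 1-defect chain. Put $L=\{x\in D : x\po b\}$ and $U=\{x\in D : b\po x\}$. Since $(D;\po)$ relates every pair except $\{b,c\}$, one checks that $z\po b\iff z\po c$ and $b\po z\iff c\po z$ for all $z\notin\{b,c\}$, so $L$ and $U$ are the common strict down- and up-sets of $b$ and $c$; they are chains, $D=L\cup\{b,c\}\cup U$ with $L\po\{b,c\}\po U$, and the defect values $p:=f(b,c)$ and $q:=g(b,c)$ lie in $L\cup U$ with $p\po q$. I then take $\theta$ to be the congruence that collapses $L$ to one class and $U$ to one class, keeping $b,c$ as singletons. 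The verification that $\theta$ is a congruence is the technical core: off the pair $\{b,c\}$ the operations are the meet and join $\glb,\lub$ of a chain, so for $\ell,\ell'\in L$ and any $y$ the values $f(\ell,y),f(\ell',y)$ and $g(\ell,y),g(\ell',y)$ land in the same $\theta$-class (and dually for $U$), while $f(b,c)=p$ and $g(b,c)=q$ impose no constraint because $b$ and $c$ are not identified. The nontrivial $\theta$-classes $L$ and $U$ are chains and the rest are singletons, so by oracle-tractability of submodular minimisation on distributive lattices, condition~(2) of Lemma~\ref{lem:maltsev} holds.

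It remains to verify condition~(1): that the quotient $(D/\theta;f/\theta,g/\theta)$ is oracle-tractable. This quotient has at most four elements $\{L,B,C,U\}$ (with $B=\{b\},C=\{c\}$), ordered $L\po B,C\po U$ with $B,C$ incomparable, and with $f/\theta(B,C)=p[\theta]$, $g/\theta(B,C)=q[\theta]$; it is therefore again a 1-defect chain. If $p,q$ lie on opposite sides of $\{b,c\}$ (type~(b) of Fig.~\ref{fig:1}) the quotient is the distributive lattice $L_{ad}$ and hence submodular. If $p,q$ lie on the same side (types~(a) and~(c)) then $f/\theta(B,C)=g/\theta(B,C)$; when the opposite side is empty the quotient is precisely the three-element bisubmodular structure, and when it is nonempty I apply Lemma~\ref{lem:maltsev} a second time, collapsing the three elements on the defect side into one class, which leaves a two-element chain as quotient (submodular) and a three-element bisubmodular structure as the single nontrivial class. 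Type~(c) is most cleanly reduced to type~(a) by order-duality: reversing $\po$ and exchanging $f$ and $g$ maps 1-defect chains to 1-defect chains and, because the multimorphism inequality is symmetric in its two operations, preserves oracle-tractability of MFM. In every case the quotient is oracle-tractable, so Lemma~\ref{lem:maltsev} gives oracle-tractability of MFM$(D;f,g)$ and hence, by the remark above, tractability of \wMCSP$(\Gamma)$.

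The step I expect to be the main obstacle is the congruence verification together with the exact identification of the three-element quotient with the bisubmodular pair $(u,v)$ --- in particular getting the orientation right in type~(c), where the two incomparable elements are minimal rather than maximal, so that one must invoke the symmetry of the MFM inequality (or order-duality) to recognise the quotient as genuinely bisubmodular.
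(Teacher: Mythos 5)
Your proof is correct, and although it rests on the same two pillars as the paper's proof --- Lemma~\ref{lem:maltsev} together with the known oracle-tractability of submodular minimisation on (products of) chains and of bisubmodular minimisation --- your decomposition is genuinely different. The paper uses a \emph{different} congruence for each type in Fig.~\ref{fig:1}: when $b,c$ are maximal (or minimal) it takes the classes $D\setminus\{b,c\},\{b\},\{c\}$, so the quotient is the three-element bisubmodular structure and the classes are chains; for type (a) in general it instead splits $D$ into $A=\{x\mid x\leq b \text{ or } x\leq c\}$ and its complement, so that one \emph{class} is itself a smaller 1-defect chain, and it must then invoke the claim that oracle-tractability of MFM$(X)$ and MFM$(Y)$ implies that of MFM$(X\cup Y)$ --- a heterogeneous combination step the paper only asserts ("one can show") and never proves; type (b) is handled with the two chain classes $\{x\mid x\leq b\}$ and $\{x\mid x\geq c\}$. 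You instead use one uniform congruence $\{L,\{b\},\{c\},U\}$ for all three types, so the classes are always chains or singletons --- condition (2) of Lemma~\ref{lem:maltsev} is then plain submodular minimisation on a product of chains, with no heterogeneous mixing --- and the entire defect is pushed into a quotient of at most four elements, dispatched by inspection: $L_{ad}$ in type (b), the bisubmodular structure in type (a) (after one further, trivial application of the lemma whose only mixing is with a singleton class), and order-duality plus the symmetry of the MFM inequality in its two operations for type (c). This buys a concrete advantage: your argument nowhere needs the paper's unproved MFM$(X\cup Y)$ combination claim, and it makes explicit the duality that the paper leaves implicit in "the same argument works when $b$ and $c$ are minimal". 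What the paper's route buys in exchange is a type-(b) treatment using only chains, with no appeal to $L_{ad}$ or bisubmodularity.

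One small slip: after forming the first quotient you call it "again a 1-defect chain", but in types (a) and (c) it is not one, since there $f/\theta(B,C)=g/\theta(B,C)$, violating the requirement $f(b,c)<g(b,c)$ in the definition. This is purely a naming issue; your subsequent case analysis treats these quotients directly and correctly, so the argument is unaffected.
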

\begin{proof}
  Assume that $\Gamma$ has a 1-defect chain multimorphism $(f,g)$ over $(D;<)$
  with defect $\{b,c\}$. We prove that MFM$(D; f, g)$ is oracle-tractable.
  
  Assume that $b$ and $c$ are maximal elements, \ie{} $x < b, c$ for all $x \in D \setminus \{b, c\}$. In
  this case the equivalence relation $\theta$ with classes $A = D \setminus \{b, c\}$, $B = \{b\}$, $C = \{c\}$ is a 
  congruence relation of $(D; f, g)$. 
  Furthermore, MFM$(\{A,B,C\}; f/\theta, g/\theta)$ and
  MFM$(A; f|_A, g|_A)$ are oracle-tractable~\cite{McCormickF10,Schrijver00}.
  It now follows from Lemma~\ref{lem:maltsev} that MFM$(D; f, g)$ is oracle-tractable.  
  The same argument works for the case when $b$ and $c$ are minimal elements. 
  
  If $f(b, c) < g(b, c) < b, c$, but $b$ and $c$ are not maximal, then we can use the
congruence relation $\theta'$ with classes $A = \{x \mid x \leq b \text{ or } x \leq c\}$ and $B = D \setminus A$.
Here, $(\{A,B\}; f/\theta', g/\theta')$ and $(B; f|_B, g|_B)$ are chains, and $(A; f|_A, g|_A)$ is a 1-defect
chain of the previous type. 
One can show that when MFM($X$) and MFM($Y$) are both oracle-tractable,
then so is MFM$(X \cup Y)$.
Combining this with the technique used above,
we can now solve the minimisation problem. An analogous construction works
in the case when $b, c < f(b, c), g(b, c)$, using the congruence consisting of the
class $\{x \mid x \geq b \text{ or } x \geq c\}$ and its complement.
Finally, when $f(b, c) < b, c < g(b, c)$, we can use the congruence relation $\theta''$
with classes $B = \{x \mid x \leq b\}$ and $C = \{x \mid x \geq c\}$. Here, $(\{B,C\}, f/\theta'', g/\theta'')$,
$(B, f|_B, g|_B)$, and $(C, f|_C, g|_C)$ are all chains and thus the MFM problem for these triples is oracle-tractable~\cite{Schrijver00}.
\qed
\end{proof}

We now turn to prove a different property of functions with 1-defect chain
multimorphisms.
It is based on the following result for submodular functions on chains, which was derived by Queyranne \etal\cite{Queyranne:etal:98} from earlier work by Topkis~\cite{Topkis:1978}
(See also Burkard \etal\cite{Burkard:Klinz:Rudolf}).
This formulation is due to Deineko \etal\cite{Deineko:etal:JACM2008}:

\begin{lemma}
\label{lem:burkard}
  A function $f : D^k \rightarrow \Z$ is submodular on a chain $(D;\meet,\join)$
  if and only if the following holds: every binary function obtained from
  $f$ by replacing any given $k-2$ variables by any constants is submodular
  on this chain.
\end{lemma}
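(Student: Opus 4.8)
```latex
\paragraph{Proof proposal.}
The plan is to prove both directions of Lemma~\ref{lem:burkard}.
One direction is immediate and I would dispose of it first: if
$f$ is submodular on the chain $(D;\meet,\join)$, then every binary
restriction is submodular. Indeed, fixing $k-2$ of the coordinates to
constants just means applying the submodularity inequality
$f(\bfsym{a}\meet\bfsym{b})+f(\bfsym{a}\join\bfsym{b})\le
f(\bfsym{a})+f(\bfsym{b})$ to tuples $\bfsym{a},\bfsym{b}$ that agree on
those $k-2$ fixed coordinates. On such tuples the meet and join leave the
fixed coordinates unchanged, so the restricted inequality is a special case
of the full one. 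No work is needed here beyond observing this closure under
restriction.

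The substantive direction is the converse: assuming that every binary
restriction of $f$ is submodular on the chain, I want to deduce that $f$
itself is submodular, i.e.\ that
$f(\bfsym{a}\meet\bfsym{b})+f(\bfsym{a}\join\bfsym{b})\le
f(\bfsym{a})+f(\bfsym{b})$ holds for \emph{all} pairs
$\bfsym{a},\bfsym{b}\in D^k$. The natural approach is a coordinate-by-coordinate
interpolation between $\bfsym{a}$ and $\bfsym{b}$. Since the order is a chain,
for each coordinate $i$ the pair $(a_i,b_i)$ is comparable, so
$\{a_i\meet b_i, a_i\join b_i\}=\{a_i,b_i\}$; thus meet and join only
\emph{swap} coordinates in which $\bfsym{a}$ exceeds $\bfsym{b}$ (or vice
versa), and agree with $\bfsym{a},\bfsym{b}$ where the two are ordered the
``right'' way. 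I would define a telescoping sequence of tuples that changes
the relative order of $\bfsym{a}$ and $\bfsym{b}$ in one disagreeing
coordinate at a time, applying the binary submodularity inequality at each
step with the remaining $k-1$ coordinates frozen, and sum the resulting
inequalities so that intermediate terms cancel.

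Concretely, let $S=\{i : a_i > b_i\}$ be the set of coordinates on which
$\bfsym{a}$ and $\bfsym{b}$ are ``crossed''; enumerate $S=\{i_1,\dots,i_r\}$
and interpolate by flipping these coordinates one at a time. At each flip I
apply submodularity of the binary restriction in the single active coordinate
$i_j$ (with a second auxiliary coordinate handled by the two-variable
hypothesis), obtaining a chain of inequalities whose telescoping sum yields
the full $k$-ary inequality. The main obstacle I anticipate is bookkeeping:
making the induction on $|S|$ precise so that each elementary step genuinely
lands in the scope of the binary hypothesis, and verifying that the
intermediate tuples are set up so that consecutive inequalities cancel
cleanly. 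In particular one must be careful that ``replacing $k-2$ variables by
constants'' lets us invoke submodularity in \emph{two} simultaneously varying
coordinates when needed, which is exactly what the statement allows; once the
interpolation is organised as a single-coordinate induction this matches the
hypothesis and the cancellation is routine.
```
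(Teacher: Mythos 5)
A preliminary remark: the paper does not prove Lemma~\ref{lem:burkard} at all; it imports it from the literature (Queyranne et al., building on Topkis, in the formulation of Deineko et al.), so your proposal can only be compared against the standard argument, not against a proof in the paper. Your forward direction is fine: meet and join act coordinatewise and are idempotent, so on pairs of tuples agreeing on the $k-2$ fixed coordinates the full submodularity inequality specialises to the binary one.

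The converse, as you describe it, has a genuine gap, and it sits exactly where you wrote that ``the cancellation is routine.'' Your plan is to flip the crossed coordinates $S=\{i: a_i>b_i\}$ one at a time, using at each flip a \emph{single} application of the binary hypothesis ``with a second auxiliary coordinate.'' Count what one flip actually requires. Writing $\bfsym{u}=\bfsym{a}\meet\bfsym{b}$, $\bfsym{v}=\bfsym{a}\join\bfsym{b}$ and $T=\{i: a_i<b_i\}$, the target inequality is $f(\bfsym{v})-f(\bfsym{b})\le f(\bfsym{a})-f(\bfsym{u})$, where both sides raise the $S$-coordinates from $b_i$ to $a_i$, the left side from the base point $\bfsym{b}$ and the right side from the base point $\bfsym{u}$. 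When you flip one coordinate $s\in S$, you must compare the increment of that flip at two base points that still differ in \emph{all} coordinates of $T$; the four tuples involved differ in $|T|+1$ coordinates, so this comparison is not an instance of the binary hypothesis, and a single chain of $|S|$ binary inequalities cannot telescope to the result. (Your first formulation, ``with the remaining $k-1$ coordinates frozen,'' is symptomatic of the same confusion: freezing $k-1$ coordinates leaves a unary restriction, for which submodularity on a chain is vacuous.) What is actually needed is a second, inner interpolation: for each $s\in S$, telescope the difference between the two base points over the coordinates of $T$ one at a time; each of those elementary steps has exactly one coordinate of $S$ and one coordinate of $T$ varying with the other $k-2$ fixed, and only these $|S|\cdot|T|$ inequalities are legitimate binary instances. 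Equivalently, one can run an induction on the number of coordinates where $\bfsym{a}$ and $\bfsym{b}$ differ, but then the induction hypothesis must be invoked for pairs that are \emph{not} binary instances, contrary to your requirement that every elementary step ``lands in the scope of the binary hypothesis.'' This two-level structure is the actual content of the lemma (it is Topkis's theorem for products of chains), and it is precisely what your sketch omits.
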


It is straightforward to extend this lemma to products of chains,
such as $L_{ad}$. 
Here, we outline the proof of the corresponding property for arbitrary 1-defect chains,
which will be needed in Section~\ref{sec:binary}.
We will use the following observation.

\begin{definition}
  A binary operation $f : D^2 \rightarrow D$ is called a 
  \emph{2-semilattice} if it is
  idempotent, commutative, and
  $f(f(x,y),x) = f(x,y)$ for all $x, y \in D$.
\end{definition}

\begin{proposition}
\label{prop:fg}
  Let $(D;f,g)$ be a 1-defect chain with a defect on $\{b,c\}$.
  \begin{enumerate}
  \item
    If $f(b,c) < b, c$, then $f$ is a 2-semilattice
    and $g(f(x,y),x) = x$ for $x, y \in D$.
  \item
    If $g(b,c) > b, c$, then $g$ is a 2-semilattice
    and $f(g(x,y),x) = x$ for $x, y \in D$.
  \item
    For $\bfsym{x}, \bfsym{y} \in \{b,c\}^k$, we have
    $\{f(f(\bfsym{x},\bfsym{y}),\bfsym{x}),g(f(\bfsym{x},\bfsym{y}),\bfsym{x})\} = \{f(\bfsym{x},\bfsym{y}),\bfsym{x}\}$
    and
    $\{g(g(\bfsym{x},\bfsym{y}),\bfsym{x}),f(g(\bfsym{x},\bfsym{y}),\bfsym{x})\} = \{g(\bfsym{x},\bfsym{y}),\bfsym{x}\}$.
  \end{enumerate}
\end{proposition}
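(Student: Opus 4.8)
The plan is to reduce each identity to the defining behaviour of $f$ and $g$ by repeatedly exploiting one structural fact: because $b$ and $c$ are incomparable and $f(b,c), g(b,c) \notin \{b,c\}$, none of the pairs that arise in the computations below can equal the defect pair $\{b,c\}$. Concretely, any pair of the form $\{x \glb y, x\}$, $\{x \lub y, x\}$, $\{f(b,c), e\}$, or $\{g(b,c), e\}$ (with $e \in \{b,c\}$) is either a singleton or a comparable pair, hence distinct from $\{b,c\}$; on all such pairs $f$ and $g$ act precisely as the meet and join of $(D;<)$, so the verifications collapse to elementary identities on a chain.

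For part~1, assume $f(b,c) < b,c$. Commutativity is given, and idempotence follows from $f(x,x) = x \glb x = x$. For the law $f(f(x,y),x) = f(x,y)$ I split on whether $\{x,y\} = \{b,c\}$. If not, then $f(x,y) = x \glb y \leq x$, and since $\{x \glb y, x\} \neq \{b,c\}$ we get $f(x \glb y, x) = (x \glb y) \glb x = x \glb y$. If $\{x,y\} = \{b,c\}$, then $f(x,y) = f(b,c) < x$, and as $\{f(b,c),x\} \neq \{b,c\}$ we obtain $f(f(b,c),x) = f(b,c) \glb x = f(b,c)$. The same split yields $g(f(x,y),x) = x$: in the first case $g(x \glb y, x) = (x \glb y) \lub x = x$, and in the second $g(f(b,c),x) = f(b,c) \lub x = x$. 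Part~2 is the order-reversing dual, obtained by swapping $f \leftrightarrow g$ and $\glb \leftrightarrow \lub$ and reversing $<$, with the hypothesis $g(b,c) > b,c$ now in the role of $f(b,c) < b,c$.

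For part~3, since $\bfsym{x}, \bfsym{y} \in \{b,c\}^k$ and $f,g$ act componentwise, I argue coordinate by coordinate. Write $\bfsym{z} = f(\bfsym{x},\bfsym{y})$. In a coordinate with $x_i = y_i$, all of $z_i$, $x_i$ and the two outputs coincide; in a coordinate with $\{x_i,y_i\} = \{b,c\}$ we have $z_i = f(b,c) \notin \{b,c\}$, so $\{z_i,x_i\} \neq \{b,c\}$ and both $f(z_i,x_i) = z_i \glb x_i$ and $g(z_i,x_i) = z_i \lub x_i$ reduce to meet and join. The crux---and the only genuinely delicate point---is to upgrade a coordinatewise set equality to equality of unordered pairs of \emph{tuples}. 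This succeeds because the position of $f(b,c)$ relative to $\{b,c\}$ is global: being comparable to both $b$ and $c$ yet unable to lie strictly between them (that would force $b,c$ comparable), $f(b,c)$ is either below both or above both. If $f(b,c) < b,c$, then in every nontrivial coordinate $f(z_i,x_i) = z_i$ and $g(z_i,x_i) = x_i$, so $f(\bfsym{z},\bfsym{x}) = \bfsym{z}$ and $g(\bfsym{z},\bfsym{x}) = \bfsym{x}$; if $f(b,c) > b,c$ the two roles swap uniformly, giving $f(\bfsym{z},\bfsym{x}) = \bfsym{x}$ and $g(\bfsym{z},\bfsym{x}) = \bfsym{z}$. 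Either way $\{f(\bfsym{z},\bfsym{x}), g(\bfsym{z},\bfsym{x})\} = \{\bfsym{z}, \bfsym{x}\}$, which is the first identity. The second identity is proved identically with $g(\bfsym{x},\bfsym{y})$ replacing $f(\bfsym{x},\bfsym{y})$, the direction of the swap now governed by whether $g(b,c)$ lies below or above $\{b,c\}$. Thus the whole proposition rests on this single uniformity-across-coordinates observation; once it is isolated, each coordinate is a one-line meet/join evaluation.
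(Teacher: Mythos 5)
Your proof is correct and follows essentially the same route as the paper's: the same split on whether the argument pair equals the defect $\{b,c\}$, the same meet/join evaluations on comparable pairs (which are never the defect pair), and the same key observation that $f(b,c)$ and $g(b,c)$ each lie entirely below or entirely above $\{b,c\}$, which gives the uniformity across coordinates needed for the tuple identities in part~3. The only organisational difference is that the paper obtains part~3 by invoking parts~1 and~2 when their hypotheses hold and treating the remaining same-side cases directly, whereas you argue part~3 directly in all cases; the underlying computations are identical.
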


\begin{proof}
  For $\{x,y\} \neq \{b,c\}$, the equalities
  $f(f(x,y),x) = f(x,y)$ and $g(f(x,y),x) = x$ 
  follow from the underlying partial order.
  Assume instead that $\{x,y\} = \{b,c\}$, and that $f(x,y) < x, y$.
  Since $\{f(x,y),x\} \neq \{b,c\}$, we have that $f(f(x,y),x)$ is the
  greatest lower bound of $f(x,y)$ and $x$, which is $f(x,y)$.
  We also have that $g(f(x,y),x)$ is the lowest upper bound of
  $f(x,y)$ and $x$, which is $x$.
  An analogous argument proves (2).

  The first equality of (3) follows from (1) if $f(b,c) < b, c$, and
  the second equality follows from (2) if $g(b,c) > b, c$.
  At least one of $f(b,c) < b, c$ and $g(b,c) > b, c$ holds.
  If both holds, there is nothing to prove, so assume that
  $f(b,c) < b, c$, but $g(b,c) < b, c$.
  We then have $g(g(x,y),x) = x$ and $f(g(x,y),x) = g(x,y)$ for $\{x,y\} = \{b,c\}$, 
  so the second equality of (3) also holds.
  The remaining case follows similarly.
\qed
\end{proof}

%% \begin{proposition}
%% \label{prop:fg}
%%   For a 1-defect chain $(D,f,g)$, we have
%%   $\{f(f(x,y),x),g(f(x,y),x)\} = \{f(x,y),x\}$, and
%%   $\{g(g(x,y),x),f(g(x,y),x)\} = \{g(x,y),x\}$,
%%   for all $x, y \in D$.
%% \end{proposition}

\begin{lemma}
\label{lem:1defectburkard}
  A function $h : D^k \rightarrow \Z$, $k \geq 2$, has the 1-defect
  chain multimorphism $(f,g)$ if and only if
  %Let $(D,f,g)$ be a 1-defect chain.
  %A $k$-ary function $h : D^k \rightarrow \Qnn$, $k \geq 2$, has the multimorphism $(f, g)$ if and only if 
  every binary function obtained from $h$ by replacing any given $k-2$ variables by any constants has the multimorphism $(f, g)$.
\end{lemma}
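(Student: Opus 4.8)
The plan is to prove the two directions separately. The forward (only-if) direction is immediate, while the converse is an induction on the arity $k$ that mirrors the chain case of Lemma~\ref{lem:burkard} but must additionally cope with the defect pair $\{b,c\}$.

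For the only-if direction, suppose $h$ has the multimorphism $(f,g)$ and fix $k-2$ of its coordinates to arbitrary constants $d_\ell$. Since $\{d_\ell,d_\ell\}\neq\{b,c\}$, both $f$ and $g$ are idempotent on the diagonal, so $f(d_\ell,d_\ell)=g(d_\ell,d_\ell)=d_\ell$. Applying the $k$-ary multimorphism inequality to two tuples that agree on the fixed coordinates therefore leaves those coordinates fixed on both sides, and the surviving inequality is exactly the binary multimorphism inequality for the induced binary function.

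For the converse I would induct on $k$, the base case $k=2$ being the hypothesis. In the inductive step I single out one coordinate, say the last, writing $\bfsym{x}=(\bfsym{x}',x_k)$ and $\bfsym{y}=(\bfsym{y}',y_k)$, and set $\bfsym{p}'=f(\bfsym{x}',\bfsym{y}')$, $\bfsym{q}'=g(\bfsym{x}',\bfsym{y}')$. For every fixed $d\in D$ the map $\bfsym{z}\mapsto h(\bfsym{z},d)$ is a $(k-1)$-ary function all of whose binary restrictions have $(f,g)$, so by the induction hypothesis it has the multimorphism; this settles the interaction among the first $k-1$ coordinates and gives, in particular, $h(\bfsym{p}',x_k)+h(\bfsym{q}',x_k)\leq h(\bfsym{x}',x_k)+h(\bfsym{y}',x_k)$. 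To couple coordinate $k$ to the rest I follow the chain argument behind Lemma~\ref{lem:burkard}: after possibly swapping $\bfsym{x}$ and $\bfsym{y}$ so that $x_k \glb y_k = x_k$ (whence $f(x_k,y_k)=x_k$ and $g(x_k,y_k)=y_k$), I show that the discrete derivative $\psi(\bfsym{z})=h(\bfsym{z},y_k)-h(\bfsym{z},x_k)$ is monotone non-increasing, a consequence of the binary multimorphism applied to pairs (coordinate $j$, coordinate $k$). The inequality $\psi(\bfsym{y}')\geq\psi(\bfsym{q}')$ then reads $h(\bfsym{q}',y_k)+h(\bfsym{y}',x_k)\leq h(\bfsym{y}',y_k)+h(\bfsym{q}',x_k)$, and adding it to the induction-hypothesis inequality above and cancelling the common terms $h(\bfsym{q}',x_k)$ and $h(\bfsym{y}',x_k)$ yields precisely $h(\bfsym{p}',x_k)+h(\bfsym{q}',y_k)\leq h(\bfsym{x}',x_k)+h(\bfsym{y}',y_k)$, \ie{} the $k$-ary inequality.

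The hard part is the defect pair, which is exactly where this scheme needs Proposition~\ref{prop:fg}. On a non-defect coordinate $f$ and $g$ return $\glb$ and $\lub$, so the near-chain order makes $\psi$ behave as in the pure chain case and $\bfsym{q}'$ dominates $\bfsym{y}'$ coordinate-wise. On a defect coordinate, however, $\{f(b,c),g(b,c)\}$ is disjoint from $\{b,c\}$, so $\bfsym{q}'$ need not dominate $\bfsym{y}'$ and the naive monotonicity of $\psi$ breaks. Here Proposition~\ref{prop:fg} is the crucial ingredient: according to whether $f(b,c)<b,c$ or $g(b,c)>b,c$ it supplies a $2$-semilattice together with an absorption identity ($g(f(x,y),x)=x$, respectively $f(g(x,y),x)=x$), and part~(3) guarantees that re-applying $(f,g)$ to an already-resolved defect coordinate returns the same unordered pair $\{f(b,c),g(b,c)\}$. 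These identities replace the lattice monotonicity in the exchange step and keep the defect coordinates stable while the others are processed. I would therefore split the inductive step according to the three types of $1$-defect chain in Fig.~\ref{fig:1}(a--c), choosing the singled-out coordinate to be non-defect whenever possible and falling back on Proposition~\ref{prop:fg}(3) for tuples supported entirely on $\{b,c\}$. The chain skeleton is then a routine adaptation of Lemma~\ref{lem:burkard}; the real obstacle is verifying that the exchange argument survives across the defect, for which Proposition~\ref{prop:fg} is designed.
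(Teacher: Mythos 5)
Your overall skeleton (trivial only-if direction, induction on arity with the base case $k=2$, exchange/coupling step for the inductive step) is workable, and your instinct that Proposition~\ref{prop:fg} is the key extra ingredient is right; but the two places where you wave your hands are exactly where the mathematical content of the paper's proof lies, and one of your stated steps is false as written. Concretely: your coupling inequality rests on ``$\bfsym{q}'$ dominates $\bfsym{y}'$ coordinate-wise'', so that $\psi(\bfsym{y}')\geq\psi(\bfsym{q}')$ follows by a monotone path of binary submodular exchanges. For a defect chain of type (a) in Fig.~\ref{fig:1} (where $f(b,c)<g(b,c)<b,c$) this domination fails on \emph{every} defect coordinate, since $q'_j=g(b,c)$ lies strictly \emph{below} $y_j\in\{b,c\}$; there is no monotone path and the claimed inequality is not derivable this way. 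You acknowledge the breakage, but your proposed repair --- that the $2$-semilattice and absorption identities of Proposition~\ref{prop:fg} ``replace the lattice monotonicity'' --- is never turned into an inequality chain, and it is not the repair that works. What does work is switching the exchange to the meet side: for type (a) one has $\bfsym{p}'\leq\bfsym{x}'$ coordinate-wise (both $f(b,c)$ and $g(b,c)$ lie below $b,c$), so one couples via $h(\bfsym{x}',x_k)+h(\bfsym{p}',y_k)\geq h(\bfsym{p}',x_k)+h(\bfsym{x}',y_k)$ and the induction hypothesis applied at $y_k$; dually (join side) for type (c). The absorption identities are used in the paper for a different purpose: to compute the $(f,g)$-images of newly constructed pairs in its ``mixed'' case, not to restore monotonicity.

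The more serious gap is the case $\bfsym{x},\bfsym{y}\in\{b,c\}^k$ with $\{x_i,y_i\}=\{b,c\}$ for every $i$. Here there is no non-defect coordinate to single out, so your inductive step has nothing to recurse on, and ``falling back on Proposition~\ref{prop:fg}(3)'' is a pointer, not an argument. This case occupies the first half of the paper's proof: split the coordinates into two blocks, apply the multimorphism inequality twice at smaller Hamming distance (your induction hypothesis), add the inequalities for the two block-mixed pairs $\bigl((\bfsym{x_1};f(\bfsym{x_2},\bfsym{y_2})),(f(\bfsym{x_1},\bfsym{y_1});\bfsym{y_2})\bigr)$ and $\bigl((\bfsym{x_1};g(\bfsym{x_2},\bfsym{y_2})),(g(\bfsym{x_1},\bfsym{y_1});\bfsym{y_2})\bigr)$ --- legitimate inputs because all of their coordinate pairs avoid $\{b,c\}$ --- and use Proposition~\ref{prop:fg}(3) to identify the $(f,g)$-images of these pairs; after cancellation the desired inequality for the all-defect pair drops out. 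Without this block-splitting argument your induction is incomplete. Finally, note that the paper never re-derives the chain exchange at all: once every coordinate pair avoids the defect, it collapses $b$ and $c$ to a single element coordinate-wise (via the injections $\phi_i$) and invokes Lemma~\ref{lem:burkard} on the resulting chain, which is both shorter and less error-prone than re-proving the Topkis-style exchange by hand as you propose.
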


\begin{proof}
  Let $\{b,c\}$ be the defect of $(f,g)$.
  We prove the statement for the case $f(b,c) < b, c$.
  The other case follows analogously.

  Every function obtained from $h$ by fixing a number of variables is clearly invariant under every multimorphism of $h$.

  For the opposite direction,
  assume that $h$ does not have the multimorphism $(f, g)$.
  We want to prove that there exist
  vectors $\bfsym{x}, \bfsym{y} \in D^k$ such that
  %assume that there are vectors $\bfsym{x}, \bfsym{y} \in D^k$ such that
  \begin{equation}
    \label{eqn:hineq}
  h(\bfsym{x})+h(\bfsym{y}) < h(f(\bfsym{x},\bfsym{y}))+h(g(\bfsym{x},\bfsym{y})),
  \end{equation}
  with $d_H(\bfsym{x},\bfsym{y})=2$, where $d_H$ denotes the \emph{Hamming distance} on $D^k$, \ie{} the number of coordinates in which $\bfsym{x}$ and $\bfsym{y}$ differ.
  
  Assume to the contrary that the result does not hold.
  We can then choose a function $h$ of minimal arity such that
  \[
  \min \{ \text{$d_H(\bfsym{x},\bfsym{y}) \mid \bfsym{x}$ and $\bfsym{y}$ satisfy (\ref{eqn:hineq})} \} > 2.
  \]
  The arity of $h$ must in fact be equal to the least $d_H(\bfsym{x},\bfsym{y})$;
  otherwise, we could obtain a function $h'$ of strictly smaller arity
  by fixing the variables in $h$ on which $\bfsym{x}$ and $\bfsym{y}$ agree.
  This would contradict the minimality in the choice of $h$.

  %Let $\bfsym{x}, \bfsym{y} \in D^k$ be any pair of vectors with $d_H(\bfsym{x},\bfsym{y}) = k$, satisfying (\ref{eqn:hineq}).
  We will first show that it is possible to choose $\bfsym{x}$ and $\bfsym{y}$ so that $\{x_i,y_i\} \neq \{b,c\}$ for all $i$.
  Let $k_1, k_2 \geq 1$ so that $k_1+k_2=k$, and let
  $(\bfsym{x_1}; \bfsym{x_2}), (\bfsym{y_1}; \bfsym{y_2}) \in D^k$ be two vectors with $d_H((\bfsym{x_1}; \bfsym{x_2}),(\bfsym{y_1}; \bfsym{y_2})) = k$, satisfying (\ref{eqn:hineq}).
  Now, assume that $(\bfsym{x_1};\bfsym{x_2}),(\bfsym{y_1};\bfsym{y_2}) \in \{b,c\}^k$. We then have
\[
  h(\bfsym{x_1};\bfsym{x_2})+
  h(\bfsym{y_1};\bfsym{y_2}) < 
  h(f(\bfsym{x_1},\bfsym{y_1});f(\bfsym{x_2};\bfsym{y_2}))+
  h(g(\bfsym{x_1},\bfsym{y_1});g(\bfsym{x_2},\bfsym{y_2}))
  \]
  Since both $d_H((\bfsym{x_1};\bfsym{x_2}),(\bfsym{x_1};\bfsym{y_2}))$
  and $d_H((\bfsym{y_1};\bfsym{y_2}),(\bfsym{x_1};\bfsym{y_2}))$
  are strictly less that the arity of $h$, we have by assumption
  \[
    h(\bfsym{x_1}; \bfsym{x_2})+
    h(\bfsym{x_1}; \bfsym{y_2}) \geq 
    h(\bfsym{x_1}; f(\bfsym{x_2},\bfsym{y_2}))+
    h(\bfsym{x_1}; g(\bfsym{x_2},\bfsym{y_2})), \text{ and}
  \]
  \[
    h(\bfsym{y_1}; \bfsym{y_2})+
    h(\bfsym{x_1}; \bfsym{y_2}) \geq 
    h(f(\bfsym{x_1},\bfsym{y_1}); \bfsym{y_2})+
    h(g(\bfsym{x_1},\bfsym{y_1}); \bfsym{y_2}).
  \]
  By combining these inequalities, we get
  \[
  h(\bfsym{x_1};f(\bfsym{x_2},\bfsym{y_2}))+
  h(f(\bfsym{x_1},\bfsym{y_1});\bfsym{y_2})+
  h(\bfsym{x_1};g(\bfsym{x_2},\bfsym{y_2}))+
  h(g(\bfsym{x_1},\bfsym{y_1});\bfsym{y_2})
  \]
  \[
  < h(f(\bfsym{x_1},\bfsym{y_1});f(\bfsym{x_2};\bfsym{y_2}))+
  h(\bfsym{x_1},\bfsym{y_2})+
  h(g(\bfsym{x_1},\bfsym{y_1});g(\bfsym{x_2},\bfsym{y_2}))+
  h(\bfsym{x_1},\bfsym{y_2}).
  \]
  Let 
  $\bfsym{x} = (\bfsym{x_1};f(\bfsym{x_2},\bfsym{y_2}))$,
  $\bfsym{y} = (f(\bfsym{x_1},\bfsym{y_1});\bfsym{y_2})$,
  $\bfsym{x}' = (\bfsym{x_1};g(\bfsym{x_2},\bfsym{y_2}))$, and
  $\bfsym{y}' = (g(\bfsym{x_1},\bfsym{y_1});\bfsym{y_2})$.
  By Proposition~\ref{prop:fg}(3), we have
  \[
  \{f(\bfsym{x},\bfsym{y}),g(\bfsym{x},\bfsym{y})\} = 
\{(\bfsym{x_1};\bfsym{y_2}),(f(\bfsym{x_1},\bfsym{y_1});f(\bfsym{x_2},\bfsym{y_2}))\}, \text{ and}
  \]
  \[
  \{f(\bfsym{x}',\bfsym{y}'),g(\bfsym{x}',\bfsym{y}')\} = 
\{(\bfsym{x_1};\bfsym{y_2}),(g(\bfsym{x_1},\bfsym{y_1});g(\bfsym{x_2},\bfsym{y_2}))\}.
  \]
  Hence, we can rewrite the previous inequality:
  \[
  h(\bfsym{x})+h(\bfsym{y})+h(\bfsym{x}')+h(\bfsym{y}')
  \]
  \[
  < h(f(\bfsym{x},\bfsym{y}))+h(g(\bfsym{x},\bfsym{y}))+h(f(\bfsym{x}',\bfsym{y}'))+h(g(\bfsym{x}',\bfsym{y}')).
  \]

  It follows that either the pair $\bfsym{x}$ and $\bfsym{y}$, 
  or the pair $\bfsym{x}'$ and $\bfsym{y}'$ satisfies 
  condition (\ref{eqn:hineq}).
  Furthermore, $\{x_i,y_i\} \neq \{b,c\}$ and $\{x'_i,y'_i\} \neq \{b,c\}$,
  for all $i$.

  If instead we have vectors $\bfsym{x}$ and $\bfsym{y}$
  satisfying (\ref{eqn:hineq}) such that $\{x_i,y_i\} \neq \{b,c\}$ for
  \emph{some}, but not all $i$, then we proceed as follows.
  Note that $\{x_i,y_i\} \neq \{b,c\}$ implies 
  $\{f(x_i,y_i),g(x_i,y_i)\} = \{x_i,y_i\}$.
  Without loss of generality, we may therefore assume that 
  $\bfsym{x} = (\bfsym{x_1};\bfsym{x_2}),
  \bfsym{y} = (\bfsym{y_1};\bfsym{y_2}) \in D^k$,
  with $\bfsym{x_1},\bfsym{y_1} \in D^{k_1}$ for $k_1 \geq 1$,
  are such that
  $f(\bfsym{x_1},\bfsym{y_1}) = \bfsym{x_1}$ and 
  $g(\bfsym{x_1},\bfsym{y_1}) = \bfsym{y_1}$,
  possibly by first exchanging $\bfsym{x}$ and $\bfsym{y}$.
  For these vectors, condition (\ref{eqn:hineq}) now reads:
  \[
  h(\bfsym{x_1};\bfsym{x_2})+h(\bfsym{y_1};\bfsym{y_2}) <
  h(\bfsym{x_1};f(\bfsym{x_2},\bfsym{y_2}))+
  h(\bfsym{y_1};g(\bfsym{x_2},\bfsym{y_2})).
  \]
  Due to the minimality of $h$'s arity, we must have
  \[
  h(\bfsym{y_1};\bfsym{x_2})+h(\bfsym{y_1};\bfsym{y_2}) \geq
  h(\bfsym{y_1};f(\bfsym{x_2},\bfsym{y_2}))+
  h(\bfsym{y_1};g(\bfsym{x_2},\bfsym{y_2})).
  \]
  We therefore have
  \[
  h(\bfsym{x_1};\bfsym{x_2})+h(\bfsym{y_1};f(\bfsym{x_2},\bfsym{y_2})) <
  h(\bfsym{x_1};f(\bfsym{x_2},\bfsym{y_2}))+h(\bfsym{y_1};\bfsym{x_2}).
  \]
  Let $\bfsym{x} = (\bfsym{x_1}; \bfsym{x_2})$ and
  $\bfsym{y} = (\bfsym{y_1};f(\bfsym{x_2},\bfsym{y_2}))$.
  By Proposition~\ref{prop:fg}(1), $f$ is a 2-semilattice, so we have
  $f(f(\bfsym{x_2},\bfsym{y_2}),\bfsym{x_2}) = f(\bfsym{x_2},\bfsym{y_2})$,
  and thus 
  \begin{multline*}
    (\bfsym{x_1};f(\bfsym{x_2};\bfsym{y_2})) =
    (\bfsym{x_1};f(f(\bfsym{x_2},\bfsym{y_2}),\bfsym{x_2})) \\
    = f((\bfsym{y_1};f(\bfsym{x_2},\bfsym{y_2})),(\bfsym{x_1};\bfsym{x_2})) =
    f(\bfsym{y},\bfsym{x}).
  \end{multline*}
  Furthermore, $g(f(\bfsym{x_2},\bfsym{y_2}),\bfsym{x_2}) = \bfsym{x_2}$, so
  \[
  (\bfsym{y_1};\bfsym{x_2}) = (\bfsym{y_1};g(f(\bfsym{x_2},\bfsym{y_2}),\bfsym{x_2})) = g((\bfsym{y_1};f(\bfsym{x_2},\bfsym{y_2})),(\bfsym{x_1};\bfsym{x_2})) = g(\bfsym{y},\bfsym{x}).
  \]
  We therefore conclude that
  \[
  h(\bfsym{x})+h(\bfsym{y}) < h(f(\bfsym{x},\bfsym{y}))+h(g(\bfsym{x},\bfsym{y})),
  \]
  so that condition (\ref{eqn:hineq}) holds for $\bfsym{x}$ and $\bfsym{y}$
  with $\{x_i,y_i\} \neq \{b,c\}$ for all $i$.
  From now on, we assume that $\bfsym{x}$ and $\bfsym{y}$ are chosen in this way.

  Let $D' = D \setminus \{b,c\} \cup \{B\}$.
  For each $i$, let $\phi_i : D' \rightarrow D$ be an injection which fixes $D \setminus \{b,c\}$, and sends $B$ to $b$ or $c$ in such a way that $\{x_i,y_i\} \subseteq \phi_i(D)$.
Let $(D'; f', g')$ be the chain defined by $x <' y$ if $x, y \neq B$ and $x < y$,
$x <' B$ if $x < b, c$, and $B <' y$ if $b, c < y$.
Then, $\phi_i(f'(x,y)) = f(\phi_i(x),\phi_i(y))$, and
  $\phi_i(g'(x,y)) = g(\phi_i(x),\phi_i(y))$, for all $i$.
 %% Define $f', g' : D^2 \rightarrow D$ via the relations
  %%$\phi_i(f'(x,y)) = f(\phi_i(x),\phi_i(y))$, and
  %%$\phi_i(g'(x,y)) = g(\phi_i(x),\phi_i(y))$, for all $i$.
  %%It is straightforward to check that $(D',f',g')$ is a chain.
  Let $\phi(\bfsym{z}) = (\phi_1(z_1), \dots, \phi_k(z_k))$, and
  let $\bfsym{x}', \bfsym{y}' \in (D')^k$ be such that $\phi(\bfsym{x}') = \bfsym{x}$ and $\phi(\bfsym{y}') = \bfsym{y}$.
  Define $h'(\bfsym{z}') = h(\phi(\bfsym{z}'))$. Then,
  \[
  h'(\bfsym{x}')+h'(\bfsym{y}') = h(\bfsym{x})+h(\bfsym{y}) < h(f(\bfsym{x},\bfsym{y}))+h(g(\bfsym{x},\bfsym{y}))
  \]
  \[
  = h'(f'(\bfsym{x}',\bfsym{y}'))+h'(g'(\bfsym{x}',\bfsym{y}')).
  \]
  It follows that $h'$ is not submodular on $(D',f',g')$.
  By Lemma~\ref{lem:burkard},
  there are elements $\bfsym{z}', \bfsym{w}' \in (D')^k$ with $d_H(\bfsym{z}',\bfsym{w}') = 2$ such that
  $h'(\bfsym{z}')+h'(\bfsym{w}') < h'(f'(\bfsym{z}',\bfsym{w}'))+h'(g'(\bfsym{z}',\bfsym{w}'))$. Hence,
  \[
  h(\phi(\bfsym{z}'))+h(\phi(\bfsym{w}')) = h'(\bfsym{z}')+h'(\bfsym{w}') < h'(f'(\bfsym{z}',\bfsym{w}'))+h'(g'(\bfsym{z}',\bfsym{w}'))
  \]
  \[
  = h(f(\phi(\bfsym{z}'),\phi(\bfsym{w}')))+h(g(\phi(\bfsym{z}'),\phi(\bfsym{w}'))),
  \]
  and $d_H(\phi(\bfsym{z}'),\phi(\bfsym{w}')) = 2$.
  This contradicts the original choice of $h$.  
  \qed
\end{proof}

\section{Endomorphisms, cores and constants}\label{sec:endo}

In this section, we show that under a natural condition, it is possible
to add constant unary relations to $\Gamma$ without changing the
computational complexity of the corresponding \wMCSP-problem.
Let $h : D^k \rightarrow \{0,1\}$.
A function $g : D \rightarrow D$ is called an \emph{endomorphism of $h$} if
for every $k$-tuple $(x_1, \dots, x_k) \in D^k$, it holds that
$h(x_1,\dots,x_k) = 0 \implies h(g(x_1),\dots,g(x_k)) = 0$.
The function $g$ is an endomorphism of $\Gamma$ if it is an endomorphism
of each function in $\Gamma$.
The set of all endomorphisms of $\Gamma$ is denoted by \Endo{\Gamma}.
A bijective endomorphism is called an \emph{automorphism}.
The automorphisms of $\Gamma$ form a group under composition.

\begin{definition}
\label{def:core1}
  A set of functions, $\Gamma$, is said to be a \emph{core} if all of its endomorphisms are injective.
\end{definition}

The idea is that if $\Gamma$ it not a core, then we can apply a non-injective endomorphism to every function in $\Gamma$, and obtain a polynomial-time equivalent problem on a strictly smaller domain.
We can then use results previously obtained for smaller domains~\cite{Cohen:etal:AI06,Jonsson:etal:sicomp06}.
Thus, we can restrict our attention to the case when $\Gamma$ is a core.

Jeavons \etal\cite{Jeavons:Cohen:Gyssens:Constraints1999} defined the notion of an \emph{indicator problem of order $k$} for CSPs.
We will exploit indicator problems of order 1 here,
adapted to the setting of \wMCSP.

\begin{definition}
  \label{def:indicator}
Let $\Gamma$ be a finite set of $\{0,1\}$-valued functions over $D$.
Let $X_D$ denote the set containing a variable $x_d$ for each $d \in D$, and
for $\bfsym{a} = (a_1, \dots, a_k) \in D^k$, let $\bfsym{x_a} = (x_{a_1}, \dots x_{a_k}) \in X_D^k$.
The indicator problem ${\cal IP}(\Gamma)$ is defined as the
instance of \wMCSP$(\Gamma)$ with variables $X_D$,
and sum $\sum_{f_i \in \Gamma} \sum_{\bfsym{a} \in f_i^{-1}(0)} f_i(\bfsym{x_a})$,
  where $k_i$ is the arity of the function $f_i$.
\end{definition}

%% Jeavons \etal\cite{Jeavons:Cohen:Gyssens:Constraints1999} defined the notion of an \emph{indicator problem of order $k$} for ordinary CSPs.
%% We will exploit indicator problems of order 1 here, 
%% adapted to the setting of \wMCSP.
%% Let $X_D$ denote the set containing a variable $x_d$ for each element $d \in D$.
%% \begin{definition}
%%   \label{def:indicator}
%%   Let $\Gamma$ be a finite set of $\{0,1\}$-valued functions over $D$.
%%   The indicator problem ${\cal IP}(\Gamma)$ is defined to be the
%%   instance of \wMCSP$(\Gamma)$ with variables $X_D$,
%%   and sum $\sum_{f_i \in \Gamma} \sum_{{\bf a} \in X_D^{k_i}} f_i({\bf a})$,
%%   where $k_i$ is the arity of the function $f_i$.
%% \end{definition}

Let $\iota : D \rightarrow X_D$ be the function defined by $\iota(d) = x_d$.
Theorem~3.5 in \cite{Jeavons:Cohen:Gyssens:Constraints1999} implies the following property of ${\cal IP}(\Gamma)$:

\begin{proposition}
  For any finite set of functions, $\Gamma$, the set of optimal solutions to
  ${\cal IP}(\Gamma)$ is equal to $\{ \sigma : X_D \rightarrow D \mid \sigma \circ \iota \in \Endo{\Gamma} \}$.
\end{proposition}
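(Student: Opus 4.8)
The plan is to give a direct, self-contained computation of the measure of an arbitrary solution, which is more transparent than invoking Theorem~3.5 of~\cite{Jeavons:Cohen:Gyssens:Constraints1999}. The key observation is that $\iota : D \rightarrow X_D$ is a bijection: it sends each $d$ to the distinct variable $x_d$, and $X_D$ consists of exactly these variables. Consequently every solution $\sigma : X_D \rightarrow D$ corresponds bijectively to the map $g := \sigma \circ \iota : D \rightarrow D$ given by $g(d) = \sigma(x_d)$, and I would run the whole argument through this correspondence.

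First I would rewrite the measure of $\sigma$. For $\bfsym{a} = (a_1,\dots,a_k) \in D^k$ we have $\sigma(\bfsym{x_a}) = (\sigma(x_{a_1}),\dots,\sigma(x_{a_k})) = (g(a_1),\dots,g(a_k))$, i.e. $g$ applied to $\bfsym{a}$ component-wise. Substituting this into the sum defining ${\cal IP}(\Gamma)$ yields
\[
m(\sigma) = \sum_{f_i \in \Gamma} \ \sum_{\bfsym{a} \in f_i^{-1}(0)} f_i(g(a_1),\dots,g(a_k)).
\]
Since each $f_i$ is $\{0,1\}$-valued, every summand lies in $\{0,1\}$, so $m(\sigma) \geq 0$, with $m(\sigma) = 0$ if and only if $f_i(g(a_1),\dots,g(a_k)) = 0$ for every $f_i \in \Gamma$ and every $\bfsym{a}$ with $f_i(\bfsym{a}) = 0$. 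This is precisely the defining condition for $g$ to be an endomorphism of each function in $\Gamma$, that is, $g = \sigma \circ \iota \in \Endo{\Gamma}$.

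Finally I would observe that this lower bound of $0$ is attained: the identity map on $D$ is always an endomorphism, and it equals $\sigma \circ \iota$ for the solution defined by $\sigma(x_d) = d$. Hence ${\sf Opt}({\cal IP}(\Gamma)) = 0$, and the optimal solutions are exactly the $\sigma$ with $m(\sigma) = 0$, which by the equivalence above are exactly those with $\sigma \circ \iota \in \Endo{\Gamma}$. The argument presents no genuine obstacle; the only points requiring a moment of care are the bijectivity of $\iota$, which legitimises the correspondence $\sigma \leftrightarrow g$, and the existence of at least one endomorphism (the identity), which guarantees that the optimum equals $0$ rather than being merely bounded below by it.
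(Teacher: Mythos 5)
Your proof is correct, and it takes a different route from the paper: the paper offers no computation at all, instead deriving the proposition as a consequence of Theorem~3.5 of Jeavons \etal\cite{Jeavons:Cohen:Gyssens:Constraints1999} (the general indicator-problem theorem for CSPs, adapted here to the \wMCSP{} setting), whereas you verify the statement from scratch. Your chain of observations is sound: the bijection $\sigma \leftrightarrow g = \sigma \circ \iota$ is legitimate because $X_D$ consists of exactly one variable $x_d$ per $d \in D$; the measure rewrites as $\sum_{f_i \in \Gamma} \sum_{\bfsym{a} \in f_i^{-1}(0)} f_i(g(a_1),\dots,g(a_{k_i}))$; since every $f_i$ is $\{0,1\}$-valued this vanishes precisely when $g$ preserves every zero-set, \ie{} when $g \in \Endo{\Gamma}$; and the identity endomorphism shows the optimum is exactly $0$, so ``optimal'' and ``measure zero'' coincide. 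That last point is the one place where care is genuinely needed, and you handle it explicitly. What each approach buys: your argument is self-contained and makes transparent that the proposition rests only on the $\{0,1\}$-valuedness of $\Gamma$ (indeed, it would fail as stated for general \Qnn-valued languages, where zero cost need not be attainable), while the paper's citation is shorter and situates the construction within the established theory of indicator problems of order $k$, which connects to polymorphisms rather than just endomorphisms and is the natural reference point for readers familiar with the CSP literature.
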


The proof of the following result follows the lines of similar results for related problems, such as the CSP decision problem.

\begin{proposition}
\label{prop:constants}
  Let $\Gamma$ be a core over $D$.
  Then,
  \wMCSP$(\Gamma, {\cal C}_D)$ is polynomial-time reducible to
  \wMCSP$(\Gamma)$.
\end{proposition}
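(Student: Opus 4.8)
The plan is to route the argument through the indicator problem and the automorphism group of $\Gamma$, rather than attempting to weighted-pp-define the constants one at a time. The latter cannot work in general: if $\alpha$ is a nontrivial automorphism of $\Gamma$, then---since $\Auto{\Gamma}$ is a group, so that $\alpha^{-1}$ is again an endomorphism, and since every $h \in \Gamma$ is $\{0,1\}$-valued---$\alpha$ preserves the value of every $h \in \Gamma$, hence the measure of every \wMCSP$(\Gamma)$-instance. Consequently $\alpha$ permutes optimal solutions, and the projection of the optimal-solution set onto a single variable can never equal a singleton $\{d\}$ with $\alpha(d)\neq d$. A constant can therefore only be forced up to an automorphism, which must then be undone globally.

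First I would record the $|D|$-ary relation $R = \{ (\alpha(d))_{d \in D} \mid \alpha \in \Auto{\Gamma} \}$ and check that $R \in \langle \Gamma, \emptyset \rangle_w$. Because $\Gamma$ is a core, every endomorphism is injective, hence bijective on the finite set $D$, so $\Endo{\Gamma} = \Auto{\Gamma}$. By the proposition characterising the optimal solutions of ${\cal IP}(\Gamma)$ as those $\sigma$ with $\sigma\circ\iota \in \Endo{\Gamma}$, and using $\sigma(x_d) = (\sigma\circ\iota)(d)$, the projection $\pi_{\bfsym{x}}\optsol({\cal IP}(\Gamma))$ onto $\bfsym{x} = (x_d)_{d\in D}$ is exactly $R$; thus $R$ has a weighted pp-definition in $(\Gamma,\emptyset)$.

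The core step is to reduce \wMCSP$(\Gamma, {\cal C}_D)$ to \wMCSP$(\Gamma, \{R\})$. Given an instance $\tilde{\cal I}$, I would introduce fresh variables $\bfsym{w} = (w_d)_{d\in D}$, add the single crisp application $(\bfsym{w}; R)$, delete every constant constraint $(y; \{d\})$, and substitute $w_d$ for every occurrence of each such pinned variable $y$ in the soft part; if some variable is pinned to two distinct constants, $\tilde{\cal I}$ is infeasible and I output a fixed infeasible instance. In any feasible solution $\sigma$ of the resulting instance we have $(\sigma(w_d))_{d} \in R$, so $\sigma(w_d) = \alpha(d)$ for some $\alpha \in \Auto{\Gamma}$ and every formerly pinned $y$ takes value $\alpha(d)$. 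Applying $\alpha^{-1}$ coordinatewise to $\sigma$ restores each such $y$ to $d$ and, since $\alpha^{-1}$ preserves the value of every $h \in \Gamma$, leaves the soft measure unchanged; this yields a feasible solution of $\tilde{\cal I}$ of equal measure, computable in polynomial time. Conversely, extending any feasible solution of $\tilde{\cal I}$ by the identity automorphism on $\bfsym{w}$ (which lies in $R$) gives a feasible solution of the new instance of the same measure. Hence the two optima coincide and optimal solutions transfer. Since in addition $\Gamma \subseteq \langle \Gamma, \emptyset \rangle_{fn}$ (each $f \in \Gamma$ is expressible by a single application of $f$), Proposition~\ref{prop:1} gives \wMCSP$(\Gamma, \{R\}) \leq_p$ \wMCSP$(\Gamma)$, and composing the two reductions proves the statement.

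The main obstacle is exactly the phenomenon isolated in the first paragraph: constants cannot be pinned outright but only relative to an automorphism. The substantive verification is that the single global correction by $\alpha^{-1}$ simultaneously repairs all constant constraints and preserves the measure---the latter relying on automorphisms being value-preserving on the $\{0,1\}$-valued functions, which in turn uses that $\Auto{\Gamma}$ is closed under inverses.
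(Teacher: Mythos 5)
Your proof is correct and follows essentially the same route as the paper's: both define $R$ as the projection of the optimal solutions of ${\cal IP}(\Gamma)$ (your automorphism-orbit description of $R$ coincides with this via the cited characterisation), replace each pinned variable by the corresponding indicator variable under a single crisp application of $R$, repair an arbitrary solution by composing with the inverse automorphism, and finish with Proposition~\ref{prop:1}. The only differences are expository, e.g.\ your explicit verification that automorphisms preserve the measure of $\{0,1\}$-valued instances, which the paper leaves implicit.
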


\begin{proof}
  Let ${\cal J}$ be an instance of \wMCSP$(\Gamma, {\cal C}_D)$.
  The only way for ${\cal J}$ to be unsatisfiable is if it contains two
  contradicting constraint applications $(y; \{a\})$ and $(y; \{b\})$,
  with $a \neq b$.
  This is easily checked in polynomial time.

  Otherwise,
  %let ${\cal I}$ be an indicator instance of $\Gamma$, 
  Let $\bfsym{x}$ be a list of the variables $X_D$, and
  let $R = \pi_{\bfsym{x}} \optsol({\cal IP}(\Gamma))$.
  %Every tuple in $R$ corresponds to an automorphism of $\Gamma$, since $\Gamma$ is a core.
  Now modify ${\cal J}$ to an instance ${\cal J'}$ of \wMCSP$(\Gamma, R)$ as follows.
  Add the variables in $X_D$ to $V({\cal J'})$, and
  add the constraint application $(\bfsym{x}; R)$.
  Furthermore, remove each constraint $(y; \{a\})$, and replace $y$ by $x_a$ throughout the instance.
  Let $\sigma'$ be an optimal solution to ${\cal J'}$.
  Since $\Gamma$ is a core, $g = \sigma'|_{X_D} \circ \iota$ is an automorphism of $\Gamma$, and so is its inverse, $g^{-1}$.
  Hence, 
  $\sigma = g^{-1} \circ \sigma'$ is also an optimal solution to ${\cal J'}$. From $\sigma$ we easily recover a solution to ${\cal J}$ of equal measure,
  and conversely, any solution to ${\cal J}$ can be interpreted as a solution
  to ${\cal J'}$.
  It follows that we have a reduction from \wMCSP$(\Gamma,{\cal C}_D)$ to
  \wMCSP$(\Gamma,R)$.
  By Proposition~\ref{prop:1}, we finally have a reduction from
  \wMCSP$(\Gamma,R)$ to \wMCSP$(\Gamma)$. %, which finishes the proof.
  \qed
\end{proof}

For $a,b \in D$, let $e_{ab} : D \rightarrow D$ denote the function $e_{ab}(a) = b$ and $e_{ab}(x) = x$ for $x \neq a$.

\begin{lemma}
\label{lem:noend}
  If $e_{ab} \not\in \Endo{\Gamma}$, then
  $\langle \Gamma, {\cal C}_D \rangle_{fn}$ contains a unary $\{0,1\}$-valued function
  $u$ such that $u(a) = 0$ and $u(b) = 1$.
\end{lemma}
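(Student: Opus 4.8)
The plan is to witness the failure of $e_{ab}$ to be an endomorphism by a single cost function evaluated on a single ``bad'' tuple, and then turn that tuple into a freeze-and-coalesce gadget that expresses the required unary function.

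First I would extract the witness. Since $e_{ab} \notin \Endo{\Gamma}$, by definition there is a function $h \in \Gamma$ of some arity $k$ and a tuple $\bfsym{t} = (t_1,\dots,t_k) \in D^k$ with $h(t_1,\dots,t_k) = 0$ but $h(e_{ab}(t_1),\dots,e_{ab}(t_k)) = 1$ (all functions in $\Gamma$ are $\{0,1\}$-valued, so a violated implication yields value $1$). Write $\bfsym{t}' = (e_{ab}(t_1),\dots,e_{ab}(t_k))$ and $S = \{\, i \mid t_i = a \,\}$. Because $e_{ab}$ is the identity on $D \setminus \{a\}$, the tuples $\bfsym{t}$ and $\bfsym{t}'$ agree on every position outside $S$, while $t_i = a$ and $t_i' = b$ for $i \in S$. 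Note that $S \neq \emptyset$: otherwise $\bfsym{t}' = \bfsym{t}$ and $h(\bfsym{t}') = h(\bfsym{t}) = 0$, contradicting the choice of the witness.

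Next I would build the gadget. Take a single variable $w$ together with a fresh variable $z_i$ for each $i \notin S$, and form the instance ${\cal I}$ of \wMCSP$(\Gamma, {\cal C}_D)$ whose crisp constraints are $(z_i; \{t_i\})$ for all $i \notin S$ and whose single cost term is $h$ applied to the $k$-variable list that places $w$ in every position $i \in S$ and $z_i$ in every position $i \notin S$. This instance is satisfiable for every value assigned to $w$, so the expressed function ${\cal I}_w$ of Definition~\ref{def:express} is total. For each $d \in D$, pinning $w = d$ leaves a unique feasible assignment (every $z_i$ is forced to $t_i$), whose measure is $h(\tau_d)$, where $\tau_d$ agrees with $\bfsym{t}$ outside $S$ and carries the entry $d$ on $S$. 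Hence ${\cal I}_w(d) = h(\tau_d)$; in particular $\tau_a = \bfsym{t}$ yields ${\cal I}_w(a) = 0$ and $\tau_b = \bfsym{t}'$ yields ${\cal I}_w(b) = 1$.

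Finally, since $h$ takes values in $\{0,1\}$, so does ${\cal I}_w$; thus $u := {\cal I}_w$ is a unary $\{0,1\}$-valued function in $\langle \Gamma, {\cal C}_D \rangle_{fn}$ with $u(a) = 0$ and $u(b) = 1$, as required. I do not anticipate a genuine obstacle here: the construction is a direct gadget, in the spirit of standard core reductions for CSPs. The only two points needing a little care are checking that $S \neq \emptyset$ (so that $w$ actually occurs and genuinely separates $a$ from $b$) and observing that the expressed function inherits the $\{0,1\}$-range of $h$ rather than being merely \Qnn-valued.
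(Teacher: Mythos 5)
Your proof is correct and takes essentially the same approach as the paper: the paper also pins every position of the witnessing tuple that is not equal to $a$ via constant unary relations (using variables $x_d$, $d \in D$, with constraints $(x_d;\{d\})$) and reads off the expressed unary function at the free variable standing for $a$. Your use of fresh variables $z_i$ instead of one variable per domain element, and your explicit checks that $S \neq \emptyset$ and that the expressed function is total and $\{0,1\}$-valued, are only cosmetic refinements of the paper's construction.
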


\begin{proof}
  Let $h : D^k \rightarrow \{0,1\}$ be a function in $\Gamma$, and
  $a_1, \dots, a_k \in D$ be elements such that
  $h(a_1, \dots, a_k) = 0$, but $h(e_{ab}(a_1), \dots, e_{ab}(a_k)) = 1$.

  Let ${\cal J}$ be the instance of \wMCSP$(\Gamma, {\cal C}_D)$ with
  variables $V({\cal J}) = X_D$,
  sum $S({\cal J}) = h(x_{a_1},\dots,x_{a_k})$,
  and constraint applications $(x_d; \{d\})$ for $d \neq a$.
  Then, $u = {\cal J}_{x_a}$ is a unary function in $\langle \Gamma, {\cal C}_D \rangle_{fn}$, with $u(a) = 0$ and $u(b) = 1$.
  \qed
\end{proof}

%% \begin{corollary}
%% \label{cor:definable}
%%   Let $\Gamma$ be a core over $D$.
%%   Then,
%%   for each $d \in D$, there exists a function $h_d \in \langle \Gamma, {\cal C}_D \rangle_{fn}$ such that $\argmin h_d = d$.
%% \end{corollary}

%% \begin{proof}
%%   For each $a \in D \setminus \{d\}$, let 
%%   $f_{a} \in \langle \Gamma, {\cal C}_D \rangle_{fn}$ be such that
%%   $f_{a}(d) = 0$ and $f_{a}(a) = 1$.
%%   The proposition now follows with $u_d = \sum_a f_{a}$.
%% \qed
%% \end{proof}

\section{A Graph of Partial Multimorphisms}\label{sec:graph}

Let $\Gamma$ be a core over $D$.
In this section, we define a graph $G = (V,E)$ which
encodes either the \NP-hardness of \wMCSP$(\Gamma, {\cal C}_D)$
or provides a multimorphism for the binary functions in $\langle \Gamma, {\cal C}_D \rangle_{fn}$.
The graph is a variation of a graph defined by Kolmogorov and \v{Z}ivn{\'y}~\cite{Zivny:1}, with changes made to accommodate for additional multimorphisms.

Let $V$ be the set of partial functions $(f,g) : D^2 \rightarrow D^2$ such that
\begin{itemize}
\item
  $f$ and $g$ are defined on a subset $\{a,b\} \subseteq D$; %, $a \neq b$;
\item
  $f$ and $g$ are idempotent and commutative; and
\item
  $\{f(a,b),g(a,b)\} = \{a,b\}$ or $\{f(a,b),g(a,b)\} \cap \{a,b\} = \emptyset$.
\end{itemize}

We do allow $a = b$ in the definition of $V$, \ie{} there is precisely one vertex for each singleton in $D$.
For $a,b \in D$, we let $G[a,b]$ denote the graph induced by the set of vertices defined on $\{a,b\}$.
%Note that $G[a]$ contains exactly one vertex, the pair of partial functions which maps $(a,a)$ to $a$.
%A vertex $(f,g) \in G[a,b]$, $a \neq b$ such that $\{f(a,b),g(a,b)\} = \{a,b\}$ will be called \emph{conservative}.
Let $(f_1,g_1) \in G[a_1,b_1]$ and $(f_2,g_2) \in G[a_2,b_2]$. % \in G_\$ be members of $V$ with with ${\rm dom} (f_i,g_i) = \{a_i,b_i\}$.
There is an edge in $E$ between $(f_1,g_1)$ and $(f_2,g_2)$ if
% $\{a_1,b_1\} = \{a_2,b_2\}$, or if 
there is a binary function $h \in \langle \Gamma, {\cal C}_D \rangle_{fn}$ such that
\begin{multline}
\label{eq:basic}
\min \{ h(a_1,a_2)+h(b_1,b_2), h(a_1,b_2)+h(b_1,a_2) \} < \\ h(f_1(a_1,b_1),f_2(a_2,b_2))+h(g_1(a_1,b_1),g_2(a_2,b_2)).
\end{multline}
The following lemma shows how $G$ can be used to construct multimorphisms of binary functions in $\langle \Gamma, {\cal C}_D \rangle_{fn}$:

\begin{lemma}
\label{lem:indset}
%Let $\Gamma$ be a set of binary functions.
  Let $I \subseteq V$ be an independent set in $G$ with
  precisely one vertex $(f_{\{x,y\}}, g_{\{x,y\}})$ from each subgraph $G[x,y]$.
  Then, every binary function $h \in \langle \Gamma, {\cal C}_D \rangle_{fn}$
  has the multimorphism $(f, g)$ defined by
  $f(x,y) = f_{\{x,y\}}(x,y)$ and
  $g(x,y) = g_{\{x,y\}}(x,y)$.
\end{lemma}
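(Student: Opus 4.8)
The plan is to show that $(f,g)$ defined pointwise from the independent set $I$ is a multimorphism of every binary $h \in \langle \Gamma, {\cal C}_D \rangle_{fn}$. The key observation is that the edge condition~\eqref{eq:basic} is precisely the negation of the multimorphism inequality, so the absence of edges within $I$ should translate directly into the multimorphism property holding. First I would fix an arbitrary binary $h \in \langle \Gamma, {\cal C}_D \rangle_{fn}$ and two input tuples $(a_1,a_2)$ and $(b_1,b_2)$, and try to verify
\[
h(a_1,a_2)+h(b_1,b_2) \geq h(f(a_1,b_1),f(a_2,b_2))+h(g(a_1,b_1),g(a_2,b_2)),
\]
where by construction $f(a_1,b_1)=f_{\{a_1,b_1\}}(a_1,b_1)$ and similarly for the other three applications.

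The main subtlety — and what I expect to be the chief obstacle — is a notational/pairing mismatch between how the vertices in $I$ act and how a multimorphism must act on a pair of full tuples. A vertex $(f_{\{x,y\}},g_{\{x,y\}})$ is a \emph{partial, commutative, idempotent} operation on the unordered pair $\{x,y\}$, whereas applying the multimorphism $(f,g)$ to tuples $(a_1,a_2),(b_1,b_2)$ acts coordinatewise, pairing $a_1$ with $b_1$ in coordinate $1$ and $a_2$ with $b_2$ in coordinate $2$. So I would argue as follows: let $(f_1,g_1) := (f_{\{a_1,b_1\}},g_{\{a_1,b_1\}}) \in G[a_1,b_1]$ be the unique $I$-vertex on the first coordinate's pair, and $(f_2,g_2) := (f_{\{a_2,b_2\}},g_{\{a_2,b_2\}}) \in G[a_2,b_2]$ the one on the second. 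Since $I$ is an independent set, there is \emph{no} edge between $(f_1,g_1)$ and $(f_2,g_2)$, which by the definition of $E$ means that no binary $h \in \langle \Gamma, {\cal C}_D \rangle_{fn}$ can satisfy~\eqref{eq:basic} for this particular pair of vertices. In particular our fixed $h$ fails~\eqref{eq:basic}, giving
\[
\min\{h(a_1,a_2)+h(b_1,b_2),\, h(a_1,b_2)+h(b_1,a_2)\} \geq h(f_1(a_1,b_1),f_2(a_2,b_2))+h(g_1(a_1,b_1),g_2(a_2,b_2)).
\]

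Now I would close the argument by reconciling this inequality with the multimorphism definition. The right-hand side above is exactly $h(f(a_1,b_1),f(a_2,b_2))+h(g(a_1,b_1),g(a_2,b_2))$, since $f$ and $g$ restrict to $f_1,g_1$ and $f_2,g_2$ on the respective pairs. For the left-hand side, I would note that $h(a_1,a_2)+h(b_1,b_2)$ is one of the two terms in the $\min$, hence dominates it, which yields the required multimorphism inequality on the pair $(a_1,a_2),(b_1,b_2)$. Two bookkeeping points deserve care. First, commutativity of $f$ and $g$ (guaranteed by the defining conditions of $V$) means $f(a_i,b_i)=f(b_i,a_i)$, so the value of $f$ on a coordinate depends only on the unordered pair $\{a_i,b_i\}$ and the vertex $(f_{\{a_i,b_i\}},g_{\{a_i,b_i\}})$ is well-defined regardless of input order; this is what lets the coordinatewise action of $(f,g)$ agree with the vertex operations. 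Second, the degenerate case $a_i=b_i$ is handled by the single vertex $I$ contains on each diagonal pair $G[a_i,a_i]$, where idempotency forces $f(a_i,a_i)=g(a_i,a_i)=a_i$; here I would check that~\eqref{eq:basic} and the inequality above specialise correctly. Having dispatched these, the inequality holds for every choice of inputs, so $(f,g)$ is a multimorphism of every binary $h \in \langle \Gamma, {\cal C}_D \rangle_{fn}$, as claimed.
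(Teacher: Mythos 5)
Your proof is correct and is essentially the paper's argument: the paper simply phrases it contrapositively (a violation of the multimorphism inequality for some pair of tuples would force an edge between the two corresponding $I$-vertices, contradicting independence), while you run the same one-step implication directly, noting that $h(a_1,a_2)+h(b_1,b_2)$ dominates the $\min$ in condition~(\ref{eq:basic}). The bookkeeping points you raise (commutativity making the vertex action well-defined, idempotency on the diagonal vertices) are implicit in the paper's version and do not change the substance.
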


\begin{proof}
  Assume to the contrary that $(f,g)$ is not a multimorphism of $h$.
  %Then, there is a function $h \in \Gamma$, and tuples $(a_1,a_2), (b_1,b_2) \in D^2$ such that
  Then, there are tuples $(a_1,a_2), (b_1,b_2) \in D^2$ such that
  \begin{equation*}
    h(a_1,a_2)+h(b_1,b_2) < h(f(a_1,b_1),f(a_2,b_2))+h(g(a_1,b_1),g(a_2,b_2)).
  \end{equation*}
  But this would imply that $\{(f_{\{a_1,b_1\}},g_{\{a_1,b_1\}}),(f_{\{a_2,b_2\}},g_{\{a_2,b_2\}})\} \in E$, which is a contradiction since $I$ is an independent set.
\qed
\end{proof}

For distinct $a, b \in D$, let $\consv{ab}$ denote the vertex $(f,g) \in G[a,b]$ such that $f(a,b) = f(b,a) = a$ and $g(a,b) = g(b,a) = b$.
We say that such a vertex is \emph{conservative}.
Let $\ST$ denote the set of all conservative vertices, and let
$G' = G[\ST]$ be the subgraph of $G$ induced by $\ST$.
Let $\ST_\Gamma \subseteq V'$ be the set of vertices $\consv{xy}$ such that $\{x,y\} \in \langle \Gamma, {\cal C}_D \rangle_{w}$. %$u_{xy} \in {\cal U}_\Gamma$.
%Let $\ST$ be the set of two-element subsets $\{x,y\} \subseteq D$ such that $\{x,y\} \in \langle \Gamma, {\cal C}_D \rangle_{w}$.
For conservative vertices $\consv{a_1 b_1}$ and $\consv{a_2 b_2}$, condition (\ref{eq:basic}) reduces to:
\begin{equation}
\label{eq:cbasic}
  h(a_1,b_2)+h(b_1,a_2) < h(a_1,a_2)+h(b_1,b_2).
\end{equation}

For a vertex $x = (f,g)$, we let $\overline{x}$ denote the vertex $(g,f)$.
It follows immediately from $(\ref{eq:basic})$ that
$\{x,y\} \in E$ iff $\{\overline{x},\overline{y}\} \in E$.
Next, we prove a number of basic properties of the graph $G$.

\begin{lemma}
  \label{lem:niceh}
  If $\{\consv{a_1 b_1}, \consv{a_2 b_2}\} \in E$, then there exists a function $h \in \langle \Gamma, {\cal C}_D \rangle_{fn}$ such that
  $h(a_1,b_2) = h(b_1,a_2) < h(a_1,a_2) = h(b_1,b_2)$.
\end{lemma}

%\noindent
%{\bf Proof in appendix}

\begin{proof} %[of Lemma~\ref{lem:niceh}]
  By definition of $G$, we can find $f \in \langle \Gamma, {\cal C}_D \rangle_{fn}$ such that
  \begin{equation}
    \label{eqn:a1b2}
    f(a_1,b_2) + f(b_1,a_2) < f(a_1,a_2) + f(b_1,b_2).
  \end{equation}
  Since $\Gamma$ is assumed to be a core, Lemma~\ref{lem:noend} is applicable for all choices of $a$ and $b$.
  Using the unary functions obtained from this lemma, it is possible to ensure that the inequality in (\ref{eqn:a1b2}) holds for a function $f$ with $f(a_1,b_2) = f(a_2,b_1)$.
  We will also assume that $f(a_1,a_2) \geq f(b_1,b_2)$ so that
  $\gamma = (f(a_1,a_2)-f(b_1,b_2))/2 \geq 0$. 
  %The case when $\gamma < 0$ follows by a symmetrical argument.
  Let $f_{a_1}$ and $f_{a_2}$ be unary functions such that $f_{a_1}(a_1) < f_{a_1}(b_1)$ and $f_{a_2}(a_2) < f_{a_2}(b_2)$, and
  let $\alpha = f_{a_1}(b_1)-f_{a_1}(a_1)$ and $\beta = f_{a_2}(b_2)-f_{a_2}(a_2)$, and note that $\alpha, \beta > 0$.  
  Now, define
  \[
  h(x,y) = f(x,y)+\gamma \left( \alpha^{-1} f_{a_1}(x) + \beta^{-1} f_{a_2}(y) \right).
  \]
  The function $h$ satisfies the inequality $h(a_1,b_2)+h(b_1,a_2) < h(a_1,a_2)+h(b_1,b_2)$,
  and furthermore,
  \begin{multline*}
    h(a_1,a_2)-h(b_1,b_2) = f(a_1,a_2)-f(b_1,b_2) + \\ 
    + \gamma \left( \frac{f_{a_1}(a_1)-f_{a_1}(b_1)}{\alpha} + \frac{f_{a_2}(a_2)-f_{a_2}(b_2)}{\beta} \right ) = \\
  = f(a_1,a_2)-f(b_1,b_2) + \gamma (-\alpha/\alpha-\beta/\beta) = 0,
  \end{multline*}
  and
  \begin{multline*}
    h(a_1,b_2)-g(b_1,a_2) = f(a_1,b_2)-f(b_1,a_2) + \\
    + \gamma \left( \frac{f_{a_1}(a_1)-f_{a_1}(b_1)}{\alpha} + \frac{f_{a_2}(b_2)-f_{a_2}(a_2)}{\beta} \right ) = \\
    = 0 + \gamma (-\alpha/\alpha + \beta/\beta) = 0.
  \end{multline*}
  The lemma follows.
  \qed
\end{proof}

\begin{lemma}
\label{gprop}
\begin{enumerate}
\item\label{item:xyz}
  Let $x$, $y$, and $z$ be conservative vertices, with $\{x,y\}$, $\{y,z\} \in E$, and assume that $y \in \ST_\Gamma$.
  Then, $\{x,\overline{z}\} \in E$.
\item\label{item:paths}
  For $n \geq 2$, let $(x_1, \ldots, x_n)$ be a path of conservative vertices in $G$, with $x_2, \dots, x_{n-1} \in \ST_\Gamma$.
  If $n$ is even, then $\{x_1, x_n\} \in E$, otherwise $\{x_1, \overline{x_n}\} \in E$.
\item\label{item:loop}
  For $n \geq 3$, let $(x_1, \ldots, x_n, x_1)$ be an odd cycle of conservative vertices in $G$, with $x_2, \dots, x_n \in \ST_\Gamma$.
  Then, there is a loop on $x_1$.
%\item\label{item:fgx}
%  Let $(f,g)$ be the vertex with $f(b,c) = a$ and $g(b,c) = d$.
%  If $\{(f,g), \consv{xy}\} \in E$, then
%  either $\{\consv{ab},\consv{xy}\} \in E$ or $\{\consv{ac},\consv{xy}\} \in E$.
\item\label{item:prop1}
  If $\{\consv{a_1 b_1}, \consv{a_2 b_2}\} \in E$, then for each element $x \neq a_2, b_2$, either $\{\consv{a_1 b_1}, \consv{a_2 x}\} \in E$ or $\{\consv{a_1 b_1}, \consv{x b_2}\} \in E$.
\item\label{item:trans2}
  If $\{\consv{xy},\consv{yx}\}, \{\consv{yz},\consv{zy}\} \in E$ and $\{\consv{xy}, \consv{yz}\} \not\in E$, 
  then $\{\consv{xy}, \consv{zx}\}, \{\consv{yz}, \consv{zx}\} \in E$.
\item\label{item:abac}
  If there is a loop on $\consv{xz}$, but $\consv{xy}$ and $\consv{yz}$ are loop-free, then $\{\consv{xy},\consv{yz}\} \in E$.
\end{enumerate}
\end{lemma}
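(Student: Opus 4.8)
The plan is to work throughout with the reduced edge criterion~(\ref{eq:cbasic}): for conservative vertices $\consv{c_1 d_1}$ and $\consv{c_2 d_2}$ we have $\{\consv{c_1 d_1},\consv{c_2 d_2}\}\in E$ exactly when some binary $h\in\langle\Gamma,{\cal C}_D\rangle_{fn}$ satisfies $h(c_1,d_2)+h(d_1,c_2)<h(c_1,c_2)+h(d_1,d_2)$. I will also use that $\overline{\consv{ab}}=\consv{ba}$, that $\overline{\consv{ab}}\in\ST_\Gamma$ whenever $\consv{ab}\in\ST_\Gamma$ (both are defined on $\{a,b\}$), and the symmetry $\{u,v\}\in E\iff\{\overline u,\overline v\}\in E$ noted before the lemma. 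Parts~(2),~(3),~(5) will be formal consequences of~(1) and~(4), so the real work is in~(1),~(4), and~(6).

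For part~(1) write $x=\consv{a_1 b_1}$, $y=\consv{a_2 b_2}$, $z=\consv{a_3 b_3}$. Lemma~\ref{lem:niceh} applied to the two edges produces normalised witnesses $h_1,h_2\in\langle\Gamma,{\cal C}_D\rangle_{fn}$ with $h_1(a_1,b_2)=h_1(b_1,a_2)=p<q=h_1(a_1,a_2)=h_1(b_1,b_2)$ and $h_2(a_2,b_3)=h_2(b_2,a_3)=r<s=h_2(a_2,a_3)=h_2(b_2,b_3)$. The hypothesis $y\in\ST_\Gamma$ means precisely that $\{a_2,b_2\}\in\langle\Gamma,{\cal C}_D\rangle_w$, which lets me glue $h_1$ and $h_2$ at a middle variable pinned to $\{a_2,b_2\}$: I set
\[
h(u,w)=\min_{v\in\{a_2,b_2\}}\bigl(h_1(u,v)+h_2(v,w)\bigr),
\]
which lies in $\langle\Gamma,{\cal C}_D\rangle_{fn}$ by closure of expressibility (Proposition~\ref{prop:1} applied to the instance carrying the $h_1$- and $h_2$-gadgets plus the crisp constraint $(v;\{a_2,b_2\})$). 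Evaluating the four relevant entries and using $p<q$, $r<s$ gives $h(a_1,a_3)=h(b_1,b_3)=p+r$ and $h(a_1,b_3)=h(b_1,a_3)=\min\{p+s,q+r\}$, where $p+r<\min\{p+s,q+r\}$; this is exactly criterion~(\ref{eq:cbasic}) for $\{\consv{a_1 b_1},\consv{b_3 a_3}\}=\{x,\overline z\}$. Part~(2) then follows by induction on $n$: the prefix $(x_1,\dots,x_{n-1})$ gives, by the inductive hypothesis, an edge to $x_{n-1}$ or $\overline{x_{n-1}}$ according to parity, and combining it with the path edge $\{x_{n-1},x_n\}$ (flipped via the $\overline{\cdot}$-symmetry when needed) through part~(1), whose middle vertex $x_{n-1}$ or $\overline{x_{n-1}}$ lies in $\ST_\Gamma$, yields the required edge to $x_n$ or $\overline{x_n}$. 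Part~(3) is the special case of~(2) applied to the path $x_1,\dots,x_n,x_1$ on $n+1$ vertices: since $n$ is odd, $n+1$ is even, so~(2) produces $\{x_1,x_1\}\in E$, a loop.

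For part~(4) take the normalised witness $h$ of Lemma~\ref{lem:niceh} for $\{\consv{a_1 b_1},\consv{a_2 b_2}\}$, so $h(a_1,b_2)=h(b_1,a_2)=L<H=h(a_1,a_2)=h(b_1,b_2)$. Fix $w\neq a_2,b_2$ and consider the two target inequalities that~(\ref{eq:cbasic}) demands for $\{\consv{a_1 b_1},\consv{a_2 w}\}$ and for $\{\consv{a_1 b_1},\consv{w b_2}\}$. Adding the two left-hand sides and the two right-hand sides, the terms $h(a_1,w)$ and $h(b_1,w)$ cancel and the net difference is $2L-2H<0$; hence the left sides sum to strictly less than the right sides, so at least one of the two strict inequalities holds, giving one of the two claimed edges. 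Part~(5) is now immediate: applying~(4) to the edge $\{\consv{xy},\consv{yx}\}$ with $w=z$ gives $\{\consv{xy},\consv{yz}\}\in E$ or $\{\consv{xy},\consv{zx}\}\in E$, and since the former is excluded by hypothesis the latter holds; symmetrically, applying~(4) to $\{\consv{yz},\consv{zy}\}$ with $w=x$ and again discarding $\{\consv{xy},\consv{yz}\}$ yields $\{\consv{yz},\consv{zx}\}\in E$.

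For part~(6), start from a binary witness $h_0$ of the loop on $\consv{xz}$, i.e.\ $h_0(x,z)+h_0(z,x)<h_0(x,x)+h_0(z,z)$, and replace it by its symmetrisation $g(u,v)=h_0(u,v)+h_0(v,u)$, which remains in $\langle\Gamma,{\cal C}_D\rangle_{fn}$ (closure under swapping arguments and addition). Then $2g(x,z)<g(x,x)+g(z,z)$, while loop-freeness of $\consv{xy}$ and of $\consv{yz}$, applied to the single function $g$, gives $2g(x,y)\geq g(x,x)+g(y,y)$ and $2g(y,z)\geq g(y,y)+g(z,z)$. Summing these last two and substituting the loop inequality yields $g(x,y)+g(y,z)>g(x,z)+g(y,y)$, that is, $g(x,z)+g(y,y)<g(x,y)+g(y,z)$, which is criterion~(\ref{eq:cbasic}) for $\{\consv{xy},\consv{yz}\}$. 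The main obstacle is part~(1): the content lies in realising the composed function $h$ as a legitimate member of $\langle\Gamma,{\cal C}_D\rangle_{fn}$ (this is where $y\in\ST_\Gamma$ is essential) and in the bookkeeping of the four entries; once~(1) is in hand, the remaining parts are either short inductions/symmetry arguments or one-line applications of~(4), and the only extra idea needed is the symmetrisation step in~(6).
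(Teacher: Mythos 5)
Your proof is correct. For parts (1)--(3) and (5) it follows essentially the paper's own route: your min-composition $h(u,w)=\min_{v\in\{a_2,b_2\}}\bigl(h_1(u,v)+h_2(v,w)\bigr)$ through the middle pair, justified by $y\in\ST_\Gamma$, is exactly the paper's construction for (1), and (2), (3), (5) are then obtained the same way (induction along the path, closing the odd cycle as an even path, and two applications of (4), respectively). In fact your bookkeeping in (1) is more reliable than the paper's own: the paper's displayed computation contains a typo (it writes $2(\alpha_1+\beta_1)$ where $2(\alpha_1+\alpha_2)$ is meant) and states the final inequality in the reversed direction, though its conclusion $\{x,\overline{z}\}\in E$ agrees with yours. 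For part (4), the paper argues by an explicit case distinction on whether the original witness $h$ already certifies the edge $\{\consv{a_1 b_1},\consv{a_2 x}\}$; your summation argument --- the two target left-hand sides of (\ref{eq:cbasic}) add up to strictly less than the two right-hand sides, so one of the two strict inequalities must hold --- is equivalent and slightly cleaner, and it does not actually need the normalisation from Lemma~\ref{lem:niceh}. The only genuinely different step is part (6): the paper stays entirely inside the graph formalism and derives it by two applications of part (4) (the loop on $\consv{xz}$ forces an edge to $\consv{xy}$ or to $\consv{yz}$, and one further application of (4) to that edge forces either a loop on $\consv{xy}$ or on $\consv{yz}$, which is excluded, or the desired edge); you instead symmetrise a witness of the loop, $g(u,v)=h_0(u,v)+h_0(v,u)$, and chain three inequalities directly. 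Your route needs closure of $\langle\Gamma,{\cal C}_D\rangle_{fn}$ under transposition and addition --- true, and implicitly used elsewhere in the paper, but an extra ingredient the paper's derivation avoids; in exchange, your argument is independent of (4) and exhibits an explicit single witness function for the edge $\{\consv{xy},\consv{yz}\}$. Both are sound.
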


\begin{proof}
  Properties (1)--(3) are minor variations of Lemma 11$(b)$ and $(e)$ in~\cite{Zivny:1}.
  We include the proofs here for completeness.

  (\ref{item:xyz})
  Let $x = \consv{a_1 b_1}$, $y = \consv{a_2 b_2}$, and $z = \consv{a_3 b_3}$.
  By Lemma \ref{lem:niceh}, we have $h_1, h_2 \in \langle \Gamma, {\cal C}_D \rangle_{fn}$ such that
  $\alpha_1 = h_1(a_1,b_2)=h_1(b_1,a_2)<h_1(a_1,a_2)=h_1(b_1,b_2)=\beta_1$ and
  $\alpha_2 = h_2(a_2,b_3)=h_2(b_2,a_3)<h_2(a_2,a_3)=h_2(b_2,b_3)=\beta_2$.
  Let $h'(u_1,u_3)=\min_{u_2 \in \{a_2,b_2\}} h_1(u_1,u_2)+h_2(u_2,u_3)$,
  which is in $\langle \Gamma, {\cal C}_D \rangle_{fn}$ since $y \in \ST$.
  %\min_{u_2 \in D} h_1(u_1,u_2)+h_2(u_2,u_3)+C\cdot u_{\{a_2,b_2\}}(u_2)=
  Now,
  $h'(a_1,b_3)+h'(b_1,a_3)=\min_{u_2,v_2 \in \{a_2,b_2\}} h_1(a_1,u_2)+h_1(b_1,v_2)+h_2(u_2,b_3)+h_2(v_2,a_3)=2 \min\{\alpha_1+\beta_2,\alpha_2+\beta_1\}$.
  We also have
  $h'(a_1,a_3)+h'(b_1,b_3)=\min_{u_2,v_2 \in \{a_2,b_2\}} h_1(a_1,u_2)+h_1(b_1,v_2)+h_2(u_2,a_3)+h_2(v_2,b_3)=2 (\alpha_1+\beta_1)$.
  It follows that $h'(a_1,b_3)+h'(b_1,a_3) < h'(a_1,a_3)+h'(b_1,b_3)$,
  so $\{x, \overline{z}\} \in E$.

  (\ref{item:paths}) and (\ref{item:loop})
  These two properties follow by repeated application of (\ref{item:xyz}),
  keeping in mind that $\{x,y\} \in E$ iff $\{\overline{x},\overline{y}\} \in E$.

  %We prove this by induction on $n$.
  %For the inductive step, assume that there is an odd cycle $x_1, \dots, x_{2k+1}$ in $G_c$ for some $k > 1$.
  %Let $x_{2k-1} = \consv{a_1 b_1}$, $x_{2k} = \consv{a_2 b_2}$, and
  %$x_{2k+1} = \consv{a_3 b_3}$.
  %By (1), $\{x_{2k-1},\overline{x_{2k+1}}\} \in E$.
  %We also have $\{\overline{x_{2k+1}},\overline{x_1}\} \in E$, so $\{x_{2k-1},x_1\} \in E$, and result follows by induction.

  (\ref{item:prop1})
  By definition there exists a function $h \in \langle \Gamma, {\cal C}_D \rangle_{fn}$
  such that $h(a_1,a_2)+h(b_1,b_2) > h(a_1,b_2)+h(b_1,a_2)$.
  If $h(a_1,a_2)+h(b_1,x) > h(a_1,x)+h(b_1,a_2)$, then we are in the first case.
  Otherwise, $h(a_1,a_2)+h(b_1,x) \leq h(a_1,x)+h(b_1,a_2)$,
  so $h(a_1,x)+h(b_1,b_2)=h(a_1,a_2)+h(b_1,b_2)+(h(a_1,x)-h(a_1,a_2)) >
  h(a_1,b_2)+h(b_1,a_2)+h(a_1,x)-h(a_1,a_2) \geq
  h(a_1,b_2)+(h(a_1,a_2)+h(b_1,x))-h(a_1,a_2)$, which shows that we are in the second case.

  (\ref{item:trans2})
  By (\ref{item:prop1}), $\{\consv{xy},\consv{yx}\} \in E$ implies
  $\{\consv{xy},\consv{yz}\} \in E$ or $\{\consv{xy},\consv{zx}\} \in E$.
  In the first case, we are done, so we assume that the latter holds.
  Again by (\ref{item:prop1}), $\{\consv{yz},\consv{zy}\} \in E$ implies
  $\{\consv{yz},\consv{zx}\} \in E$ or $\{\consv{yz},\consv{xy}\} \in E$.
  In the latter case, we are done, hence it follows that if
  $\{\consv{yz},\consv{xy}\} \not\in E$, then we have both
  $\{\consv{xy},\consv{zx}\}$ and $\{\consv{yz},\consv{zx}\}$ in $E$.

  (\ref{item:abac})
  By (\ref{item:prop1}), $\{\consv{xz}, \consv{xz}\} \in E$ implies
  $\{\consv{xz}, \consv{xy}\} \in E$ or $\{\consv{xz}, \consv{yz}\} \in E$.
  In the first case, this in turn implies either
  $\{\consv{xy}, \consv{xy}\} \in E$ or $\{\consv{xy}, \consv{yz}\} \in E$.
  In the second case, it implies either
  $\{\consv{yz}, \consv{xy}\} \in E$ or $\{\consv{yz}, \consv{yz}\} \in E$.
  Hence, if both $\consv{xy}$ and $\consv{yz}$ are loop-free, then
  $\{\consv{xy},\consv{yz}\} \in E$.
\qed
\end{proof}

\section{Classification for $|D| = 4$}\label{sec:binary}

We are now ready to derive a classification of the computational
complexity of \wMCSP{} over a four-element domain. From here on, we assume that $D$ is the domain $\{a,b,c,d\}$.
First, we prove a result which describes the structure of the
unary functions in $\langle \Gamma, {\cal C} \rangle_{fn}$,
when $\Gamma$ is a core.

%Let ${\cal V}$ be the set of all unary functions on $D$,
%let ${\cal V}_{ad}$ (${\cal V}_{bc}$) be the set of functions $v \in {\cal V}$ for which $v(a)+v(d) \leq v(b)+v(c)$ ($v(a)+v(d) \geq v(b)+v(c))$, and let ${\cal V}_0 = {\cal V}_{ad} \cap {\cal V}_{bc}$.
%Also, let ${\cal V}_\Gamma = \langle \Gamma \rangle_{fn} \cap {\cal V}$.
%
%% For $a \neq b \in D$, let $u_{ab}(x) = 0$ if $x \in \{a,b\}$, and $u_{ab}(x) = 1$ otherwise.
%% Let ${\cal U} = \{ u_{ab} \mid a \neq b \}$ be the set of all such functions.
%% Let ${\cal U}_{ad} = {\cal V}_{ad} \cap {\cal U}$, ${\cal U}_{bc} = {\cal V}_{bc} \cap {\cal U}$, ${\cal U}_0 = {\cal V}_0 \cap {\cal U}$, and ${\cal U}_\Gamma = {\cal V}_\Gamma \cap {\cal U}$.
%
Let $\Sigma = \{ \{x,y\} \subseteq D \mid x \neq y \}$,
 $\Sigma_{ad} = \Sigma \setminus \{\{b,c\}\}$,
 $\Sigma_0 = \Sigma \setminus \{ \{b,c\}, \{a,d\}\}$, and
let $\Sigma_\Gamma = \langle \Gamma, {\cal C}_D \rangle_{w} \cap \Sigma$.
For distinct $x, y \in D$, let $u_{xy}(z) = 0$ if $z \in \{x,y\}$, and $u_{xy}(z) = 1$ otherwise.

\begin{proposition}
\label{prop:corecases}
  Let $\Gamma$ be a core over $\{a,b,c,d\}$ and assume that $\{b,c\} \not\in \Sigma_\Gamma$.
  Then,
  $\Sigma_0 \subseteq \Sigma_\Gamma$ and
  for all unary functions $u  \in \langle \Gamma, {\cal C}_D \rangle_{fn}$,
  we have $u(a)+u(d) \leq u(b)+u(c)$.
  If $\Sigma_0 = \Sigma_\Gamma$, then 
  $u(a)+u(d) = u(b)+u(c)$.
\end{proposition}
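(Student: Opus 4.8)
The plan is to read the target inequality $u(a)+u(d)\le u(b)+u(c)$ as the assertion that every unary $u\in\langle\Gamma,\mathcal{C}_D\rangle_{fn}$ is submodular on the lattice $L_{ad}$, in which $a=b\meet c$ is the bottom and $d=b\join c$ the top; for a unary function this single inequality \emph{is} submodularity on $L_{ad}$. I would prove the three claims in the order: inequality, then $\Sigma_0\subseteq\Sigma_\Gamma$, then the equality clause. Two facts are used throughout. First, since $\Gamma$ is a core no collapsing map $e_{pq}$ (with $p\ne q$) is injective, hence $e_{pq}\notin\Endo{\Gamma}$, so Lemma~\ref{lem:noend} yields for every ordered pair $(p,q)$ a $\{0,1\}$-valued $v_{p,q}\in\langle\Gamma,\mathcal{C}_D\rangle_{fn}$ with $v_{p,q}(p)=0$ and $v_{p,q}(q)=1$. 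Second, if some $w\in\langle\Gamma,\mathcal{C}_D\rangle_{fn}$ has $\argmin w=\{x,y\}$, then the instance $\mathcal{I}$ with $\mathcal{I}_x=w$ satisfies $\pi_x\optsol(\mathcal{I})=\argmin w=\{x,y\}$, so $\{x,y\}\in\Sigma_\Gamma$ by Definition~\ref{def:wpp}.

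For the inequality I would argue by contraposition: assuming some expressible unary $u$ has $u(a)+u(d)>u(b)+u(c)$, I want to build an expressible function minimised exactly on $\{b,c\}$, contradicting $\{b,c\}\notin\Sigma_\Gamma$. The motivating idea is symmetrisation: since $\{a,d\}$ is the \emph{unique} two-element set disjoint from $\{b,c\}$, adding to $u$ a partner that interchanges $a\leftrightarrow d$ and $b\leftrightarrow c$ produces a function equal to the large value $u(a)+u(d)$ on $\{a,d\}$ and to the small value $u(b)+u(c)$ on $\{b,c\}$, hence minimised exactly on $\{b,c\}$. I expect this to be the main obstacle, because the functions from Lemma~\ref{lem:noend} have uncontrolled values on the two coordinates outside the separated pair, so the partner cannot simply be written down. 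The rigorous route goes through the graph of Section~\ref{sec:graph}: a violation of the inequality is precisely an edge (\ref{eq:basic}) from the non-conservative vertex of $G[b,c]$ with $f(b,c)=a$ and $g(b,c)=d$ to a singleton vertex, and this edge must be excluded using $\{b,c\}\notin\Sigma_\Gamma$ together with Lemma~\ref{lem:niceh} and the edge calculus of Lemma~\ref{gprop}.

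Granting the inequality, the inclusion $\Sigma_0\subseteq\Sigma_\Gamma$ is the step I am most confident about. For $\{a,b\}$, take $v_{b,d}$ from Lemma~\ref{lem:noend}. Its submodularity on $L_{ad}$ reads $v_{b,d}(a)+v_{b,d}(d)\le v_{b,d}(b)+v_{b,d}(c)$, i.e.\ $v_{b,d}(a)+1\le v_{b,d}(c)$; as $v_{b,d}$ is $\{0,1\}$-valued this forces $v_{b,d}(a)=0$ and $v_{b,d}(c)=1$, so $v_{b,d}=(0,0,1,1)$ on $(a,b,c,d)$ and $\argmin v_{b,d}=\{a,b\}$, whence $\{a,b\}\in\Sigma_\Gamma$. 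The other three members of $\Sigma_0=\{\{a,b\},\{a,c\},\{b,d\},\{c,d\}\}$ are identical: $v_{c,d}$ is forced to $(0,1,0,1)$, $v_{b,a}$ to $(1,0,1,0)$, and $v_{c,a}$ to $(1,1,0,0)$, the single submodularity inequality pinning the two free coordinates in each case.

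For the equality clause, note that $\{a,d\}$ is symmetric to $\{b,c\}$: it is the other two-element set whose complement is also a two-element set, namely $\{b,c\}$. If $\Sigma_0=\Sigma_\Gamma$ then $\{a,d\}\notin\Sigma_\Gamma$ as well, so running the argument of the second paragraph with the roles of $\{b,c\}$ and $\{a,d\}$ exchanged yields the reverse inequality $u(b)+u(c)\le u(a)+u(d)$ for all unary $u\in\langle\Gamma,\mathcal{C}_D\rangle_{fn}$. Combined with the inequality already proved, this gives $u(a)+u(d)=u(b)+u(c)$, as required.
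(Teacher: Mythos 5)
Your steps for $\Sigma_0\subseteq\Sigma_\Gamma$ and for the equality clause are correct \emph{given} the inequality, and your derivation of the inclusion (forcing each Lemma~\ref{lem:noend} witness to be one of $u_{ab},u_{ac},u_{bd},u_{cd}$ via the single submodularity constraint) is in fact cleaner than the paper's. But the inequality itself---the step you put first---is a genuine gap, and the order you chose makes it unfixable as written. Your symmetrisation route needs a ``partner'' function that is high exactly where the violating $u$ is low; the only available building blocks for such a partner are the functions $u_{ab},u_{ac},u_{bd},u_{cd}$, i.e.\ precisely the content of your step two, which in your plan comes \emph{after} the inequality. This is not a technicality: a single expressible unary $u$ with $u(b)+u(c)<u(a)+u(d)$, e.g.\ $(1,0,1,1)$, has $\argmin u=\{b\}$ and yields no contradiction with $\{b,c\}\notin\Sigma_\Gamma$ by itself; one must add balancing functions to spread the minimum onto exactly $\{b,c\}$, and those balancing functions are the $u_{xy}$'s. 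The paper does exactly this, but in the opposite order: it first proves $u_{ab},u_{ac},u_{bd},u_{cd}\in\langle\Gamma,{\cal C}_D\rangle_{fn}$, then, given a violating $v$, builds a non-negative combination $v'$ of them with $v'(a)+v'(d)=v'(b)+v'(c)$ so that $v+v'$ is a constant plus $\delta u_{bc}$, whose argmin $\{b,c\}$ contradicts the hypothesis.

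Your fallback ``rigorous route through the graph'' does not close this circle; it is a hope, not an argument. Lemma~\ref{lem:niceh} and every item of Lemma~\ref{gprop} concern edges between \emph{conservative} vertices, and say nothing that can exclude an edge between the non-conservative vertex $(f,g)$ of $G[b,c]$ and a singleton vertex. In the paper the logical dependence runs the other way around: the absence of such edges (equivalently, row/column submodularity on $L_{ad}$) is \emph{deduced from} Proposition~\ref{prop:corecases} inside the proof of Proposition~\ref{prop:fullprop}, and the one lemma that does analyse edges at $(f,g)$, Lemma~\ref{lem:fgz}, itself invokes Proposition~\ref{prop:corecases}. What is missing from your proposal---and what constitutes the actual work in the paper's proof---is the unconditional opening move: for the four non-injective maps $e_{ba},e_{ca},e_{bd},e_{cd}$, list the three possible $\{0,1\}$-valued witnesses each can have (after discarding $u_{bc}$, which is excluded since $\{b,c\}\notin\Sigma_\Gamma$), and eliminate every combination other than $\{u_{bd},u_{cd},u_{ab},u_{ac}\}$ by exhibiting non-negative sums of the candidates equal to $u_{bc}$ plus a constant, each of which would place $\{b,c\}$ in $\Sigma_\Gamma$. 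Nothing in your proposal substitutes for this case analysis, so the inequality (and with it your whole chain) is unproved.
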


\begin{proof}
  Let ${\cal U}$ be the set of unary functions in $\langle \Gamma, {\cal C}_D \rangle_{fn}$.
  In order to simplify notation we will denote a unary function $u$ by the vector $(u(a),u(b),u(c),u(d))$.
  To exclude the functions $e_{ba}$, $e_{ca}$, $e_{bd}$, and $e_{cd}$ from the endomorphisms of $\Gamma$, Lemma~\ref{lem:noend} states that ${\cal U}$ must contain certain unary $\{0,1\}$-valued functions.
  The following table
  lists the possibilities, provided that $\{b,c\} \not\in \Sigma_\Gamma$, so in particular $u_{bc} = (1,0,0,1) \not\in {\cal U}$.
    \begin{center}
      \renewcommand{\arraystretch}{1.25}
      \begin{tabular}{ c c c c }
	$e_{ba}$ & $e_{ca}$ & $e_{bd}$ & $e_{cd}$ \\
	\hline
	(1,0,0,0) & (1,0,0,0) & (0,0,0,1) & (0,0,0,1) \\
	(1,0,1,0) & (1,1,0,0) & (0,0,1,1) & (0,1,0,1) \\
	(1,0,1,1) & (1,1,0,1) & (1,0,1,1) & (1,1,0,1) \\
      \end{tabular}
    \end{center}

  For each of the four functions $e_{xy}$, it is necessary that at least one of the three functions in the corresponding column is in ${\cal U}$.
  First assume that $(1,0,0,0) \in {\cal U}$.
  We note that $(1,0,0,0)+(0,0,0,1) = u_{bc}$, so we conclude that $(0,0,0,1) \not\in {\cal U}$.
  Since $e_{cd}$ is not an endomorphism of $\Gamma$, we must therefore
  either have $(0,1,0,1)$ or $(1,1,0,1)$ in ${\cal U}$.
  In the former case, we can add $(1,0,0,0)$ to obtain $(1,1,0,1)$,
  so we know that $(1,1,0,1) \in {\cal U}$.
  By a similar argument, considering the function $e_{bd}$,
  we conclude that $(1,0,1,1) \in {\cal U}$.
  Since $(1,0,1,1)+(1,1,0,1) = 1 + u_{bc}$, we have reached a contradiction.
  A similar argument shows that $(0,0,0,1) \not\in {\cal U}$.

  Assume instead that $(1,0,0,0), (0,0,0,1) \not\in {\cal U}$,
  $(1,0,1,1) \in {\cal U}$.
  As noted above, we must have $(1,1,0,1) \not\in {\cal U}$, and consequently $(1,1,0,0), (0,1,0,1) \in {\cal U}$.
  But $(1,1,0,0)+(0,1,0,1)+2\cdot (1,0,1,1) = 2+u_{bc}$ so again we have a contradiction.
  Thus, the only possibility is that ${\cal U}_0 := \{u_{bd}, u_{cd}, u_{ab}, u_{ac}\} \subseteq {\cal U}$, so $\Sigma_0 \subseteq \Sigma_\Gamma$.

  It is not hard to see that one can write every unary function $u$ such that $u(a)+u(d) = u(b)+u(c)$ as a linear combination of functions from ${\cal U}_0$ with non-negative coefficients.
  We show that if $v \in \langle \Gamma, {\cal C}_D \rangle_{fn}$ is a unary function in such that $v(a)+v(d) < v(b)+v(c)$, then $\{a,d\} \in \Sigma_\Gamma$.
  The full statement follows similarly.

  Let $\delta = (v(b)+v(c)-v(a)-v(d))/2 > 0$, and let $M = \max_{x \in D} v(x)$.
  Define $v'(x) = M-v(x)$ if $x = b, c$, and $v'(x) = M-v(x)+\delta$ otherwise.
  Then, $v'(a)+v'(d) = v'(b)+v'(c)$, and 
  $M+\delta u_{ad} = v'+v$ can be written as a linear combination of functions
  from ${\cal U}_0 \cup \{v\}$ with non-negative coefficients.
  Hence $M+\delta u_{ad} \in \langle \Gamma, {\cal C}_D \rangle_{fn}$,
  and $\{a,d\} \in \Sigma_\Gamma$.
  \qed
\end{proof}

We need the following two propositions in order to prove 
Theorem~\ref{thm:fourclass}. 
Their proofs are deferred to the next section.

\begin{proposition}
\label{prop:6}
  Assume that $\Sigma_0 \subseteq \Sigma_\Gamma$, and that $G'$ is bipartite.
  Then, the set of binary functions in  $\langle \Gamma, {\cal C}_D \rangle_{fn}$ is submodular on a chain.
\end{proposition}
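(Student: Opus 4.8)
The plan is to reduce the statement, via Lemma~\ref{lem:indset}, to a purely graph-theoretic problem on $G'$ and then exploit bipartiteness. To show that every binary $h\in\langle\Gamma,{\cal C}_D\rangle_{fn}$ is submodular on a common chain, it suffices to produce a total order $\sqsubset$ on $D$ for which the set $J=\{\consv{xy}\mid x\sqsubset y\}$ of conservative vertices is independent in $G'$. Indeed, $J$ contains exactly one vertex from each $G[x,y]$, and the pair $(f,g)$ associated to $J$ by Lemma~\ref{lem:indset} is exactly $(\min_\sqsubset,\max_\sqsubset)$. Moreover, for conservative vertices an edge is governed by (\ref{eq:cbasic}), and a direct check shows that $\{\consv{a_1 b_1},\consv{a_2 b_2}\}\in E$ with $a_1\sqsubset b_1$ and $a_2\sqsubset b_2$ is precisely a violation of submodularity on $\sqsubset$ for the tuples $\bfsym{x}=(a_1,b_2)$ and $\bfsym{y}=(b_1,a_2)$, since there $\bfsym{x}\meet\bfsym{y}=(a_1,a_2)$ and $\bfsym{x}\join\bfsym{y}=(b_1,b_2)$. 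Hence independence of $J$ yields, through Lemma~\ref{lem:indset}, that $(\min_\sqsubset,\max_\sqsubset)$ is a multimorphism of every binary function in $\langle\Gamma,{\cal C}_D\rangle_{fn}$, which is the desired submodularity.

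It then remains to construct such an order. First, since $G'$ is bipartite it has no odd closed walks, and in particular no loops; thus for every pair $\{x,y\}$ both $\consv{xy}$ and $\consv{yx}$ are loop-free, so either orientation of the pair is admissible in isolation. I would fix a $2$-colouring of $G'$ and use the complementation map $\overline{(\cdot)}:\consv{xy}\mapsto\consv{yx}$, which is a fixed-point-free involution and an automorphism of $G'$ since $\{u,v\}\in E$ iff $\{\overline{u},\overline{v}\}\in E$. When $\overline{(\cdot)}$ interchanges the two colour classes of an invariant component, a colour class already selects exactly one of $\consv{xy},\consv{yx}$ per pair; when it fixes the classes, the twins share a colour and are therefore non-adjacent, so the pair is ``modular'' and either orientation may be chosen freely; components swapped in pairs by $\overline{(\cdot)}$ are handled by transporting the colouring along the automorphism. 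A short case analysis on the components then yields an independent set $I\subseteq V'$ containing exactly one conservative vertex per pair, i.e. an orientation (a tournament) of the complete graph on $D$ whose $(\min,\max)$ operation is already a multimorphism of every binary function.

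The crux is to guarantee that this orientation can be taken \emph{transitive}, so that $I$ corresponds to a genuine chain rather than to a mere symmetric tournament pair; a directed $3$-cycle would still give a multimorphism, but not submodularity on any total order. This is where the hypothesis $\Sigma_0\subseteq\Sigma_\Gamma$ enters: the intermediate conservative vertices on the four soft pairs lie in $\ST_\Gamma$, so the conservative transitivity properties of Lemma~\ref{gprop} (parts~\ref{item:prop1}, \ref{item:trans2} and~\ref{item:abac}) apply. I would show that a directed $3$-cycle among the selected vertices forces, through these properties, either an edge between two vertices of $I$ (contradicting independence) or a loop (contradicting bipartiteness). Since $|D|=4$ there are only four triples to inspect, and the inequality $u(a)+u(d)\le u(b)+u(c)$ from Proposition~\ref{prop:corecases} pins down the coarse shape, forcing $\{a,d\}$ to straddle $\{b,c\}$ and leaving essentially only the orientation of $\{b,c\}$ free; a short case check then selects a linear extension of $L_{ad}$ (for instance $a\sqsubset b\sqsubset c\sqsubset d$) as the required chain.

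The main obstacle is precisely this last step. Extracting an independent transversal from a bipartite $G'$ together with the $\overline{(\cdot)}$-symmetry is essentially bookkeeping once Lemma~\ref{lem:indset} is in place; what carries the real content is upgrading the resulting tournament pair to a genuine chain, for which one must rule out directed $3$-cycles using the transitivity lemmas of Section~\ref{sec:graph} and the structural constraints on the unary functions established in Proposition~\ref{prop:corecases}.
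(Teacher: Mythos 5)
Your skeleton is in fact the paper's own route---2-colour $G'$, read off an orientation of the pairs from a colour class, prove transitivity via Lemma~\ref{gprop}, and finish with Lemma~\ref{lem:indset}---but the execution has a genuine gap: you never invoke Lemma~\ref{lem:noniso}, and you misplace where the hypothesis $\Sigma_0\subseteq\Sigma_\Gamma$ enters. Parts~(\ref{item:prop1}), (\ref{item:trans2}) and~(\ref{item:abac}) of Lemma~\ref{gprop} are unconditional: they require no vertex to lie in $\ST_\Gamma$, so your claim that $\Sigma_0\subseteq\Sigma_\Gamma$ is what makes them applicable explains nothing. The hypothesis is needed precisely for Lemma~\ref{lem:noniso}: every non-isolated $x\in \ST$ satisfies $\{x,\overline{x}\}\in E$. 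That single fact does all the work your sketch is missing. First, it shows that a colour class automatically contains exactly one of $\consv{xy},\consv{yx}$ for each non-isolated pair, which makes your involution/component bookkeeping unnecessary (your case ``the twins share a colour'' can only occur when both are isolated). Second, it supplies the preconditions $\{\consv{xy},\consv{yx}\}\in E$ and $\{\consv{yz},\consv{zy}\}\in E$ without which Lemma~\ref{gprop}(\ref{item:trans2}) cannot even be applied to two same-coloured vertices. Third, once that lemma yields $\{\consv{xy},\consv{zx}\}\in E$, it is Lemma~\ref{lem:noniso} again that turns this into $\consv{xz}\in I$: the vertex $\consv{zx}$ is then non-isolated, hence $\{\consv{zx},\consv{xz}\}\in E$, forcing $\consv{xz}$ into the opposite colour class from $\consv{zx}$, i.e.\ into $I$. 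Without these three uses, your plan to show that a directed $3$-cycle ``forces an edge or a loop'' cannot get started.

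Two further problems. Your strategy of first fixing a full tournament and then excluding directed $3$-cycles fails outright when the cycle passes through an isolated pair: isolated vertices are incident to no edges, so no contradiction is available, and indeed an arbitrary orientation of the free pairs may well contain $3$-cycles. The quantifiers must go the other way: prove that $R_I$, defined only on the non-isolated pairs, is a partial order, then take a \emph{linear extension}; the isolated pairs receive their orientation from the extension, and independence survives because they carry no edges (likewise, the singleton vertices required by Lemma~\ref{lem:indset} can be added for free, as there are no edges between conservative and singleton vertices). Finally, your appeal to Proposition~\ref{prop:corecases} is not legitimate here: its hypothesis is $\{b,c\}\notin\Sigma_\Gamma$, which is not assumed in Proposition~\ref{prop:6} (one may perfectly well have $\Sigma_\Gamma=\Sigma$), and its conclusion is neither needed nor true in this generality---the chain produced is whatever total order the $2$-colouring hands you, and it need not be a linear extension of $L_{ad}$.
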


\begin{proposition}
\label{prop:fullprop}
  Assume that $\Sigma_0 \subseteq \Sigma_\Gamma$, that $G'$ is not bipartite,
  but that $G[\ST_\Gamma]$ is.
  Then, the set of binary functions in  $\langle \Gamma, {\cal C}_D \rangle_{fn}$ has a 1-defect chain multimorphism.
\end{proposition}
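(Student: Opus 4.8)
The plan is to produce a single pair $(f,g)$ that is simultaneously a $1$-defect chain and a multimorphism of every binary function in $\langle \Gamma, {\cal C}_D \rangle_{fn}$. By Lemma~\ref{lem:indset} it is enough to exhibit an independent set $I\subseteq V$ containing exactly one vertex from each subgraph $G[x,y]$ and to check that the induced $(f,g)$ is a $1$-defect chain. Concretely, $I$ will consist of one \emph{conservative} vertex on each of five pairs, oriented so as to define a partial order $<$ on $D$ that relates all pairs but one, together with a single \emph{non-conservative} vertex on the remaining pair, which will serve as the defect.

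First I would locate the defect pair. Since $\Sigma_0\subseteq\Sigma_\Gamma$, the conservative vertices on $\{a,b\},\{a,c\},\{b,d\},\{c,d\}$ all lie in $\ST_\Gamma$ and may be used freely as interior vertices in the path and cycle arguments of Lemma~\ref{gprop}. As $G[\ST_\Gamma]$ is bipartite it contains no odd cycle, whereas $G'$ does; hence every odd cycle of $G'$ must pass through a vertex of $\ST\setminus\ST_\Gamma$, that is, a conservative vertex on a pair outside $\Sigma_\Gamma$, which is contained in $\{\{a,d\},\{b,c\}\}$. Reducing to an odd cycle with a single non-$\ST_\Gamma$ vertex $x_1$ (so that the hypothesis of Lemma~\ref{gprop}(\ref{item:loop}) is met) yields a loop on $x_1$; since loops are preserved by the involution $x\mapsto\overline{x}$, both orientations of the corresponding pair carry loops, and that pair cannot be oriented as part of a chain. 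I then take this pair to be $\{b,c\}$ — consistent with the asymmetry $u(a)+u(d)\le u(b)+u(c)$ of Proposition~\ref{prop:corecases}, which forces the incomparable diagonal of the order to be $\{b,c\}$ rather than $\{a,d\}$ — and declare $\{b,c\}$ the defect and $\{a,d\}$ a comparable pair.

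Next I would orient the five comparable pairs. Restricting the bipartition of $G[\ST_\Gamma]$ and invoking the transitivity encoded in Lemma~\ref{gprop}(\ref{item:paths}) and~(\ref{item:trans2})---exactly as in the bipartite case treated in Proposition~\ref{prop:6}---I extract a consistent choice of conservative vertices on $\{a,b\},\{a,c\},\{a,d\},\{b,d\},\{c,d\}$ defining a partial order $<$ in which $b$ and $c$ are the only incomparable elements. I then let $(f,g)$ be the meet and join of $<$ on every comparable pair and set $\{f(b,c),g(b,c)\}=\{a,d\}$ with $f(b,c)<g(b,c)$; since $a$ and $d$ are the only elements outside $\{b,c\}$, this choice is forced and produces exactly one of the three types of Fig.~\ref{fig:1}(a--c). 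By construction $(f,g)$ is commutative, idempotent, and has the prescribed defect behaviour, hence is a genuine $1$-defect chain.

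The main obstacle, and the last step, is to verify that $I$ is independent, so that Lemma~\ref{lem:indset} applies. The conservative vertices are pairwise non-adjacent by the bipartition, so what remains is to rule out both a loop on the non-conservative defect vertex and any edge between it and a chosen conservative vertex. Each such edge is of the general type~(\ref{eq:basic}) with a non-conservative endpoint, so the reduced form~(\ref{eq:cbasic}) is unavailable and the inequality must be analysed directly. I expect this to be the delicate part: I would normalise the witnessing binary function using Lemma~\ref{lem:niceh}, bound its values on $a,d$ against those on $b,c$ via Proposition~\ref{prop:corecases}, and show that any such mixed edge would propagate---again through the path arguments of Lemma~\ref{gprop}---to an odd cycle lying entirely inside $G[\ST_\Gamma]$, contradicting its bipartiteness. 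Once independence is established, $(f,g)$ is a multimorphism of every binary function in $\langle \Gamma, {\cal C}_D \rangle_{fn}$, which is precisely the assertion of the proposition.
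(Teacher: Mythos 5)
Your overall skeleton is the same as the paper's: build an independent set consisting of conservative vertices on five pairs (defining a partial order in which only $b$ and $c$ are incomparable) together with one non-conservative vertex on $\{b,c\}$, and feed it into Lemma~\ref{lem:indset}. However, your localization of the defect through a loop breaks down in one of the two cases allowed by the hypotheses. When $\Sigma_\Gamma = \Sigma_0$, i.e.\ both $\{b,c\}$ and $\{a,d\}$ lie outside $\Sigma_\Gamma$, the graph $G'$ has \emph{no loops at all}: a loop on $\consv{bc}$ or $\consv{ad}$ would, by Lemma~\ref{lem:u0}, force a loop on one of $\consv{ab},\consv{ac},\consv{bd},\consv{cd}$, contradicting the bipartiteness of $G[\ST_\Gamma]$. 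Consequently your ``reduction to an odd cycle with a single non-$\ST_\Gamma$ vertex'' is impossible in this case: shortcutting odd cycles with Lemma~\ref{gprop}(\ref{item:paths}) bottoms out at a triangle on a subset of $\{\consv{bc},\consv{cb},\consv{ad},\consv{da}\}$, not at a loop (this is exactly what the paper's proof establishes, concluding that $G'$ contains the complete graph on those four vertices). Your argument therefore covers only the case $\Sigma_\Gamma = \Sigma_{ad}$, where the loop on $\consv{bc}$ genuinely exists, and the loop-free case is missing entirely.

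Second, the orientation step assumes precisely what requires proof. To obtain a partial order in which $b$ and $c$ are the \emph{only} incomparable pair, you must exclude configurations such as $(b,y),(y,c) \in R_I$ with $y \in \{a,d\}$; otherwise every transitive extension relates $b$ and $c$, and there is no slot left for the defect vertex. The paper secures this with property (\ref{eqn:property}) --- an odd path from $\consv{by}$ to $\consv{yc}$ in $G[\ST_{ad}]$, the graph induced by all conservative vertices except those on $\{b,c\}$, which forces these two vertices into different colour classes --- and proving (\ref{eqn:property}) is a substantial part of the argument (Lemma~\ref{gprop}(\ref{item:abac}) in the loop case; the complete graph on the four special vertices in the loop-free case). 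Note also that when $\Sigma_\Gamma = \Sigma_0$ your 2-colouring of $G[\ST_\Gamma]$ assigns no orientation to $\{a,d\}$ at all, since $\consv{ad} \notin \ST_\Gamma$; one must first prove that the larger graph $G[\ST_{ad}]$ is bipartite. Finally, your last step --- independence of the defect vertex from the chosen conservative vertices --- is only a sketch; the paper needs a dedicated lemma (Lemma~\ref{lem:fgz}), combined with (\ref{eqn:property}) and Lemma~\ref{lem:noniso}, to show that any $z \in \ST_{ad}$ adjacent to the defect vertex has odd paths to all of $\consv{ab},\consv{ac},\consv{bd},\consv{cd}$, which pins down $I'$ and excludes $z$. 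Without (\ref{eqn:property}) and the loop-free case analysis, the construction as you describe it does not go through.
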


%% \begin{proposition}
%% \label{prop:binaryclass}
%%   Let $\Gamma$ be a core over $\{a,b,c,d\}$, and assume that $G[\ST_\Gamma]$ is bipartite.
%%   Then, any binary function $h \in \langle \Gamma, {\cal C}_D \rangle_{fn}$ is either submodular on a chain, or has a 1-defect chain multimorphism.
%% \end{proposition}

%% \begin{proof}
%%   If $G'$ is bipartite, then $h$ is submodular on a chain, by Proposition~\ref{prop:6}, and we are done.
%%   If $G'$ is not bipartite, then $\ST_\Gamma$ is strictly contained in $\ST$,
%%   so up to an automorphism we may assume that $\{b,c\} \not\in \Sigma_\Gamma$.
%%   We thus have $\Sigma_0 \subseteq \Sigma_\Gamma$ by Proposition~\ref{prop:corecases}.
%%   The result now follows from Proposition~\ref{prop:fullprop}.
%%   \qed
%% \end{proof}

\begin{theorem}
\label{thm:fourclass}
  Let $\Gamma$ be a core over $D$ with $D = {a,b,c,d}$.
  If $\Gamma$ is submodular on a chain, 
  or if $\Gamma$ has a 1-defect chain multimorphism, 
  then \wMCSP$(\Gamma)$ is tractable.
  Otherwise, it is \NP-hard.
\end{theorem}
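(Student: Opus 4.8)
```latex
The plan is to establish the tractability and hardness parts separately, with
the hardness direction being the genuinely new contribution resting on the
graph machinery of Section~\ref{sec:graph}.

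\textbf{The tractable direction.} This is immediate from the results already
established. If $\Gamma$ is submodular on a chain, then $(\meet,\join)$ is a
multimorphism of $\Gamma$ and \wMCSP$(\Gamma)$ is tractable by the
oracle-tractability of submodular minimisation on a chain (a distributive
lattice). If instead $\Gamma$ has a 1-defect chain multimorphism, then
tractability is precisely the content of
Proposition~\ref{prop:1defecttract}. So nothing new is required here.

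\textbf{Reducing to the core/constants setting.} For the hardness direction, I
would first invoke the reductions built up in the preliminaries. Since $\Gamma$
is a core, Proposition~\ref{prop:constants} lets me add all constant unary
relations ${\cal C}_D$ for free, so it suffices to analyse
\wMCSP$(\Gamma,{\cal C}_D)$. By Proposition~\ref{prop:1}, the complexity of this
problem is governed by the expressible functions $\langle \Gamma, {\cal
C}_D\rangle_{fn}$ and expressible relations $\langle \Gamma, {\cal
C}_D\rangle_w$; in particular, it suffices to understand the \emph{binary}
functions in $\langle \Gamma, {\cal C}_D\rangle_{fn}$, because
Lemma~\ref{lem:burkard} and Lemma~\ref{lem:1defectburkard} reduce checking a
chain-submodularity or 1-defect multimorphism to the binary case.

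\textbf{The case split via the graph $G'$.} The heart of the argument is a
trichotomy driven by the graph of partial multimorphisms. If $\{b,c\} \in
\Sigma_\Gamma$, then $\Gamma$ together with constants can single out the pair
$\{b,c\}$, and after possibly relabelling I would reduce to a conservative,
three-element-like situation and invoke the known classifications on smaller
domains. Otherwise $\{b,c\}\notin\Sigma_\Gamma$, and
Proposition~\ref{prop:corecases} gives $\Sigma_0\subseteq\Sigma_\Gamma$ together
with the crucial inequality $u(a)+u(d)\le u(b)+u(c)$ on every expressible unary
$u$. I then split on the graph $G'$ induced by the conservative vertices. If
$G'$ is bipartite, Proposition~\ref{prop:6} yields that all binary functions in
$\langle \Gamma, {\cal C}_D\rangle_{fn}$ are submodular on a chain, which by
Lemma~\ref{lem:burkard} lifts to all of $\Gamma$, placing us in the tractable
case. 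If $G'$ is not bipartite but $G[\ST_\Gamma]$ is bipartite,
Proposition~\ref{prop:fullprop} produces a 1-defect chain multimorphism on the
binary functions, which by Lemma~\ref{lem:1defectburkard} lifts to $\Gamma$,
again tractable. The remaining case is that $G[\ST_\Gamma]$ is not bipartite:
here I would use an odd cycle of vertices from $\ST_\Gamma$ together with
Lemma~\ref{gprop}(\ref{item:loop}) to force a loop on some conservative vertex,
i.e.\ a binary expressible function $h$ satisfying~\eqref{eq:cbasic} for
$a_1b_1=a_2b_2$. Such a function expresses a (possibly weighted) cut-type
penalty on a two-element subset, from which I can build a reduction from a known
\NP-hard \wMCSP{} problem on that pair (essentially {\sc Min Uncut} / {\sc
Max Cut}), establishing \NP-hardness.

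\textbf{Main obstacle.} The genuine difficulty is organising the case analysis
so that the three graph-theoretic cases (bipartite $G'$; non-bipartite $G'$ with
bipartite $G[\ST_\Gamma]$; non-bipartite $G[\ST_\Gamma]$) are exhaustive and
each lands cleanly in tractability or hardness. Propositions~\ref{prop:6}
and~\ref{prop:fullprop} carry the two tractable cases but their proofs (deferred
to Section~\ref{sec:full}) are where the real work lives; assuming them, the
remaining task is to show that a non-bipartite $G[\ST_\Gamma]$ genuinely forces
hardness rather than some further hidden multimorphism. I expect the delicate
point to be verifying that the loop guaranteed by
Lemma~\ref{gprop}(\ref{item:loop}), combined with the unary structure from
Proposition~\ref{prop:corecases}, suffices to gadget-reduce from an \NP-hard
problem without any residual tractable escape.
```
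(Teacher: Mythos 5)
Your tractability direction and your handling of the branch $\{b,c\} \notin \Sigma_\Gamma$ essentially coincide with the paper's own proof: add constants via Proposition~\ref{prop:constants}, use Proposition~\ref{prop:corecases} to obtain $\Sigma_0 \subseteq \Sigma_\Gamma$, then run the trichotomy ($G'$ bipartite gives Proposition~\ref{prop:6} and chain-submodularity; $G'$ non-bipartite with $G[\ST_\Gamma]$ bipartite gives Proposition~\ref{prop:fullprop} and a 1-defect chain multimorphism, both lifted to $\Gamma$ by Lemmas~\ref{lem:burkard} and~\ref{lem:1defectburkard}; $G[\ST_\Gamma]$ non-bipartite gives a loop by Lemma~\ref{gprop}(\ref{item:loop}), and then Lemma~\ref{lem:niceh} together with the crisp relation $\{x,y\}$ yields Max-Cut-type hardness, which is exactly the appeal to Proposition~5.1 of~\cite{Cohen:etal:AI06}).

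The genuine gap is your first branch, ``$\{b,c\} \in \Sigma_\Gamma$''. Having the crisp relation $\{b,c\}$ weighted-pp-definable does \emph{not} shrink the domain: $\Gamma$ is a core on four elements, so there is no non-injective endomorphism, and no reduction to the known two- or three-element classifications is available; the claim that one can ``invoke the known classifications on smaller domains'' is a non sequitur. Concretely, when $\Sigma_\Gamma = \Sigma$ (so $G[\ST_\Gamma] = G'$), your argument never proves NP-hardness of a language whose graph contains a loop --- which is precisely the Max-Cut-like hard case --- because hardness only appears inside your second branch. The paper avoids this by ordering the cases differently: the loop case of $G[\ST_\Gamma]$ is dispatched \emph{first}, since it needs no hypothesis on $\Sigma_\Gamma$ beyond membership of the looped vertex in $\ST_\Gamma$; only in the loop-free (hence, by Lemma~\ref{gprop}(\ref{item:loop}), bipartite) case does one need $\Sigma_0 \subseteq \Sigma_\Gamma$, and this holds either trivially (when $\Sigma_\Gamma = \Sigma$) or after renaming the domain elements so that the missing pair is $\{b,c\}$, at which point Proposition~\ref{prop:corecases} (whose hypothesis is $\{b,c\} \notin \Sigma_\Gamma$) applies. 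With that reorganisation your trichotomy covers all cases and the proof closes; as written, the case $\{b,c\} \in \Sigma_\Gamma$ is simply unproven.
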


\begin{proof}
  Assume that $G[\ST_\Gamma]$ has a loop on a vertex $\consv{xy}$.
  It then follows from Lemma~\ref{lem:niceh} that there is a function 
  $h \in \langle \Gamma, {\cal C}_D \rangle_{fn}$
  such that $h(x,y)=h(y,x)<h(x,x)=h(y,y)$, 
  and $\{x,y\} \in \langle \Gamma, {\cal C}_D \rangle_{w}$. 
  By Proposition~5.1 in~\cite{Cohen:etal:AI06}, \wMCSP$(\Gamma, {\cal C}_D)$ is \NP-hard.
  By Proposition~\ref{prop:constants}, \wMCSP$(\Gamma, {\cal C}_D)$ reduces to \wMCSP$(\Gamma)$.
  Hence, the latter is also \NP-hard.

  If instead $G[\ST_\Gamma]$ is loop-free, then it is bipartite, by Lemma~\ref{gprop}(\ref{item:loop}).
  We may assume that $\Sigma_0 \subseteq \Sigma_\Gamma$:
  this is trivial if $\Sigma_\Gamma = \Sigma$. If $\Sigma_\Gamma$ is
  strictly contained in $\Sigma$, then
  up to an automorphism we may assume that $\{b,c\} \not\in \Sigma_\Gamma$,
  and the inclusion follows by Proposition~\ref{prop:corecases}.
  For a $k$-ary function $h \in \Gamma$, let $\Phi(h)$ be the set of binary which can be obtained from $h$ by fixing at least $k-2$ variables, and let $\Gamma'$ be the union of $\Phi(h)$ over all $h \in \Gamma$.

  Now, if $G'$ is bipartite, then by Proposition~\ref{prop:6}, the set of binary functions in $\langle \Gamma, {\cal C}_D \rangle_{fn}$ is submodular on a chain.
  Since this set contains $\Gamma'$, we may conclude, by Lemma~\ref{lem:burkard},
  that $\Gamma$ is submodular on this chain as well.
  It follows that \wMCSP$(\Gamma)$ is tractable~\cite{IwataFF01,Schrijver00}.

  Otherwise, $G'$ is not bipartite, and by Proposition~\ref{prop:fullprop}, the set of binary functions in $\langle \Gamma, {\cal C}_D \rangle_{fn}$ have a 1-defect chain multimorphism.
  Since this set contains $\Gamma'$, we may conclude, by Lemma~\ref{lem:1defectburkard} this time, that $\Gamma$ has a 1-defect chain multimorphism.
  It now follows from Proposition~\ref{prop:1defecttract} that \wMCSP$(\Gamma)$ is tractable.
\qed
\end{proof}

\section{Proofs of Propositions~\ref{prop:6} and~\ref{prop:fullprop}}\label{sec:full}

\begin{lemma}
\label{lem:noniso}
  If $\Sigma_0 \subseteq \Sigma_\Gamma$, and
  $x \in V'$ is not isolated in $G'$, then
  $\{x,\overline{x}\} \in E$.
\end{lemma}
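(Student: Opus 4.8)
The plan is to obtain the statement from properties (\ref{item:prop1}) and (\ref{item:xyz}) of Lemma~\ref{gprop}, fed with the structural fact $\Sigma_0\subseteq\Sigma_\Gamma$. The decisive observation is that (\ref{item:xyz}) collapses to exactly the desired conclusion when we take $z=x$: if $x$ has a neighbour $y\in\ST_\Gamma$, then applying (\ref{item:xyz}) to the conservative triple $x,y,x$ — where $\{x,y\}$ and $\{y,x\}$ are literally the same edge — produces $\{x,\overline{x}\}\in E$. Consequently the whole task reduces to proving that a non-isolated $x\in V'$ always has a neighbour in $\ST_\Gamma$, or is already adjacent to $\overline{x}$.

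Next I would unpack $\Sigma_0\subseteq\Sigma_\Gamma$. This guarantees that all eight conservative vertices supported on the four pairs $\{a,b\},\{a,c\},\{b,d\},\{c,d\}$ lie in $\ST_\Gamma$; these pairs form the $4$-cycle $a\!-\!b\!-\!d\!-\!c\!-\!a$. The only conservative vertices that can escape $\ST_\Gamma$ are therefore the four \emph{diagonal} ones, $\consv{ad},\consv{da},\consv{bc},\consv{cb}$, supported on the two diagonals $\{a,d\}$ and $\{b,c\}$. Now fix a non-isolated $x$ and select an incident edge $\{x,y\}$ of $G'$ with $y$ conservative (a loop, $y=x$, being allowed). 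If $y\in\ST_\Gamma$, or if $y=\overline{x}$, we are finished — in the first case by the reduction of the previous paragraph, in the second outright.

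The remaining possibility is that $y$ is one of the four diagonal vertices, say $y=\consv{a_2 b_2}$ with $\{a_2,b_2\}$ a diagonal. Here I would repair the edge with (\ref{item:prop1}): choosing $w$ to be either element lying outside the diagonal $\{a_2,b_2\}$, property (\ref{item:prop1}) forces $\{x,\consv{a_2 w}\}\in E$ or $\{x,\consv{w b_2}\}\in E$. For this choice of $w$ both candidate pairs $\{a_2,w\}$ and $\{w,b_2\}$ are edges of the $4$-cycle rather than diagonals — e.g. $\{a_2,b_2\}=\{a,d\}$ with $w=b$ gives $\{a,b\}$ and $\{b,d\}$, while $\{a_2,b_2\}=\{b,c\}$ with $w=a$ gives $\{a,b\}$ and $\{a,c\}$ — so both lie in $\Sigma_0\subseteq\Sigma_\Gamma$. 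Hence, whichever disjunct of (\ref{item:prop1}) holds, $x$ acquires a neighbour in $\ST_\Gamma$, and the reduction via (\ref{item:xyz}) with $z=x$ again yields $\{x,\overline{x}\}\in E$.

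I expect the genuine content to be this reduction idea; what remains is bookkeeping rather than a serious obstacle. Two points deserve care. First, one must verify that \emph{both} alternatives of the disjunction produced by (\ref{item:prop1}) land inside $\Sigma_0$, which is exactly why $w$ is drawn from outside the diagonal (so neither resulting pair is itself a diagonal). Second, one should confirm that (\ref{item:xyz}) may legitimately be instantiated with $z=x$ (and, in the loop case, with $x=y=z$): inspecting its proof, the composition $h'(u_1,u_3)=\min_{u_2\in\{a_2,b_2\}}h_1(u_1,u_2)+h_2(u_2,u_3)$ simply routes the single edge $\{x,y\}$ through the $\ST_\Gamma$-vertex $y$ twice, which is permitted and yields the strict inequality $h'(a_1,a_1)+h'(b_1,b_1)<h'(a_1,b_1)+h'(b_1,a_1)$ witnessing $\{x,\overline{x}\}\in E$.
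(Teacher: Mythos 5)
Your proof is correct and follows essentially the same route as the paper's: pass from a diagonal neighbour ($\consv{ad},\consv{da},\consv{bc},\consv{cb}$) to a neighbour in $\ST_\Gamma$ via Lemma~\ref{gprop}(\ref{item:prop1}), then apply Lemma~\ref{gprop}(\ref{item:xyz}) with $z=x$ to get $\{x,\overline{x}\}\in E$. Your additional checks (the separate $y=\overline{x}$ case, and the verification that (\ref{item:xyz}) may be instantiated along a single edge routed twice through the $\ST_\Gamma$-vertex) are sound refinements of what the paper leaves implicit, and you even state the witnessing inequality with the correct orientation, which the paper's proof of (\ref{item:xyz}) misprints.
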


\begin{proof}
  By assumption, there is an edge $\{x,\consv{yz}\} \in E$.
  If $\{y,z\} \neq \{b,c\}, \{a,d\}$, then $\consv{yz} \in \ST_\Gamma$ since $\Sigma_0 \subseteq \Sigma_\Gamma$.
  If instead $\{x,\consv{bc}\} \in E$, then it follows from Lemma~\ref{gprop}(\ref{item:prop1}) that either $\{x,\consv{ba}\} \in E$ or $\{x,\consv{ac}\} \in E$, and $\consv{ba}, \consv{ac} \in \ST_\Gamma$ due to $\Sigma_0 \subseteq \Sigma_\Gamma$.
  In either case, $\{x,\overline{x}\} \in E$ follows from Lemma~\ref{gprop}(\ref{item:xyz}).
\qed
\end{proof}

%Let $T \subseteq \Sigma$ be the set of two-subsets $\{a,b\}$ such that $\{\consv{ab},\consv{ba}\} \in E$.

%\begin{lemma}
%\label{lem:trans2}
%\end{lemma}
%
%\begin{proof}
%\qed
%\end{proof}

For an independent set $I$ in $G'$, let $R_I$ denote the binary relation on $D$ defined by $(x,y) \in R_I$ iff $\consv{xy} \in I$.

%% \begin{proposition}
%% \label{prop:6}
%%   Assume that ${\cal U}_0 \subseteq {\cal U}_\Gamma$ and that $G'$ is bipartite.
%%   Then, the set of binary functions in  $\langle \Gamma, {\cal C}_D \rangle_{fn}$ is submodular on a chain.
%%   %then $\Gamma$ is submodular on a chain.
%% \end{proposition}

\begin{proof}[Proposition~\ref{prop:6}]
  %Let $V'$ be the non-isolated vertices in $G_c$, and let $\{C,C'\}$
  Let $\{I,J\}$ be a 2-colouring of the subgraph of $G'$ induced by the
  non-isolated vertices.
  We first show that $R_I$ is a partial order on $D$.
  Let $(x,y), (y,z) \in R_I$.
  Then, $\consv{xy}$ and $\consv{yz}$ have the same colour in $I$, and
  it follows that $\{\consv{xy},\consv{yz}\} \not\in E$.
  Hence, by Lemma~\ref{gprop}(\ref{item:trans2}), we have
  $\{\consv{xy},\consv{zx}\}, \{\consv{yz},\consv{zx}\} \in E$.
  By Lemma~\ref{lem:noniso}, $\{\consv{zx},\consv{xz}\} \in E$, so $\consv{xz} \in I$ and $(x,z) \in R_I$.
  Now, let $(D;<)$ be a linear extension of $R_I$, and let $I' \supseteq I$ be the corresponding subset of $\ST$.
  The set $I'$ is independent since $I$ is independent and $I' \setminus I$ is a set of isolated vertices in $G'$.
  Since there are no edges from $\ST$ to the singleton vertices in $G$, we can add all of these to $I'$ as well.
  Thus, by Lemma~\ref{lem:indset},
  every binary function in $\langle \Gamma, {\cal C}_D \rangle_{fn}$ is submodular on the
  chain $(D;\glb,\lub)$, where $\glb$ and $\lub$ are defined with respect
  to the total order $(D;<)$.
\qed
\end{proof}

%\begin{lemma}
%\label{lem:abac}
%\end{lemma}
%
%\begin{proof}
%\qed
%\end{proof}

In the following, we will let $(f,g)$ denote the vertex in $G$ given by
$f(b,c)=f(c,b)=a$ and $g(b,c)=g(c,b)=d$.
%Let $V_{fg} = (V_c \setminus \{\consv{bc},\consv{cb}\}) \cup \{(f,g), (g,f)\}$.
%Let $V_{bc} = V_c \setminus \{\consv{bc},\consv{cb}\}$.

\begin{lemma}
\label{lem:fgz}
  Assume that $\Sigma_\Gamma \subseteq \Sigma_{ad}$ and that there is an edge
  %If there is an edge $\{(f,g),z\} \in E$, $z \in V'$, then
  $\{(f,g),z\} \in E$, $z \in V'$.
  Then,
  $\{\consv{ab},z\} \in E$ or $\{\consv{ac},z\} \in E$, and
  $\{\consv{bd},z\} \in E$ or $\{\consv{cd},z\} \in E$.
\end{lemma}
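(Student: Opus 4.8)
The plan is to start from a single witness $h \in \langle \Gamma, {\cal C}_D \rangle_{fn}$ for the edge $\{(f,g),z\}$ and to reuse $h$, \emph{unmodified}, as a witness for each of the four conservative edges in the conclusion. Write $z = \consv{a_2 b_2}$ with $a_2 \neq b_2$. Since $f(b,c)=a$ and $g(b,c)=d$, the defining condition~(\ref{eq:basic}) for the edge $\{(f,g),z\}$ reads
\begin{equation*}
\min\{h(b,a_2)+h(c,b_2),\ h(b,b_2)+h(c,a_2)\} < h(a,a_2)+h(d,b_2). \tag{$*$}
\end{equation*}
On the other hand, by~(\ref{eq:cbasic}) the four target edges translate into the ``slope'' comparisons $u<p$ (for $\{\consv{ab},z\}$), $u<q$ (for $\{\consv{ac},z\}$), $p<w$ (for $\{\consv{bd},z\}$) and $q<w$ (for $\{\consv{cd},z\}$), where I abbreviate $u=h(a,b_2)-h(a,a_2)$, $p=h(b,b_2)-h(b,a_2)$, $q=h(c,b_2)-h(c,a_2)$ and $w=h(d,b_2)-h(d,a_2)$. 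It therefore suffices to prove $\min\{p,q\}<w$ (second disjunction) and $u<\max\{p,q\}$ (first disjunction).

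The key ingredient is Proposition~\ref{prop:corecases}, which applies because $\Sigma_\Gamma\subseteq\Sigma_{ad}$ means precisely that $\{b,c\}\notin\Sigma_\Gamma$. Fixing the second argument of $h$ to the constant $a_2$ (resp.\ $b_2$)---possible since ${\cal C}_D$ is available---yields unary functions in $\langle \Gamma,{\cal C}_D\rangle_{fn}$, so the proposition gives $h(a,a_2)+h(d,a_2)\le h(b,a_2)+h(c,a_2)$ and $h(a,b_2)+h(d,b_2)\le h(b,b_2)+h(c,b_2)$. Setting $\Delta=h(a,a_2)+h(d,a_2)-h(b,a_2)-h(c,a_2)$, the first inequality says $\Delta\le 0$ and the second becomes $u+w\le p+q-\Delta$. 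Expanding the minimum in $(*)$ in terms of slopes rewrites it as $\min\{p,q\}<\Delta+w$.

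With these three relations the two disjunctions fall out quickly. The second is immediate: $\min\{p,q\}<\Delta+w\le w$ since $\Delta\le 0$, so $p<w$ or $q<w$. For the first I argue by contradiction: if $u\ge\max\{p,q\}$, then the column-$b_2$ inequality gives $w\le p+q-\Delta-u\le\min\{p,q\}-\Delta$, i.e.\ $\Delta+w\le\min\{p,q\}$, which contradicts $(*)$. Hence $u<\max\{p,q\}$, so $u<p$ or $u<q$. In each of the four resulting cases the original $h$ witnesses the corresponding conservative edge through~(\ref{eq:cbasic}), which proves the lemma.

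The step I expect to be the crux---rather than an obstacle in the computational sense---is recognizing that the ``defect'' inequality $u(a)+u(d)\le u(b)+u(c)$ furnished by Proposition~\ref{prop:corecases}, once applied \emph{separately} to the two columns $a_2$ and $b_2$ of $h$, supplies exactly the leverage needed to convert a non-conservative edge at $(f,g)$ into conservative edges. Everything else (tracking the $\min$ and exploiting the sign of $\Delta$) is routine bookkeeping; one should only take minor care that the target pairs such as $\consv{ab}$ may coincide with $z$ or produce loops, which does not affect the argument.
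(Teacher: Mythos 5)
Your proof is correct and takes essentially the same route as the paper's: both reuse the single witness $h$ for the edge $\{(f,g),z\}$ unmodified, and both extract the conservative edges by applying the unary inequality $u(a)+u(d)\le u(b)+u(c)$ from Proposition~\ref{prop:corecases} to the columns $h(\cdot,a_2)$ and $h(\cdot,b_2)$. The only difference is bookkeeping: the paper case-splits on which term of the minimum witnesses the edge inequality and massages each case directly into one edge of each disjunction, whereas you encode the same inequalities in slope notation and resolve the two disjunctions via a sign argument and a short contradiction.
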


\begin{proof}
  Let $z = \consv{xy}$. %, and consider the edge $\{(f,g),\consv{xy}\}$.
  By definition, there exists a function $h \in \langle \Gamma, {\cal C}_D \rangle_{fn}$ such that
  $\min \{h(b,x)+h(c,y), h(c,x)+h(b,y)\} < h(a,x)+h(d,y)$.
  If $h(b,x)+h(c,y) < h(a,x)+h(d,y)$, then
  $h(a,x)+h(b,y) > (h(b,x)+h(c,y)-h(d,y))+h(b,y) \geq h(b,x)+h(a,y)+h(d,y)-h(d,y)$ since $h(b,y)+h(c,y) \geq h(a,y)+h(d,y)$
  by Proposition~\ref{prop:corecases}.
  Thus, $\{\consv{ab},\consv{xy}\} \in E$.
  If $h(c,x)+h(b,y) < h(a,x)+h(d,y)$, then we obtain
  $\{\consv{ac},\consv{xy}\} \in E$ following a similar argument,
  and the remaining two cases can be deduced in the same way.
\end{proof}

\begin{lemma}
\label{lem:u0}
  If $\Sigma_\Gamma \subseteq \Sigma_0$,
  and there is a loop on $\consv{bc}$ or $\consv{ad}$, then there is a loop on at least one of the vertices
  $\consv{ab}$, $\consv{ac}$, $\consv{bd}$, $\consv{cd}$.
\end{lemma}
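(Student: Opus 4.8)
The plan is to argue by contradiction, assuming that none of the four vertices $\consv{ab},\consv{ac},\consv{bd},\consv{cd}$ carries a loop, and then to manufacture a loop on one of them. By symmetry it suffices to treat the case where the loop is on $\consv{bc}$: the pair $\{a,d\}$ plays exactly the same role with the two ``midpoints'' $b,c$ in place of $a,d$, the four target pairs forming the same set $\Sigma_0$ in both cases, and every property of $G$ in Lemma~\ref{gprop} holds for arbitrary conservative vertices. Note first that $\Sigma_\Gamma\subseteq\Sigma_0$ forces $\{b,c\}\notin\Sigma_\Gamma$, so Proposition~\ref{prop:corecases} gives $\Sigma_0\subseteq\Sigma_\Gamma$ and hence $\Sigma_\Gamma=\Sigma_0$; in particular all eight orientations of the four pairs lie in $\ST_\Gamma$, so Lemma~\ref{lem:noniso} is applicable to them. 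Since a loop on $\consv{xy}$ is the same as a loop on $\consv{yx}$, the contradiction hypothesis says precisely that $G[\ST_\Gamma]$ is loop-free, hence bipartite by Lemma~\ref{gprop}(\ref{item:loop}).

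First I would extract an ``even skeleton'' inside $G[\ST_\Gamma]$. Applying Lemma~\ref{gprop}(\ref{item:abac}) to the loop on $\consv{bc}$ with midpoint $a$ gives $\{\consv{ba},\consv{ac}\}\in E$, and with midpoint $d$ gives $\{\consv{bd},\consv{dc}\}\in E$; since these vertices are now non-isolated, Lemma~\ref{lem:noniso} yields the orientation-flip edges $\{\consv{ab},\consv{ba}\}$, $\{\consv{ac},\consv{ca}\}$, $\{\consv{bd},\consv{db}\}$, $\{\consv{cd},\consv{dc}\}$, and the reversal symmetry $\{x,y\}\in E\iff\{\overline{x},\overline{y}\}\in E$ supplies the reverses of the first two edges. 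Together these form two vertex-disjoint $4$-cycles in $G[\ST_\Gamma]$, one on the pairs $\{a,b\},\{a,c\}$ and one on the pairs $\{b,d\},\{c,d\}$; both are even, consistent with the bipartition that separates conservative vertices according to whether their first coordinate lies in $\{a,d\}$ or in $\{b,c\}$.

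The heart of the argument is to inject the ``oddness'' of the loop into this skeleton. I would feed the loop $\{\consv{bc},\consv{bc}\}$ into Lemma~\ref{gprop}(\ref{item:prop1}) to connect $\consv{bc}$ to an $\ST_\Gamma$-vertex, and then apply Lemma~\ref{gprop}(\ref{item:prop1}) a second time to the resulting edge to produce an edge violating the bipartition above — typically a ``same-first-coordinate'' edge such as $\{\consv{ba},\consv{bd}\}$ or $\{\consv{ca},\consv{cd}\}$ — together with a cross edge joining the two $4$-cycles. Adjoining such an odd edge to the even skeleton closes an odd cycle lying entirely in $\ST_\Gamma$, and Lemma~\ref{gprop}(\ref{item:loop}) then forces a loop on one of its vertices, i.e. on one of the four target pairs, which is the contradiction. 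The main obstacle is that each use of Lemma~\ref{gprop}(\ref{item:prop1}) produces a disjunction, so the proof unfolds into a case tree; several branches route an edge through the non-$\ST_\Gamma$ vertices $\consv{ad}$ or $\consv{bc}$, and these must be brought back into $\ST_\Gamma$ by re-applying Lemma~\ref{gprop}(\ref{item:prop1}), Lemma~\ref{gprop}(\ref{item:xyz}) and Lemma~\ref{lem:noniso}. Checking that \emph{every} branch closes into an odd cycle — in particular the parity bookkeeping guaranteeing that the assembled cycle has odd length — is where essentially all of the work lies.
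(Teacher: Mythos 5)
Your strategy is genuinely different from the paper's (which is a short arithmetic computation, see below), but it contains a gap at exactly the step you call the ``heart''. An edge such as $\{\consv{ba},\consv{bd}\}$ does \emph{not} violate bipartiteness of $G[\ST_\Gamma]$: the colouring by first coordinate is only one admissible $2$-colouring, and since your two $4$-cycles are vertex-disjoint (they form separate components of the skeleton), their colourings can be flipped independently of one another. A single cross edge between the cycles merely fixes the relative flip; to close an odd cycle you need two cross connections of \emph{conflicting} parity. But every application of Lemma~\ref{gprop}(\ref{item:prop1}) only yields a disjunction, and all of these disjunctions can be resolved coherently towards the same parity, so no contradiction is forced. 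Worse, the gap cannot be repaired by a longer case analysis with the tools you use: there is an edge set satisfying every property you invoke in which the conclusion of the lemma fails. Take $G[\ST_\Gamma]$ to be the complete bipartite graph between $S_1 = \{\consv{ba},\consv{ca},\consv{bd},\consv{cd}\}$ and $S_2 = \{\consv{ab},\consv{ac},\consv{db},\consv{dc}\}$, attach $\consv{bc}$ to all of $S_1$ and $\consv{cb}$ to all of $S_2$, add loops on $\consv{bc}$ and $\consv{cb}$ and the edge $\{\consv{bc},\consv{cb}\}$, and let $\consv{ad}$, $\consv{da}$ and all non-conservative vertices be isolated. One can check that this configuration satisfies reversal symmetry, all items of Lemma~\ref{gprop} and Lemma~\ref{lem:noniso}, yet it has a loop on $\consv{bc}$ while $G[\ST_\Gamma]$ is loop-free and bipartite. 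Hence the lemma is simply not a consequence of the graph-structural properties alone.

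The missing ingredient is the \emph{quantitative} half of Proposition~\ref{prop:corecases}, and that is precisely what the paper's proof uses. Since $\Sigma_\Gamma = \Sigma_0$, every unary function $u \in \langle \Gamma, {\cal C}_D \rangle_{fn}$ satisfies the equality $u(a)+u(d) = u(b)+u(c)$. Taking $h$ witnessing the loop, i.e.\ $h(b,b)+h(c,c) > h(b,c)+h(c,b)$, and applying this equality to the diagonal $x \mapsto h(x,x)$, to the rows $y \mapsto h(b,y)$, $y \mapsto h(c,y)$, and to the columns $x \mapsto h(x,b)$, $x \mapsto h(x,c)$, a direct computation shows that the four quantities $h(x,x)+h(y,y)-h(x,y)-h(y,x)$ for $\{x,y\} \in \{\{a,b\},\{a,c\},\{b,d\},\{c,d\}\}$ sum to exactly $2\bigl(h(b,b)+h(c,c)-h(b,c)-h(c,b)\bigr) > 0$, so at least one of them is positive, giving the desired loop. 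If you want to salvage your write-up, this arithmetic step (or some equivalent use of the unary equalities, e.g.\ injected via Lemma~\ref{lem:niceh}) has to replace the purely graph-theoretic parity argument.
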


\begin{proof}
  Assume, without loss of generality, that
  %there is a loop on $\consv{bc}$. Then, 
  there exists an $h \in \langle \Gamma, {\cal C}_D \rangle_{fn}$ such that $h(b,b)+h(c,c) > h(b,c)+h(c,b)$.
  By Proposition~\ref{prop:corecases}, $\Sigma_\Gamma \subseteq \Sigma_0$ implies the relations $h(b,b)+h(c,c) = h(a,a)+h(d,d)$, 
  $h(b,b)+h(c,b) = h(a,b)+h(d,b)$, $h(b,c)+h(c,c) = h(a,c)+h(d,c)$, $h(b,b)+h(b,c) = h(b,a)+h(b,d)$, and $h(c,b)+h(c,c) = h(c,a)+h(c,d)$.
  It follows that
  $(h(a,a)+h(b,b))+(h(a,a)+h(c,c))+(h(b,b)+h(d,d))+(h(c,c)+h(d,d)) >
  2(h(b,b)+h(c,c)+h(b,c)+h(c,b)) = (h(a,b)+h(b,a))+(h(a,c)+h(c,a))+(h(b,d)+h(d,b))+(h(c,d)+h(d,c))$, which implies that the inequality $h(x,x)+h(y,y) > h(x,y)+h(y,x)$ holds in at least one of the cases $\{x,y\} = \{a,b\}, \{a,c\}, \{b,d\}, \{c,d\}$.
\qed
\end{proof}

%% \begin{proposition}
%% \label{prop:fullprop}
%%   Assume that ${\cal U}_0 \subseteq {\cal U}_\Gamma$,
%%   $G[\ST_\Gamma]$ is bipartite, but $G'$ is not bipartite.
%%   Then, the set of binary functions in  $\langle \Gamma, {\cal C}_D \rangle_{fn}$ has a 1-defect chain multimorphism.
%%   %Then,
%%   %$\Gamma$ has a 1-defect chain multimorphism.
%% \end{proposition}

\begin{proof}[Proposition~\ref{prop:fullprop}]
  We follow a strategy similar to that of Proposition~\ref{prop:6}.
  However, instead of using $G'$ we now consider the graph
  $G[\ST_{ad} \cup \{(f,g),(g,f)\}]$, where
  $\ST_{ad} = \ST \setminus \{\consv{bc},\consv{cb}\}$.
  First, we show that $G[\ST_{ad}]$ is bipartite.
  If $\Sigma_\Gamma = \Sigma_{ad}$, then $G[\ST_{ad}] = G[\ST_\Gamma]$ is bipartite by assumption.
  Otherwise, $\Sigma_\Gamma = \Sigma_0$.
  Since $G[\ST_\Gamma] = G[\ST_0]$ is loop-free, we know from Lemma~\ref{lem:u0} that there is no loop on $\consv{bc}$, nor on $\consv{ad}$.
  Thus, by Lemma~\ref{gprop}(\ref{item:loop}), $G[\ST_{ad}]$ is bipartite.

  Assume for the moment that the following holds:
  \begin{equation} \label{eqn:property}
    \text{For $y \in D \setminus \{b,c\}$, there is an odd path in $G[\ST_{ad}]$ from $\consv{by}$ to $\consv{yc}$}.
  \end{equation}
  
  Let $\{I,J\}$ be a 2-colouring of the subgraph of $G[\ST_{ad}]$ induced by the non-isolated vertices.
  We claim that $R_I$ is a partial order on $D$.
  Let $(x,y), (y,z) \in R_I$ and observe that
  (\ref{eqn:property}) implies $\{x,z\} \neq \{b,c\}$.
  As in the proof of Proposition~\ref{prop:6}, we can argue that $\consv{xz}$ is connected by even paths to both $\consv{xy}$ and $\consv{yz}$.
  Since $\{x,z\} \neq \{b,c\}$, it follows that $(x,z) \in I$.
 Now take a transitive extension of $R_I$ which orders all pairs of elements except for $b$ and $c$, and
  let $I' \supseteq I$ be the corresponding subset of $\ST_{ad}$.
  We can assume (possibly by swapping $I$ and $J$) that $\consv{ad} \in I'$.

  Next we show that $I' \cup \{(f,g)\}$ is independent.
  This will ensure that $f(b,c) = a < d = g(b,c)$ holds
  in the constructed multimorphism.
  If $(f,g)$ is not connected to any vertex in $\ST_{ad}$, then
  $I' \cup \{(f,g)\}$ is trivially independent.
  Otherwise, by Lemma~\ref{lem:fgz}, (\ref{eqn:property}),
  and Lemma~\ref{lem:noniso}, we can show that from any $z \in \ST_{ad}$
  such that $\{(f,g),z\} \in E$, there are odd paths in $G[\ST_{ad}]$
  to each vertex in the set
  $S = \{\consv{ab}, \consv{ac}, \consv{bd}, \consv{cd}\}$.
  Since $G[\ST_{ad}]$ is bipartite, it follows that $\{\consv{ab},\consv{bd}\} \not\in E$, so $\{\consv{ab},\consv{da}\} \in E$ by Lemma~\ref{gprop}(\ref{item:trans2}).
  Hence, $I' = I = S \cup \{\consv{ad}\}$, and $z \not\in I'$.

  It remains to verify that $I' \cup \{(f,g)\}$ together with the singleton vertices in $G$ also form an independent set,
  \ie{} that there is no edge between a singleton and $(f,g)$.
  But by condition (\ref{eq:basic}) this is equivalent to saying that each row and column of every binary function in $\langle \Gamma, {\cal C}_D \rangle_{fn}$ is submodular on $L_{ad}$, which follows from Proposition~\ref{prop:corecases}.
  By Lemma~\ref{lem:indset}, every binary function in $\langle \Gamma, {\cal C}_D \rangle_{fn}$ has the 1-defect chain multimorphism corresponding to $I' \cup \{(f,g)\}$.
  \medskip

  Finally, we prove property (\ref{eqn:property}).
  If $\Sigma_\Gamma = \Sigma_{ad}$, then
  by Lemma~\ref{gprop}(\ref{item:loop}), and the fact that $G'$ contains an odd cycle, we have a loop on $\consv{bc}$.
  Since $\consv{by}$ and $\consv{yc}$ are loop-free for $y \in D \setminus \{b,c\}$, we have $\{\consv{by},\consv{yc}\} \in E$ by Lemma~\ref{gprop}(\ref{item:abac}).
  Otherwise, $\Sigma_\Gamma = \Sigma_0$.
  We argued above that $G'$ does not contain any loop in this case.
  Thus, by Lemma~\ref{gprop}(\ref{item:loop}), every odd cycle $C$ in $G'$ must intersect both $\{\consv{bc},\consv{cb}\}$ and $\{\consv{ad},\consv{da}\}$.
  Now, by repeatedly applying Lemma~\ref{gprop}(\ref{item:paths}) to $C$, we obtain a triangle on a subset of $\{\consv{bc},\consv{cb},\consv{ad},\consv{da}\}$.
  By Lemma~\ref{lem:noniso}, we can conclude that $G'$ in fact contains the complete graph on these four vertices.
  In particular, we have both $\{\consv{ad},\consv{bc}\} \in E$ and $\{\consv{da},\consv{bc}\} \in E$.
  By Lemma~\ref{gprop}(\ref{item:prop1}), we therefore have either 
  $\{\consv{ad},\consv{ba}\} \in E$ or $\{\consv{ad},\consv{ac}\} \in E$, 
  and furthermore, either
  $\{\consv{da},\consv{ba}\} \in E$ or $\{\consv{da},\consv{ac}\} \in E$.
  Since there is no loop on $\consv{ad}$, we conclude that either
  the path $(\consv{ba},\consv{ad},\consv{da},\consv{ac})$ or
  the path $(\consv{ba},\consv{da},\consv{ad},\consv{ac})$ is in $G[\ST_{ad}]$.
  In the same way, we find an odd path from $\consv{bd}$ to $\consv{dc}$.
\qed  
\end{proof}

\section{Discussion}\label{sec:disc}

We have presented a complete complexity classification for \wMCSP{}
over a four-element domain.
More importantly, we have compiled a powerful set of tools which
will allow further systematic study of this problem.
In particular, we have shown that it is possible to add
(crisp) constants to an arbitrary core, without changing
the complexity of the problem.
This result holds in the more general case of \VCSP{} as well
(although this requires a slightly different definition of
endomorphisms),
thus answering Question~4 in \v{Z}ivn{\'y}~\cite{ZivnyPhd}.
We have also demonstrated that the techniques used by
Krokhin and Larose~\cite{KrokhinL08} for lattices can be used
effectively in the context of arbitrary algebras as well, and in doing so,
we have given the first example of an instance where submodularity 
does not suffice as an origin of tractability for \wMCSP. 
We hope that this insight will
inspire an interest in the search for more tractable cases which are
not explained by submodularity.
Finally, we have shown that graph representations such as the one
defined by Kolmogorov and \v{Z}ivn{\'y}~\cite{Zivny:1} can be used
to great effect, even in non-conservative settings.

The curious readers may ask themselves several questions at this
point, and the following one is particularly important: do 1-defect chain
multimorphisms define
genuinely new tractable classes? There is still a possibility that 
the tractability can be explained in terms of submodularity. 
We answer this question negatively with the following example.

\begin{example}
Consider the language 
$\Gamma = \{u_{bd}, u_{cd}, u_{ab}, u_{ac}, h\}$
  where $h: D^2 \rightarrow \{0,1\}$ is defined such that $h(x,y)=1$ if and only
if $x=c$ or $y=b$.
Then, $\Gamma$ is a core on $\{a,b,c,d\}$
but it is not submodular on any lattice.
However, $\Gamma$ has the 1-defect chain multimorphisms
$(f_1,g_1)$ and $(f_2,g_2)$ from Example~\ref{ex:1defect}.
\end{example}

A related question is why bisubmodularity does not appear in
the classification of {\sc Min CSP} over domains of size
three~\cite{Jonsson:etal:sicomp06}. 
The reason is that for any cost function $h : \{0,1,2\}^k \rightarrow \{0,1\}$ which
is bisubmodular, the tuple $(0, 0, \dots, 0)$ minimises $h$.
It follows that any $\{0,1\}$ constraint language over three elements 
which is bisubmodular is not a core.

There are several ways of extending this work, and one obvious way
is to study \VCSP\ instead of \wMCSP{}.
It is known that the 
\emph{fractional polymorphisms} of the constraint language, 
introduced by Cohen \etal\cite{Cohen:Cooper:Jeavons:CP06}, 
characterise the complexity of this problem
(see also~\cite{CohenCJZ10}).
%% Apart from pushing our results for \wMCSP{} forward by increasing the size
%% of the domain,
%% another possible extension to the work presented in this paper is to
%% study finite-valued \VCSP{}. 
%% For this problem it is known that the 
%% \emph{fractional polymorphisms} of the constraint language, 
%% as introduced Cohen \etal\cite{Cohen:Cooper:Jeavons:CP06}, 
%% characterise the complexity of this problem. 
Multimorphisms are a special case of such fractional polymorphisms.
%Submodularity, as presented in Section 2.1,
%is one example of a fractional polymorphism. 
As in the case of \wMCSP{}, it is currently not known
if submodularity over every finite lattice implies tractability for
\VCSP.
Distributive lattices imply tractability, and certain
constructions on lattices preserve tractability (homomorphic images
and Mal'tsev products)~\cite{KrokhinL08}. 
Furthermore, the five
element modular non-distributive lattice (also known as the diamond)
implies tractability for \emph{unweighted} \VCSP~\cite{Kuivinen09}. 
Finally, it is known that submodularity over finite modular lattices 
implies containment in {\bf NP} $\cap$ {\bf coNP}~\cite{Kuivinen09}.
It is thus clear that in order to approach further classification of
either \wMCSP{} or \VCSP{}, it will be necessary to study the
complexity of minimising submodular cost functions over new finite lattices.

As a last note, we mention that it seems to be possible to adapt
Proposition~\ref{prop:6} to the classification in~\cite{Zivny:1} of \VCSP{} 
for conservative finite-valued languages.
This would yield a simpler description of those tractable cases.

\bibliographystyle{abbrv}
%\bibliography{ref}

%\appendix

%\section{Proofs}

\end{document}